\newtheorem{theorem}{Theorem}
\newtheorem{lemma}[theorem]{Lemma}
\newtheorem{proposition}[theorem]{Proposition} 
\newtheorem{definition}[theorem]{Definition}
\newtheorem{fact}[theorem]{Fact}
\newcommand{\brb}[1]{\bigl(#1\bigr)}
\newif\ifPrint 
\newenvironment{customthm}[1]
  {\innercustomthm}
  {\endinnercustomthm}
\newenvironment{customprop}[1]
  {\innercustomprop}
  {\endinnercustomprop}
\newenvironment{customlemma}[1]
  {\innercustomlemma}
  {\endinnercustomlemma}
\definecolor{burntorange}{rgb}{0.85, 0.35, 0.1}
\definecolor{charcoal}{rgb}{0.21, 0.27, 0.31}
\definecolor{coolblack}{rgb}{0.0, 0.28, 0.49}
\definecolor{burntgreen}{rgb}{0.05, 0.45, 0.27}
\definecolor{burntblue}{rgb}{0.05, 0.27, 0.8}
\newcommand{\offerNext}{\textsc{offerNext}}
\newcommand{\bestAmong}{\textsc{bestAmong}}
\newcommand{\weakestMatch}{\textsc{weakestMatch}}
\newcommand{\GSalgShort}{\textsc{GS}} 
\newcommand{\countMatches}{\textsc{countMatches}}
\title{Finding Stable Matchings in PhD Markets with Consistent Preferences and Cooperative Partners}
\author{
   \textbf{Maximilian Mordig}\\
MPI for Intelligent Systems and ETH Zürich
\And
\textbf{Riccardo Della Vecchia}\\
  Artificial Intelligence Lab, Institute for Data Science \& Analytics\\
    Bocconi University, Milano, Italy
  \And
 \textbf{Nicol\`o Cesa-Bianchi}\\
 Dipartimento di Informatica \& DSRC, \\
 Universit\`a degli Studi di Milano, Milano, Italy
 \And 
 \textbf{Bernhard Sch\"olkopf}\\
MPI for Intelligent Systems and ETH Zürich
 }
\begin{document}
\maketitle

\begin{abstract}
We introduce a new algorithm for finding stable matchings in multi-sided matching markets.
Our setting is motivated by a PhD market of students, advisors, and co-advisors, and can be generalized to supply chain networks viewed as $n$-sided markets. 
In the three-sided PhD market, students primarily care about advisors and then about co-advisors (consistent preferences), while advisors and co-advisors have preferences over students only (hence they are cooperative). A student must be matched to one advisor and one co-advisor, or not at all. 
In contrast to previous work, advisor-student and student-co-advisor pairs may not be mutually acceptable (e.g., a student may not want to work with an advisor or co-advisor and vice versa).
We show that three-sided stable matchings always exist, and present an algorithm that, in time quadratic in the market size (up to log factors), finds a three-sided stable matching using any two-sided stable matching algorithm as matching engine.
We illustrate the challenges that arise when not all advisor-co-advisor pairs are compatible.
We then generalize our algorithm to $n$-sided markets with quotas and show how they can model supply chain networks.
Finally, we show how our algorithm outperforms the baseline given by \citep{danilov2003existence} in terms of both producing a stable matching and a larger number of matches on a synthetic dataset.
\end{abstract}


\section{Introduction}
Matching problems, a fundamental topic in economics and game theory, naturally arise in both online labor markets and markets rooted in the physical world. Recently, a number of works have started to use machine learning tools to analyze two-sided matching markets in which agents have imperfect knowledge about their own preferences \citep{liu2020competing,cen2021regret}.
In these studies, the notion of stable matching plays a central role, as it provides a touchstone for measuring the performance of a learning agent. Hence, understanding stability, and devising efficient algorithms to find stable matchings, is a crucial step in the development of learning applications to matching problems.

Two-sided matching markets, however, are not sufficient to model many interesting applications. Motivated by a real-world three-sided PhD market of students, advisors, and co-advisors, we introduce the PhD algorithm, a new matching algorithm that runs in time quadratic in the market size (up to log factors) and finds a family of three-sided stable matchings using any two-sided stable matching algorithm as matching engine. Our algorithm works in settings where students have consistent preferences, and advisors and co-advisors only care about the student. Unlike previous setups \citep{danilov2003existence, huang2007two}, we require complete matches rather than partial ones, and students may not find all advisors and co-advisors acceptable and vice versa. 
We also show that our algorithm can find stable matchings in the presence of quotas and in more general 
$n$-sided markets that can model supply chain networks. 
In our analysis, we discuss the challenges related to incompatible advisor pairs, and present a counterexample for the algorithm presented in \citep{zhong2019Cooper}. Finally, we evaluate our algorithm on a synthetic dataset and improve upon the baseline by \cite{danilov2003existence, huang2007two}.

Two-sided stable matching is a classical combinatorial problem consisting of $N$ men and $N$ women where each person has ordinal preferences over all the persons of the opposite sex and they have to be matched together.%
\footnote{For clarity, we keep the terminology used in the literature. Please consider \textit{men} and \textit{women} as place holders for any two parties that need to be matched.}
The goal is to find a stable matching such that any man and woman that are not matched together do not both prefer each other to their actual partners.
In their landmark paper, Gale and Shapley showed that a stable matching always exists, and give an algorithm, the Gale--Shapley (GS) algorithm, to construct it \citep{gale1962college}.
The problem can be extended to the case where the number of men and women are not identical, and a person may prefer to stay single rather than match with an unacceptable partner (in which case this person is said to have incomplete preferences), see \citep{itoga1978upper, knuth1997stable, wilson1972analysis}. A (valid) matching must match each person to a person from the opposite sex \emph{or} himself.
The GS matching is man-optimal and woman-pessimal in the lattice of stable matchings \citep{gale1962college, knuth1997stable}. 
The problem was extended to many-to-one (known as hospital-residents or college admissions problem) and many-to-many matchings \citep{gale1962college, roth1991natural, roth1992two}. Other works explored more general concepts like group stability, and the connection to the game-theoretic core \citep{blair1988lattice, echenique2004theory}.

A natural extension are three-sided markets such as the family/man--woman--child market, where each man must be matched to exactly one woman and one child (or stay single), and likewise for women and children \citep{knuth1997stable}.
When each side has preferences over couples from the other two sides, a stable matching does not always exist \citep{alkan1988nonexistence}.
In fact, deciding whether a stable matching exists is NP-complete \citep{ng1991three}. 
Preferences over couples are said to be {\em consistent} if they are a product order (lexicographic order). That is, a man primarily cares about the woman and secondarily about the child, analogously for women and children.
Again, deciding whether stable matchings exist is NP-complete when preferences are consistent and complete \citep{huang2007two}.
In the special case when men primarily care about the woman and then about the child, women primarily care about the man and then about the child, and everyone finds everyone else acceptable (completeness), then \citep{danilov2003existence, huang2007two} show that stable three-sided matching exist.
First, compute a stable matching between men and women (e.g., via the GS algorithm), and then match each couple of man and woman to a child, where the couple has the man's preferences over children.
The assumption of complete preferences is very strong, e.g., a woman may not want to match to every other man or cannot have preferences over all men. It might then happen that a man is matched to a child, but to no woman. We explicitly want to avoid such partial matches, e.g., a man with a child should not be matched to the child if he cannot find any woman. The above procedure cannot be directly adapted.
Since the motivation of this project came from a matching procedure for a large pan-continental PhD program, we adopt the terminology of a PhD market with advisors, students and co-advisors. In this setting, we use consistent and cooperative preferences.

A closely related problem to the one presented here is the hospital-residents problem with couples (HPC), where each couple of medical students is a single unit, and must be matched to exactly two hospitals or stay unmatched \citep{roth1984evolution, manlove2017almost}.
Deciding whether stable matchings exist in this case is again NP-complete \citep{ronn1990np}, therefore research has focused on finding good heuristics to find (almost) stable matchings \citep{biro2011stable, biro2013matching}.
Our variant of the three-sided problem $(A, S, C)$ can be transformed into a two-sided problem $(A \cup C, S)$, with the property that pairs outside of $A \times C$ are not acceptable to any $s \in S$, and each student must be matched to exactly two or zero elements of $A \cup C$.
Hence, it is a special case of the two-sided HPC matching problem, where each student can match to zero or two elements of $(A \cup C) \times (A \cup C)$.
Whereas the general HPC problem is NP-complete, and stable matchings need not exist even when students have consistent preferences \citep{mcdermid2010keeping}, we show that our variant always admits stable matchings.
Finally, note that our problem is not a many-to-many matching problem with substitutable preferences as in the UK medical market \citep{roth1991natural} for which \citep{roth1991natural, echenique2004theory} provide algorithms: students who are not matched to an advisor cannot be matched to a co-advisor, so their preferences cannot be substitutable (unless these preferences are empty).

\section{Two-sided matching markets}\label{sec:twosidedMarketsShort}
In this introductory section, we summarize the relevant literature on two-sided markets, see \citep{gale1962college, roth1984evolution, sotomayor1990two} for a more in-depth treatment.
We present the setting with strict preferences and without quotas. 

A \emph{two-sided many-to-many matching market}, denoted by $(M, W, P)$, consists of a set of ``men'' $M$, a set of ``women'' $W$, and preferences $P$, where $P(m)$ is a total order on $W \cup \{ m \}$ for each man $m \in M$ denoted by $<_m, \leq_m$, and $P(w)$ is a total order on $M \cup \{ w \}$ for each woman $w \in W$ denoted by $<_w, \leq_w$.
Person $p$ stays single if he is matched to himself rather than someone on the other side.
If $p_1 <_p p$, it means that person $p$ prefers to be self-matched rather than to be matched to $p_1$.
If $p_1 >_p p$, person $p_1$ is acceptable to $p$.
A person has complete preferences if he prefers everyone else to himself.
A matching on this market is a set $\mu \subset M \times W$ which matches a person at most once. Let $\mu(p)$ denote the match partner of $p$. If $p$ is not matched, we define $\mu(p) = p$ self-matched.
A matching is \emph{individually rational} if each person prefers his match partner to being matched to himself, i.e., $\mu(p) \geq_p p \; \forall p \in M \cup W$.
A matching is blocked by the pair $(m, w) \in M \times W$ if $w >_m \mu(m)$ and $m >_w \mu(w)$.
In words, if $m$ and $w$ prefer each other to their current match partners respectively, they have a strong incentive to match instead and disrespect the matching.
A matching is \emph{stable} if it is individually rational and not blocked by any pair. 
Stability is of utmost importance in practical applications since the persons will otherwise find their own arrangements \citep{roth1984evolution}.
A matching $\mu$ is \emph{man-optimal} if $\mu(m) \geq_m \lambda(m) \; \forall m \in M$ for any stable matching $\lambda$. Woman-optimality is analogous and pessimality inverts the inequalities.

It is not clear a priori whether a stable matching always exists. The Gale--Shapley (GS) algorithm constructively shows that a stable matching always exists.
The GS algorithm works by letting men propose to women and women conditionally accept unless they get an offer from a better man later on. It starts with all men unmatched and all women matched to themselves.
As long as a man is unmatched, consider any unmatched man. He proposes to his next most preferred woman he has not proposed to already.
If a man has proposed to all of his acceptable women, he proposes to himself instead and accepts (so he is self-matched/single).
If the woman prefers the man to her current partner, she disengages from her old partner (a man or herself), leaving her old partner unmatched again. She engages/matches with this new man.
The algorithm stops once each man is matched (to a woman or himself).
The matched men and women are ``married''. The matching is the same independently of how a man is picked among the free men.
The GS algorithm produces a stable matching in time $\mathcal{O}(|M| |W|)$.
Moreover, the matching is man-optimal and woman-pessimal among all stable matchings, also known as man-man/woman-woman coincidence of interest and man-woman conflict of interest \citep{echenique2004theory, roth1985conflict}.
By swapping the roles of men and women, it returns a man-pessimal and woman-optimal matching.
The theory and GS algorithm can be extended to non-strict preferences, when persons are indifferent between persons of the other side (Appendix~\ref{app:missingProofs}). Moreover, each man and woman can individually specify a quota that limits the maximum number of match partners they can have, e.g., in polygamous relationships (Appendix~\ref{app:quotas}). It is possible to ensure that the same man and woman do not get matched more than once by having the man only propose to women he did not propose to already whenever he has free spots.

We will use the following result to show an interesting invariance property of the matching algorithm that we propose in three-sided markets and that carries over to $n$-sided markets.
Using optimality, one can show that the set of matched persons is the same among all stable matchings, see Appendix~\ref{app:missingProofs}.
\begin{proposition}\label{prop:twoSidedSameMatched}
Let ($M$, $W$, $P$) a two-sided market with strict preferences. The set of matched persons is the same in any stable matching.
\end{proposition}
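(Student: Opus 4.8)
The plan is to exploit the optimality/pessimality structure of the Gale--Shapley lattice, which the excerpt has already established: the GS algorithm produces a man-optimal, woman-pessimal stable matching, and (by swapping roles) a woman-optimal, man-pessimal one. The claim is that the set of \emph{matched} persons (those not self-matched) is invariant across all stable matchings. Rather than argue about arbitrary stable matchings directly, I would pivot to comparing any given stable matching against the two extreme ones, since these extremes bound all others in the lattice ordering.

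First I would fix an arbitrary stable matching $\lambda$ and let $\mu_M$ denote the man-optimal (hence woman-pessimal) stable matching. By man-optimality, $\mu_M(m) \geq_m \lambda(m)$ for every $m \in M$. The key observation is that being matched is the ``best possible'' state relative to staying single: if $\lambda(m) >_m m$, i.e.\ $m$ is matched in $\lambda$ to an acceptable woman, then since $\mu_M(m) \geq_m \lambda(m) >_m m$ by strictness, $m$ must also be matched in $\mu_M$. Conversely, I would use woman-pessimality of $\mu_M$ together with the same argument applied to the woman-optimal matching to pin down the other inclusion. The cleanest route is to show that the set of matched men is the same in $\mu_M$ and in $\lambda$, and then invoke symmetry for women.

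The main technical step is the reverse inclusion: if $m$ is matched in $\mu_M$, I must show $m$ is matched in every stable matching $\lambda$. This is where the conflict-of-interest structure does the work. Consider the woman-optimal stable matching $\mu_W$, which is man-pessimal; so $\lambda(m) \geq_m \mu_W(m)$ for all $m$. Suppose $m$ is matched in $\mu_M$ but single in some $\lambda$; I would derive a contradiction by transporting the ``number matched'' count between the optimal and pessimal matchings on the women's side. Concretely, the well-known Rural Hospitals-type argument is that a man single in one stable matching is single in all of them: if $m$ is single in $\lambda$, then $m = \lambda(m) \geq_m \mu_W(m) = m$ forces $\mu_W(m) = m$, i.e.\ $m$ is single in the woman-optimal matching, and by man-optimality versus woman-optimality one shows $m$ is single in $\mu_M$ too. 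I would formalize this by noting each person's partner is monotone along the lattice, so the self-matched status is determined at the two extremes and hence constant.

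The hard part will be handling the \emph{women's} side consistently with the men's side, since matched men and matched women need not coincide in count unless one carefully accounts for self-matches on both sides. The subtlety is that an inequality like $\mu_M(w) \leq_w \lambda(w)$ (woman-pessimality) runs in the \emph{opposite} direction to the men's inequality, so one cannot naively repeat the men's argument for women; instead one uses that every matched pair contributes one matched man and one matched woman, and a counting/bijection argument shows the number of matched men equals the number of matched women in any matching. Combining ``same matched men'' with this pairing identity yields ``same matched women,'' completing the proof. I would be careful to invoke strictness of preferences (assumed in the statement) at each step where $\geq_p$ is upgraded to a strict comparison against self, since this is exactly what rules out the boundary case of indifference between a partner and being single.
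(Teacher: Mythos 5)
Your proposal is correct and rests on essentially the same two pillars as the paper's proof: the optimality/pessimality properties of the GS matching (with strictness upgrading comparisons against self), and the counting identity that any matching matches equally many men and women. The only real difference is packaging: the paper compares an arbitrary stable $\lambda$ directly with the single man-optimal, woman-pessimal matching $\mu$ --- man-optimality gives that the matched men of $\lambda$ are contained in those of $\mu$, woman-pessimality gives that every woman matched in $\mu$ is matched in $\lambda$, and the cardinality identity squeezes both to set equality --- whereas you first establish agreement between the two extremes $\mu_M$ and $\mu_W$ and then sandwich $\lambda$ by lattice monotonicity, a slightly longer but valid route that makes the rural-hospitals statement (single in one stable matching implies single in all) explicit. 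Two cautions so the write-up does not mislead. First, your sentence that monotonicity along the lattice makes self-matched status ``determined at the two extremes and hence constant'' is false on its own: monotonicity permits a man single in $\mu_W$ yet matched in $\mu_M$, and it is exactly the counting argument (which you do state elsewhere) that rules this out; the sandwich only works after counting has shown the extremes agree. Second, in your final step, ``same matched men'' plus equal counts yields only the same \emph{number} of matched women; to get the same \emph{set} you still need the woman-pessimality inclusion (every woman matched in $\mu_M$ is matched in $\lambda$), which is what the paper's closing ``similar reasoning'' for the women's side amounts to. Both missing pieces appear in your sketch, so the gaps are cosmetic rather than substantive.
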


\section{Three-sided matching markets}\label{sec:threeSidedMarketsShort}
We first introduce three-sided markets with special preferences and then show how to find stable matchings.
We adopt the terminology of PhD markets where each student must be matched to exactly one advisor in $A$ and one co-advisor in $C$, or stay single.

A \emph{three-sided matching market}, denoted by $(A, S, C, P)$, consists of advisors $A$, students $S$ and co-advisors $C$ with preferences $P$. 
Since stable matchings may not exist for arbitrary preferences \citep{ng1991three, lam2019existence, huang2007two, ronn1990np, mcdermid2010keeping}, we consider the special case where advisors/co-advisors only care about students and they are indifferent between co-advisors/advisors, i.e., they are \emph{cooperative} partners. 
This is reasonable if the student works with the advisor and co-advisor on separate projects, or if an advisor is happy to work with any co-advisor that the student would work with.
The same professor may independently advise and co-advise students, so the number of students he advises has no influence on the number of students he can co-advise.
In addition, students have \emph{consistent preferences} in the sense that they primarily care about the advisor and less about the co-advisor. 
More precisely, student $s$ has separate preferences over advisors and co-advisors, $\leq_s^A$ and $\leq_s^C$, and his compound preferences between advisor pairs $(a, c), (\tilde{a}, \tilde{c}) \in (A \times C) \cup \{ (s, s) \}$ are $(a, c) >_s (\tilde{a}, \tilde{c}) \iff a >_s^A \tilde{a} \lor (a = \tilde{a} \land c >_s^C \tilde{c})$, where $(s, s)$ means that student $s$ stays single.
We drop the superscripts $A$ and $C$ since it will be clear from the context.
In other words, we have preferences over the two two-sided markets $(A, S)$ and $(S, C)$: $A \leftrightarrow S \leftrightarrow C$. Importantly, the two-sided preferences $A \leftrightarrow S \leftrightarrow C$ need not be complete, e.g., a student may not want to work with all advisors $A$. For this reason, we design an extension of the setting in \citep{danilov2003existence}. 

A matching is a set $\mu \subset A \times S \times C$ which matches the same person at most once. Let $\mu(p)$ denote the couple formed by the match partners of $p$. If $p$ is not matched, set $\mu(p) = (p, p)$.
Given a matching $\mu$ on a two-sided market $(M, W, P)$ and subsets $M_1 \subset M, W_1 \subset W$,  we denote with $P|_{(M_1, W_1)}$ the restricted preferences where persons not in $M_1 \cup W_1$ are removed from the preference lists. The matching $\mu|_{(M_1, W_1)}$ is the matching restricted to the market $(M_1, W_1, P|_{(M_1, W_1)})$, where persons who are matched to persons not in $M_1 \cup W_1$ are matched to themselves instead. We define analogous restrictions for three-sided markets.
The three-sided matching $\mu$ is individually rational if the restricted two-sided matchings $\mu|_{(A, S)}$ and $\mu|_{(S, C)}$ are individually rational.
From now on, we consider only individually rational matchings. 
A triple $(a, s, c) \in A \times S \times C$ is mutually acceptable if $s \geq_a a, a \geq_s s, c \geq_s s, s \geq_c c$.
A matching is stable if it is not blocked by any mutually acceptable triple.
The matching $\mu$ is blocked by the mutually acceptable $(a, s, c)$ if i) student $s$ is unmatched in $\mu$ and $(a, s)$ blocks $\mu|_{(A, S)}$ and $(s, c)$ blocks $\mu|_{(S, C)}$, or ii) student $s$ is matched in $\mu$ and $(a, s)$ blocks $\mu|_{(A, S)}$ or $(s, c)$ blocks $\mu|_{(S, C)}$.
In other words, either $s$ is unmatched and all of $(a, s, c)$ can improve their matches on the two-sided markets, or $s$ is matched and either $(a, s)$ can improve or $(s, c)$ can improve.
Again, stability is a crucial property because persons will not stick to matches they are assigned in unstable matchings.
This coincides with the definitions in \citep{zhang2019Hybrid, huang2007two, manlove2017almost}, see Appendix~\ref{app:missingProofs}. In fact, these definitions show that advisors and co-advisors can equivalently be treated symmetrically.
When $(a, s, c)$ is a blocking triple, $(a, s)$ either blocks $\mu|_{(A, S)}$, or $s$ is matched and $(s, c)$ blocks $\mu|_{(S_m, C)}$. 

\subsection{PhD algorithm}\label{sec:phdAlgorithm}
In this section we prove that stable matchings exist for three-sided matching markets. 
The idea is to iteratively apply the GS algorithm to the two two-sided markets and to greedily remove students when they cannot find an advisor and co-advisor simultaneously.
Initially, all students $S$ participate. Students and advisors are matched in a GS algorithm. The matched students $S_m \subset S$ participate in a GS algorithm together with co-advisors. Students $S_u$ who participate in the second market with co-advisors and don't find a co-advisor (partial advisor-student match) are removed. The remaining students $S \setminus S_u$ can participate in the next round, including students who did not find an advisor. This is repeated until no students are removed. The removed students as well as the students who could not find an advisor at the last iteration are matched to themselves/stay single.
Intuitively, students who are removed are ``bad'' and would also not find a co-advisor in later iterations.
We call this algorithm the PhD algorithm and it is depicted in Algorithm~\ref{alg:phdAlgorithm}.
\begin{algorithm}[ht]
\DontPrintSemicolon
\SetAlgoLined
\KwIn{PhD market $(A, S, C, P)$}
\KwOut{stable matching $\mu$}
strictify the preferences $P_{AS}$ on $(A, S)$ and $P_{SC}$ on $(S, C)$ \;
\Do{$S_u\neq\emptyset$ }{
$\mu_{AS} \gets \GSalgShort(M = S, W = A, P_{AS})$\;
let $S_m$ the matched students in $\mu_{AS}$\;
$\mu_{SC} \gets \GSalgShort(M = S_m, W = C, P_{SC})$\;
let $S_u \subset S_m$ the unmatched students in $\mu_{SC}$\;
$S \gets S \setminus S_u$
}
students in $S$ are matched to respective advisor in $\mu_{AS}$ and co-advisor in $\mu_{SC}$, the others remain single\;
\Return{$\mu$}
\caption{PhD Market - PhD algorithm}
\label{alg:phdAlgorithm}
\end{algorithm}
The function $\GSalgShort(M, W, P)$ stands for the GS algorithm where men $M$ propose to women $W$ under preferences $P$. 
The algorithm terminates because the GS algorithm itself terminates and the number of students $|S|$ decreases at each iteration, unless $S_u$ is empty and the while-loop ends.
The algorithm returns a valid matching since no student is partially matched. Indeed, student $s$ is matched to an advisor if and only if $s \in S_m$ at the final iteration. Since $S_u$ is empty at the final iteration, $s \in S_m$ is also matched to a co-advisor.
Preferences are initially strictified by breaking (all) ties and the matching is also stable under the original non-strictified preferences (Appendix~\ref{app:missingProofs}). If preferences are not strictified initially, the matching may not be stable or the number of matches between stable matchings may differ.
So we assume that preferences are strict from now on.

To prove stability, we establish a few properties first. 
One iteration refers to one iteration of the PhD algorithm after which unmatched students are removed. 
Superscripts refer to the iteration number: $\mu_{AS}^{(\tau)}$ is the matching at iteration $\tau = 1, \dots, T$ and $S^{(\tau)}$ are the remaining students at the start of iteration $\tau$. $T$ is the total number of iterations.
We first assume that the proposing side is fixed over iterations in each of the two-sided markets.
To understand how matches evolve over iterations on the two-sided markets, we start by studying the matchings produced by the GS algorithm when people on one side are removed.
When men propose and new men are added to the market, the existing men cannot get better matches and the women cannot get worse matches since there is more competition between men and there are more offers for women.
When women propose and men are added to the market, the men cannot get better matches and the existing women cannot get worse matches since there is more competition between men, and women can make more offers.
The following result was already proven in \citep{gale1985ms}.
\begin{lemma}\label{lem:gsAddMen}
Consider a market $(M, W, P)$ and subsets $M_1 \subset M, W_1 \subset W$. Let $\mu = \GSalgShort(M, W, P)$, $\mu_1 = \GSalgShort(M_1, W, P|_{(M_1, W)})$ and $\mu_2 = \GSalgShort(M, W_1, P|_{(M, W_1)})$. 
\begin{itemize}
    \item Independently of whether men or women propose (but the same for $\mu_1$ and $\mu$), it holds that $\mu_1(m) \geq_m \mu(m) \; \forall m \in M_1, \mu_1(w) \leq_w \mu(w) \; \forall w \in W$.
    \item Independently of whether men or women propose (but the same for $\mu_2$ and $\mu$), it holds that $\mu_2(m) \leq_m \mu(m) \; \forall m \in M, \mu_2(w) \geq_w \mu(w) \; \forall w \in W_1$.
\end{itemize}
\end{lemma}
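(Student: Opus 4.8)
The plan is to reduce everything to two monotonicity principles for the deferred-acceptance dynamics of \GSalgShort, and to exploit the man--woman symmetry so that only the first bullet needs a direct argument. Observe that the second bullet is obtained from the first by interchanging the roles of men and women: removing the women $W \setminus W_1$ in bullet two is exactly removing the ``men'' of the mirror market, and the conclusions $\mu_2(m) \leq_m \mu(m)$, $\mu_2(w) \geq_w \mu(w)$ are the mirror images of $\mu_1(w) \leq_w \mu(w)$, $\mu_1(m) \geq_m \mu(m)$. The phrase ``independently of whether men or women propose'' is preserved under this swap, so it suffices to prove the first bullet for both choices of proposing side.

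Within the first bullet, passing from $\mu$ (market $M$) to $\mu_1$ (market $M_1 \subset M$) removes men, and I would split on who proposes, since this determines whether the modified side is the \emph{proposing} side or the \emph{receiving} side, and these require different arguments. I will use: (Principle~A) enlarging the proposing side weakly improves every receiver and weakly worsens every agent already on the proposing side; (Principle~B) enlarging the receiving side weakly improves every proposer and weakly worsens every agent already on the receiving side. When men propose, removing men shrinks the proposing side, so Principle~A applied to the enlargement $M_1 \subset M$ gives that the remaining men are weakly better and all women weakly worse, i.e. $\mu_1(m) \geq_m \mu(m)$ and $\mu_1(w) \leq_w \mu(w)$. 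When women propose, removing men shrinks the receiving side, so Principle~B yields the same two inequalities. Thus the lemma follows once both principles are established.

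For Principle~A I would use a phased run of \GSalgShort together with the order-independence of deferred acceptance (the outcome is the proposer-optimal stable matching regardless of the order in which proposals are processed). Compute $\mu_1$ by first running the algorithm with only the smaller proposing set; then let the additional proposers enter and continue. By order-independence this phased run terminates at $\mu$. During the second phase each receiver only ever trades up, since receivers hold the best offer seen so far, so receivers weakly improve from $\mu_1$ to $\mu$; and each proposer's tentative partner moves monotonically down his own list, so every original proposer weakly worsens from $\mu_1$ to $\mu$. Reading these two monotonicities as inequalities between $\mu_1$ and $\mu$ yields Principle~A.

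Principle~B is the part I expect to be the main obstacle, since enlarging the \emph{receiving} side can insert new, highly ranked partners into proposers' lists, so the clean ``extra proposals only help receivers'' monotonicity of Principle~A no longer applies directly. Here I would argue by a vacancy-chain continuation: start from $\mu$ on the larger market, delete the men in $M \setminus M_1$, and let each woman left partnerless resume proposing down her list (skipping deleted men), triggering the usual chain of displacements. Along this chain every remaining man only trades up and every woman only moves down her list, which already gives $\mu_1(m) \geq_m \mu(m)$ for $m \in M_1$ and $\mu_1(w) \leq_w \mu(w)$ for all $w \in W$. The delicate point, and the step I would spend the most care on, is showing that this continuation terminates precisely at the proposer-optimal stable matching $\mu_1$ of the smaller market rather than at some other stable matching; this again follows from order-independence, once one checks that the freed configuration is a legitimate intermediate state of a woman-proposing run on $(M_1, W, P|_{(M_1, W)})$.
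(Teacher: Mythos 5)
Your proposal is correct, and its skeleton coincides with the paper's: both reduce the four cases (two bullets $\times$ two proposing sides) to the same two base configurations --- modified side equals proposing side, and modified side equals receiving side --- and dispatch the remaining cases by man--woman symmetry (you symmetrize across the bullets, the paper across the proposing sides; the reductions are equivalent). Your Principle~A, the phased run in which only $M_1$ proposes first and the men $M \setminus M_1$ are released afterwards, with order-independence identifying the endpoints as $\mu_1$ and $\mu$, is exactly the paper's argument for the first bullet with men proposing. Where you genuinely diverge is the receiving-side case. The paper handles it by a coupling within a single forward run: $\GSalgShort(M, W_1, P|_{(M,W_1)})$ is simulated as $\GSalgShort(M, W, P)$ in which every proposal to a woman in $W \setminus W_1$ is automatically rejected; comparing rejection streams gives $\mu_2(m) \leq_m \mu(m)$ directly, and since each man consequently proposes at least as far down his list, $\mu_2(w) \geq_w \mu(w)$ for $w \in W_1$. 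You instead run the large market to completion, delete the agents, and resume deferred acceptance from $\mu$ --- a vacancy-chain continuation in the style of \citep{gale1985ms}, which is precisely the alternative proof the paper points to. The delicate point you flag does go through, and via a cleaner route than literal reachability of the state: excising from the large-market history all proposals made to deleted men leaves every surviving man's stream of received offers unchanged, so the excised history is a valid partial execution of woman-proposing deferred acceptance on $(M_1, W, P|_{(M_1,W)})$, and order-independence then forces your continuation to terminate at $\mu_1$ itself. Note that this identification is genuinely needed, not a refinement: if the continuation ended at some other stable matching $\lambda$ of the small market, optimality/pessimality comparisons between $\lambda$ and $\mu_1$ run in the wrong direction for both stated inequalities. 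Comparatively, the paper's auto-rejection coupling is the more self-contained argument (one coupled execution, no intermediate-state lemma), while your continuation argument buys reusable machinery: it is essentially the resumption-with-waiting-lists idea the paper itself later develops for the efficient implementation in Proposition~\ref{prop:gsPhDEfficientImplementation} and Appendix~\ref{app:phdEfficientImpl}, and it makes the monotone displacement chains explicit.
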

We apply this result to the $(A, S)$ and $(S, C)$ market. These results are independent of whether students or advisors or co-advisors propose. The proofs are deferred to Appendix~\ref{app:missingProofs}.
\begin{lemma}\label{lem:studentStaysMatchedAtNextIteration}
A student's match on the $(A, S)$ market can only improve over iterations (provided that he still participates).
More precisely, if student $s$ still participates in iteration $\tau$, $s \in S^{(\tau)}$, then $\mu_{AS}^{(\tau)}(s) \geq_s \mu_{AS}^{(\tau-1)}(s) \; \forall \tau \geq 2$.
This implies that $S_m^{(\tau-1)} \setminus S_u^{(\tau-1)} \subset S_m^{(\tau)} \; \forall \tau \geq 2$.
\end{lemma}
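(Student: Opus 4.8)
The plan is to prove the two claims in sequence, since the second is an immediate consequence of the first. The core of the argument is a monotonicity property that must be tracked across the removal step that separates iteration $\tau - 1$ from iteration $\tau$. The key observation is that the set of students participating in the $(A,S)$ market at iteration $\tau$, namely $S^{(\tau)}$, is obtained from $S^{(\tau-1)}$ by removing the unmatched students $S_u^{(\tau-1)}$. Crucially, these removed students are \emph{unmatched} on the $(A,S)$ market at iteration $\tau-1$ in the sense relevant here: they participated in the $(S,C)$ market (so they were in $S_m^{(\tau-1)}$, i.e., matched to an advisor) but failed to find a co-advisor. So I must be careful: the students removed are matched on $(A,S)$ but removed for failing on $(S,C)$.

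First I would set up the comparison. Fix $\tau \geq 2$ and a student $s \in S^{(\tau)}$, so $s$ survived the removal and participates in the $(A,S)$ market at both iterations $\tau-1$ and $\tau$. The two markets differ only in their student sets: at iteration $\tau-1$ the proposing men are $S^{(\tau-1)}$, and at iteration $\tau$ they are $S^{(\tau)} = S^{(\tau-1)} \setminus S_u^{(\tau-1)} \subset S^{(\tau-1)}$. Hence iteration $\tau$ runs the GS algorithm on a \emph{subset} of the men used at iteration $\tau-1$, with the women (advisors $A$) unchanged. This is exactly the situation covered by Lemma~\ref{lem:gsAddMen}: removing men from the proposing side can only improve the matches of the remaining men. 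Taking $M = S^{(\tau-1)}$, $W = A$, $M_1 = S^{(\tau)}$, the lemma gives $\mu_1(m) \geq_m \mu(m)$ for all $m \in M_1$, where $\mu = \mu_{AS}^{(\tau-1)}$ and $\mu_1 = \mu_{AS}^{(\tau)}$. Applied to our surviving student $s \in S^{(\tau)} = M_1$, this yields precisely $\mu_{AS}^{(\tau)}(s) \geq_s \mu_{AS}^{(\tau-1)}(s)$, which is the first claim. (The statement of Lemma~\ref{lem:gsAddMen} requires the proposing side to be the same across the two runs, which holds here by the standing assumption that the proposing side is fixed over iterations.)

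Next I would derive the set inclusion $S_m^{(\tau-1)} \setminus S_u^{(\tau-1)} \subset S_m^{(\tau)}$ from the match improvement just established. Take any $s \in S_m^{(\tau-1)} \setminus S_u^{(\tau-1)}$. Then $s$ was matched to an advisor at iteration $\tau-1$ and was \emph{not} removed, so $s \in S^{(\tau)}$ and the monotonicity above applies: $\mu_{AS}^{(\tau)}(s) \geq_s \mu_{AS}^{(\tau-1)}(s)$. Because $s \in S_m^{(\tau-1)}$, its match $\mu_{AS}^{(\tau-1)}(s)$ is an actual advisor, i.e., $\mu_{AS}^{(\tau-1)}(s) >_s s$ (strictly above the self-match, using strict preferences). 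By the monotonicity inequality $\mu_{AS}^{(\tau)}(s) \geq_s \mu_{AS}^{(\tau-1)}(s) >_s s$, so $s$ is matched to an advisor at iteration $\tau$ as well, i.e., $s \in S_m^{(\tau)}$. This establishes the inclusion.

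The main obstacle I anticipate is bookkeeping around which students are being compared and what ``still participates'' means: one must verify that $S^{(\tau)} \subset S^{(\tau-1)}$ as sets of proposing men (so that Lemma~\ref{lem:gsAddMen} applies in the correct direction), and that a surviving student really does appear in both runs. The removal rule deletes only $S_u^{(\tau-1)}$, the students who reached the $(S,C)$ market but found no co-advisor; students who never found an advisor are \emph{not} removed and carry over, which is what guarantees $S^{(\tau)} = S^{(\tau-1)} \setminus S_u^{(\tau-1)}$ and hence the clean subset relation. Once this set identity is pinned down, the rest is a direct invocation of the already-proven Lemma~\ref{lem:gsAddMen} together with individual rationality, so no further technical work is needed.
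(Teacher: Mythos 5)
Your proof is correct and takes essentially the same route as the paper's: both invoke Lemma~\ref{lem:gsAddMen} with $M_1 = S^{(\tau)} \subset S^{(\tau-1)} = M$ and $W = A$ to obtain $\mu_{AS}^{(\tau)}(s) \geq_s \mu_{AS}^{(\tau-1)}(s)$, and then derive the inclusion from $s \in S_m^{(\tau-1)} \setminus S_u^{(\tau-1)}$ giving $\mu_{AS}^{(\tau)}(s) \geq_s \mu_{AS}^{(\tau-1)}(s) >_s s$, hence $s \in S_m^{(\tau)}$. Your additional bookkeeping (that only $S_u^{(\tau-1)}$ is removed, so $S^{(\tau)} = S^{(\tau-1)} \setminus S_u^{(\tau-1)}$, and that the proposing side is fixed across iterations) merely makes explicit what the paper leaves implicit.
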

Intuitively, the remaining students face less competition from other students over iterations.
As we see now for the student-co-advisor market, although students are removed, the co-advisors' matches do not decrease since only ``bad'' students (that no co-advisor wants) are removed.
\begin{lemma}\label{lem:coadvisorCannotDecrease}
A co-advisor's match on the $(S, C)$ market can only improve over iterations: $\mu_{SC}^{(\tau)}(c) \geq_s \mu_{SC}^{(\tau-1)}(c) \; \forall \tau \geq 2, c \in C$.
\end{lemma}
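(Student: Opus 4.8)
The plan is to compare the two student--co-advisor markets by factoring the change of the student side into a \emph{removal} of the students discarded after iteration $\tau-1$, followed by an \emph{addition} of the remaining ones, and then to apply Lemma~\ref{lem:gsAddMen} to each half. Write $S' := S_m^{(\tau-1)} \setminus S_u^{(\tau-1)}$ for the students that were actually matched to a co-advisor in $\mu_{SC}^{(\tau-1)}$. Lemma~\ref{lem:studentStaysMatchedAtNextIteration} gives the inclusion $S' \subset S_m^{(\tau)}$, so both $\mu_{SC}^{(\tau-1)}$ (a matching on $S_m^{(\tau-1)} = S' \cup S_u^{(\tau-1)}$) and $\mu_{SC}^{(\tau)}$ (a matching on $S_m^{(\tau)} \supseteq S'$) can be routed through the auxiliary matching $\mu' := \GSalgShort(S', C)$ on the common sub-market $(S', C)$.

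The addition half is immediate. Since $S' \subset S_m^{(\tau)}$, passing from $(S', C)$ to $(S_m^{(\tau)}, C)$ only adds students, so the first bullet of Lemma~\ref{lem:gsAddMen}, applied with $M_1 = S' \subset M = S_m^{(\tau)}$ and the co-advisors $C$ in the role of the women, yields $\mu'(c) \leq_c \mu_{SC}^{(\tau)}(c)$ for every $c \in C$: more competing students can only help the co-advisors. It therefore suffices to prove that deleting the discarded students leaves each co-advisor's partner unchanged, i.e. $\mu'(c) = \mu_{SC}^{(\tau-1)}(c)$ for all $c \in C$. Chaining the two relations then gives $\mu_{SC}^{(\tau)}(c) \geq_c \mu'(c) = \mu_{SC}^{(\tau-1)}(c)$, as claimed.

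The removal half is the crux, and it is exactly where the hypothesis that $S_u^{(\tau-1)}$ is \emph{unmatched} in $\mu_{SC}^{(\tau-1)}$ must be used in full. The same bullet of Lemma~\ref{lem:gsAddMen}, now with $M_1 = S' \subset M = S_m^{(\tau-1)}$, only gives $\mu'(c) \leq_c \mu_{SC}^{(\tau-1)}(c)$, which is the wrong direction; the content is that this weak inequality is actually an equality. I would establish this transparency property by running the deferred-acceptance procedure with the discarded students processed last: each $s \in S_u^{(\tau-1)}$ ends up single, hence holds no co-advisor and finds every co-advisor who reaches him unacceptable, so the proposals it participates in are all ultimately rejected and its presence never displaces any co-advisor from the partner she has already secured among $S'$. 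Supporting this, one checks that $\mu_{SC}^{(\tau-1)}$ restricted to $(S', C)$ is individually rational and unblocked there (no co-advisor was matched to a discarded student, so her partner is literally unchanged, and any blocking pair in the smaller market would block the larger one), which pins down $\mu'$ together with optimality. The delicate point, and the main obstacle, is precisely this equality: deleting an unmatched agent is \emph{not} inert in a general two-sided market, since it can permute the partners of the side that keeps proposing even while the matched set stays fixed (so Proposition~\ref{prop:twoSidedSameMatched} alone does not suffice). The argument consequently relies on the co-advisors being the proposing/optimal side here, and this is the step I would treat most carefully.
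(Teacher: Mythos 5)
Your decomposition is exactly the paper's: the paper also routes through the auxiliary matching $\lambda := \GSalgShort(S', C)$ with $S' = S_m^{(\tau-1)} \setminus S_u^{(\tau-1)}$, handles the addition half via Lemma~\ref{lem:gsAddMen}, and asserts the removal half as a one-line equality ($\lambda$ equals $\mu_{SC}^{(\tau-1)}$ restricted, ``since the students $S_u^{(\tau-1)}$ are not matched to any co-advisors''). You are right that this equality is the crux, but the justification you offer for it fails in the orientation that Algorithm~\ref{alg:phdAlgorithm} actually uses ($M = S_m$, i.e.\ students propose). ``All of the discarded student's proposals are ultimately rejected'' does not imply ``his presence never displaces any co-advisor's secured partner'': a doomed student's proposal can be \emph{temporarily held}, evicting another student and triggering a rejection cascade that permanently changes the outcome. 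Concretely, take students $s_0, s_1, s_2$ and co-advisors $c_1, c_2$ with $s_0$ listing only $c_2$; $s_1\colon c_2 >_{s_1} c_1$; $s_2\colon c_1 >_{s_2} c_2$; $c_1\colon s_1 >_{c_1} s_2$ ($s_0$ unacceptable); $c_2\colon s_2 >_{c_2} s_0 >_{c_2} s_1$. Student-proposing GS on $\{s_0, s_1, s_2\}$ yields $\{(s_1,c_1),(s_2,c_2)\}$ with $s_0$ single ($s_0$'s bid to $c_2$ evicts $s_1$, who evicts $s_2$ from $c_1$, who in turn evicts $s_0$ from $c_2$), whereas on $\{s_1, s_2\}$ it yields $\{(s_1,c_2),(s_2,c_1)\}$ --- both co-advisors strictly \emph{worse} after the unmatched student is deleted. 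So the removal step is not inert under student proposing, and your fallback argument (``the restriction of $\mu_{SC}^{(\tau-1)}$ is stable on $(S',C)$, which together with optimality pins down $\mu'$'') also breaks: proposer-optimality of $\mu'$ against \emph{some} stable matching of the submarket gives only the one-sided inequality you already called ``the wrong direction,'' and the example shows it can be strict.

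Your closing caveat, however, is precisely the correct repair: when co-advisors propose, a student who ends single never held any proposal (a proposal-receiver, once holding an acceptable proposer, stays matched for the remainder of the run), so every proposal he received was rejected on the spot, and deleting him leaves the execution --- hence the matching --- verbatim unchanged. There the equality, and with it your whole argument, goes through. So treat your proof as valid only for the co-advisor-proposing variant of the $(S,C)$ stage; the instance above shows that the removal-half equality, and indeed the lemma's conclusion between consecutive iterations, genuinely fails for the student-proposing orientation (embed it in a full run by giving each student a private mutually-top advisor). This is a point on which you are actually more careful than the paper, whose one-line equality claim and surrounding assertion that ``these results are independent of whether students or advisors or co-advisors propose'' gloss over exactly the step you flagged; covering the student-proposing case would require a different argument, not a refinement of the no-displacement one.
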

In fact, Lemma~\ref{lem:coadvisorCannotDecrease} can make an additional statement as indicated in the proof. A student's match on the $(S, C)$ market cannot increase over iterations once he is part of the $(S, C)$ market: if student $s$ participates in iterations $\tau-1$ and $\tau$, i.e., $s \in S_m^{(\tau-1)} \cap S_m^{(\tau)} = S_m^{(\tau-1)} \setminus S_u^{(\tau-1)}$, then $\mu_{SC}^{(\tau)}(s) \leq_s \mu_{SC}^{(\tau-1)}(s)$.
So, even if we kept a student for some more iterations in the hope that he finds a co-advisor, he never will and instead unnecessarily hinders an advisor from matching with another student.
Therefore, it is safe to remove all students $S_u$ and not just a subset.
With this, we can finally prove stability.
\begin{theorem}
\label{thm:stablePhDAlgorithm}
The PhD algorithm returns a stable matching.
\end{theorem}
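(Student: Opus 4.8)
The plan is to show that the matching $\mu$ returned by the PhD algorithm cannot be blocked by any mutually acceptable triple $(a,s,c)$, by contradiction. Let $T$ denote the final iteration, so that $S_u^{(T)} = \emptyset$, and recall from the discussion preceding the theorem that the returned matching is precisely $\mu_{AS}^{(T)}$ composed with $\mu_{SC}^{(T)}$ on the students $S_m^{(T)}$ (with all other students single). The two cases in the definition of blocking — student $s$ unmatched versus student $s$ matched — must be handled separately, so I would first split on whether $s$ is matched in $\mu$.

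First I would dispose of the easy component of each case, namely a block on the $(A,S)$ side. If $(a,s)$ blocks $\mu|_{(A,S)} = \mu_{AS}^{(T)}$, then $a >_s \mu_{AS}^{(T)}(s)$ and $s >_a \mu_{AS}^{(T)}(a)$ with $(a,s)$ mutually acceptable; but $\mu_{AS}^{(T)}$ is the output of the GS algorithm on the market $(S^{(T)}, A)$, which is stable, so no such block can exist \emph{provided $s \in S^{(T)}$}. The subtlety is therefore entirely about students who were removed at some earlier iteration: if $s \notin S^{(T)}$, then $s$ was placed in some $S_u^{(\tau)}$ and removed, and I must argue that even so $(a,s)$ cannot block. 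Here Lemma~\ref{lem:studentStaysMatchedAtNextIteration} is the key tool: a student's match on the $(A,S)$ market can only improve over iterations, and the set of matched students grows in the sense $S_m^{(\tau-1)}\setminus S_u^{(\tau-1)} \subset S_m^{(\tau)}$. I would use these monotonicity facts to trace what $s$'s $(A,S)$-match was at the iteration $\tau$ when $s$ was removed and compare it against $\mu_{AS}^{(T)}(a)$.

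For the $(S,C)$ side I would argue symmetrically but using Lemma~\ref{lem:coadvisorCannotDecrease}: if $(s,c)$ blocks $\mu|_{(S,C)} = \mu_{SC}^{(T)}$ with $s$ matched in $\mu$, then $s \in S_m^{(T)}$, and since $\mu_{SC}^{(T)}$ is a stable GS output on $(S_m^{(T)}, C)$ no block can arise among participating agents — so again the only danger is a would-be blocking partner $c$ whose matching status changed across iterations. Because co-advisors' matches only improve over iterations (Lemma~\ref{lem:coadvisorCannotDecrease}), the co-advisor $c$ the unmatched-or-removed student would like to poach can only be doing at least as well in $\mu_{SC}^{(T)}$ as earlier, which I would leverage to contradict $s >_c \mu_{SC}^{(T)}(c)$. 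The case where $s$ is unmatched in $\mu$ but $s \in S^{(T)}$ is where both two-sided blocks must hold simultaneously; here, if $(a,s)$ blocked on $(A,S)$ with $s \in S^{(T)}$, GS-stability already forbids it, so $s$ must be single because it failed to find an advisor, and I would derive the contradiction from $(a,s)$ not blocking $\mu_{AS}^{(T)}$ directly.

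The main obstacle I anticipate is the bookkeeping around \emph{removed} students and co-advisors whose matches shifted across iterations — the monotonicity lemmas give one-step comparisons ($\tau-1$ versus $\tau$), and I will need to chain them carefully from the removal iteration all the way to the terminal iteration $T$, keeping straight which side is improving and which is worsening, and making sure the mutual-acceptability of $(a,s,c)$ is used at the right moment. In particular, the crux is showing that removing a student in $S_u^{(\tau)}$ never destroys a better alternative for that student later: a removed student cannot later form a blocking pair with an advisor, because Lemma~\ref{lem:studentStaysMatchedAtNextIteration}'s containment says that any student who was matched and retained stays matched, so the advisor $a$ that $s$ covets is, by the time of iteration $T$, matched to someone $s$-weakly-preferred by $a$; assembling this cleanly is where the real care is needed.
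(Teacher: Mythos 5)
There is a genuine gap, and it sits exactly where you locate your ``crux.'' You claim that a removed student $s$ cannot later form a blocking pair with an advisor, arguing via Lemma~\ref{lem:studentStaysMatchedAtNextIteration} that the advisor $a$ is, by iteration $T$, matched to someone $a$ weakly prefers to $s$. This is false in general, and no chaining of that lemma can rescue it: Lemma~\ref{lem:studentStaysMatchedAtNextIteration} is about the \emph{students'} side, while removing students moves the \emph{advisors'} matches in the opposite direction --- by Lemma~\ref{lem:gsAddMen}, shrinking the proposing pool weakly \emph{worsens} the other side's matches. So after $s$ is removed, $a$'s match can strictly deteriorate below $s$, and the pair $(a,s)$ may genuinely block $\mu|_{(A,S)} = \mu_{AS}^{(T)}$. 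The reason the theorem survives is not that this advisor-side block is impossible, but that it is harmless: a removed student is \emph{unmatched} in $\mu$, and by the definition of blocking (case i) an unmatched student needs \emph{both} $(a,s)$ to block $\mu|_{(A,S)}$ \emph{and} $(s,c)$ to block $\mu|_{(S,C)}$. The paper's proof therefore never touches the advisor side for removed students; it refutes only the co-advisor side. Your proposal instead tries to refute the advisor side, which is the one direction that fails.

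Your co-advisor-side sketch is the right idea but is missing its base step. To contradict $s >_c \mu_{SC}^{(T)}(c)$ via the monotonicity of Lemma~\ref{lem:coadvisorCannotDecrease}, you first need $\mu_{SC}^{(\tau)}(c) >_c s$ at the removal iteration $\tau$. This holds because $s$ participated in the market $(S_m^{(\tau)}, C)$ at iteration $\tau$, ended unmatched in the stable matching $\mu_{SC}^{(\tau)}$, and $s$ and $c$ are mutually acceptable (since $(a,s,c)$ is assumed to block); stability of $\mu_{SC}^{(\tau)}$ then forces $c$ to be matched to some $\tilde{s} >_c s$ --- the paper derives this from the GS proposal mechanics in both proposing directions. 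Chaining Lemma~\ref{lem:coadvisorCannotDecrease} from $\tau$ to $T$ gives $\mu(c)_S = \mu_{SC}^{(T)}(c) \geq_c \mu_{SC}^{(\tau)}(c) >_c s$, so $(s,c)$ does not block and the triple fails. With your case analysis for non-removed students (GS stability on $(A, S^{(T)})$ and $(S_m^{(T)}, C)$ at the final iteration) kept as is, replacing your advisor-side ``crux'' by this definitional observation plus the co-advisor argument yields the paper's proof.
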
%
\begin{proof}
By Proposition~\ref{prop:stableUnderStrictification}, it is sufficient to show that the matching $\mu$ is stable under the strictified preferences (first line of the algorithm).
We proceed by contradiction.
The PhD algorithm cannot return an individually irrational matching since the GS algorithms never return individually irrational matchings on the two-sided markets $(A, S)$ and $(S, C)$.
Assume that $(a, s, c) \in A \times S \times C$ blocks $\mu$. 
If student $s$ was not removed during the iterations of the PhD algorithm, this means $s \in S^{(T)}$, where $S^{(T)}$ is the set at the final iteration $T$.
$(a, s, c)$ must also block $\mu|_{(A, S^{(T)}, C)}$.
So either the matching $\mu|_{(A, S^{(T)})} = \mu_{AS}^{(T)}$ is blocked by $(a, s)$ or the matching $\mu|_{(S_m^{(T)}, C)} = \mu_{SC}^{(T)}$ is blocked by $(s, c)$. This contradicts the stability of the GS algorithms applied to the markets $(A, S^{(T)})$ and $(S_m^{(T)}, C)$.
If student $s$ was removed at some iteration $\tau$, $s$ could find an advisor, but no co-advisor at that iteration. Since $(a, s, c)$ is a blocking triple, $s$ and $c$ are mutually acceptable.
If students propose to co-advisors, $s$ must have proposed to $c$ in the GS algorithm. $c$ must have rejected $s$ in favor of another student $\tilde{s} >_c s$.
If co-advisors propose, $c$ must not have proposed to $s$ (since $s$ would not have rejected) and must therefore have matched with another student $\tilde{s} >_c s$.
In both cases, $\mu_{SC}^{(\tau)}(c) \geq_c \tilde{s} >_c s$ at that iteration.
By Lemma~\ref{lem:coadvisorCannotDecrease}, we must have $\mu(c)_S = \mu_{SC}^{(T)}(c) \geq_c \mu_{SC}^{(\tau)}(c) \geq_c \tilde{s} >_c s$. Hence $\mu(c)_S >_c s$ and $(a, s, c)$ cannot block $\mu$.
\end{proof}

We now discuss an important \emph{invariance property} of the PhD algorithm.
\begin{theorem}\label{thm:invariancePhDAlg}
Assume that we replace the GS algorithms with any (individually rational) stable matching algorithms, possibly differing across iterations and different on both submarkets.
Then, this variant of the PhD algorithm returns a stable matching.
Moreover, all such variants match the same set of advisors, students and co-advisors.
\end{theorem}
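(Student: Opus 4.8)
The plan is to prove Theorem~\ref{thm:invariancePhDAlg} in two parts: first that every variant returns a stable matching, and second that all variants match exactly the same sets of advisors, students, and co-advisors. For the stability part, I would inspect the proof of Theorem~\ref{thm:stablePhDAlgorithm} and isolate which properties of the GS algorithm it actually used. The argument there relied on exactly three facts: (i) the two-sided matching algorithm returns an individually rational matching, (ii) it returns a stable matching on each submarket, so no blocking pair $(a,s)$ or $(s,c)$ survives, and (iii) the monotonicity captured by Lemma~\ref{lem:coadvisorCannotDecrease}, namely that a co-advisor's match cannot decrease across iterations. Facts (i) and (ii) hold for any individually rational stable matching algorithm by hypothesis. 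So the crux is to re-establish Lemma~\ref{lem:coadvisorCannotDecrease} (and the supporting Lemma~\ref{lem:studentStaysMatchedAtNextIteration}) \emph{without} assuming the GS algorithm specifically.

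\emph{The hard part will be} the monotonicity lemmas, because their original proofs likely invoke optimality/pessimality of the GS output, which a generic stable matching algorithm need not satisfy. Here I would exploit Proposition~\ref{prop:twoSidedSameMatched}: the set of matched persons is invariant across \emph{all} stable matchings of a fixed two-sided market, regardless of which algorithm produced them. This is the key leverage, since the monotonicity statements that matter for stability are really statements about \emph{who} gets matched (e.g.\ $S_m^{(\tau-1)} \setminus S_u^{(\tau-1)} \subset S_m^{(\tau)}$ and the fact that a rejected student stays rejected), not about the ordinal quality of matches. I would recast Lemma~\ref{lem:coadvisorCannotDecrease} in the form actually needed by the stability proof: if student $s$ was removed at iteration $\tau$ because co-advisor $c$ preferred some $\tilde{s} >_c s$, then $c$ is matched to someone $\geq_c \tilde{s}$ in every subsequent stable matching on the $(S,C)$ submarket. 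Since the students entering the $(S,C)$ market only grow in the appropriate sense across iterations (by the set-invariance of who gets matched on $(A,S)$), and since $\tilde{s}$ remains available, any stable matching on the later market must give $c$ a partner at least as good as $\tilde{s}$; otherwise $(c,\tilde{s})$ would block. Once this is in place, the contradiction in the stability proof goes through verbatim with ``GS'' replaced by ``any stable matching algorithm.''

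\emph{For the invariance of the matched sets,} I would argue by induction on the iteration number that the sequence of participating student sets $S^{(\tau)}$, matched sets $S_m^{(\tau)}$, and removed sets $S_u^{(\tau)}$ are identical across all variants. The base case is immediate since all variants start with the same $S^{(1)} = S$. For the inductive step, assume $S^{(\tau)}$ coincides across variants. By Proposition~\ref{prop:twoSidedSameMatched} applied to the market $(A, S^{(\tau)})$, the matched student set $S_m^{(\tau)}$ is the same for every stable matching algorithm, so it agrees across variants; applying the proposition again to $(S_m^{(\tau)}, C)$ shows the unmatched set $S_u^{(\tau)}$ agrees, hence $S^{(\tau+1)} = S^{(\tau)} \setminus S_u^{(\tau)}$ agrees. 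This forces the loop to run for the same number of iterations $T$ and to terminate with the same final $S^{(T)}$, $S_m^{(T)}$. Therefore the set of matched students is identical; and since a matched student is matched to exactly one advisor and one co-advisor, invariance of the matched student set combined with Proposition~\ref{prop:twoSidedSameMatched} on the final submarkets $(A, S^{(T)})$ and $(S_m^{(T)}, C)$ yields that the matched advisor and co-advisor sets coincide as well.

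The main obstacle, to restate, is purely the re-derivation of monotonicity in the generic-algorithm setting; I expect Proposition~\ref{prop:twoSidedSameMatched} to be the workhorse that replaces every appeal to GS-specific optimality, both for repairing the stability argument and for establishing set invariance, and the remaining bookkeeping to be routine.
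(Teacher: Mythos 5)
Your second half (invariance of the matched sets) is essentially the paper's own argument: induct over iterations, use Proposition~\ref{prop:twoSidedSameMatched} on $(A, S^{(\tau)})$ and then on $(S_m^{(\tau)}, C)$ to conclude that $S_m^{(\tau)}$, $S_u^{(\tau)}$ and hence $S^{(\tau+1)}$ agree across all variants, so all variants run for $T$ iterations and match the same people. That part is correct and needs no repair.

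The gap is in your stability half, specifically in the attempt to re-derive Lemma~\ref{lem:coadvisorCannotDecrease} from Proposition~\ref{prop:twoSidedSameMatched} alone. Your key step --- ``since $\tilde{s}$ remains available, any stable matching on the later market must give $c$ a partner at least as good as $\tilde{s}$; otherwise $(c, \tilde{s})$ would block'' --- is invalid: a pair blocks only if \emph{both} members prefer each other to their current partners, and in a later stable matching $\tilde{s}$ may well be matched to a co-advisor that $\tilde{s}$ prefers to $c$, in which case $c$ can receive a partner strictly worse than $\tilde{s}$ with no blocking pair arising. In fact your recast lemma is false as stated: across iterations and across variants, $c$'s partner need not stay above $\tilde{s}$; the correct (and sufficient) statement is only that $c$'s partner stays above the removed student $s$, and that bound is inherently ordinal, so Proposition~\ref{prop:twoSidedSameMatched}, which is purely set-level, cannot deliver it. Your remark that the monotonicity statements ``are really statements about who gets matched, not about the ordinal quality of matches'' is exactly where the plan goes wrong: the decisive step in the proof of Theorem~\ref{thm:stablePhDAlgorithm} is the ordinal chain $\mu(c)_S \geq_c \mu_{SC}^{(\tau)}(c) >_c s$. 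The paper closes this gap with the GS-specific leverage you tried to discard, and it remains available to you: since intermediate iterations affect only the (invariant) sets, the variant's final matching $\lambda$ is a stable matching on the same final submarkets as the reference GS variant with students proposing to co-advisors, whose matching $\mu$ is co-advisor-pessimal; hence $\lambda(c)_S = \lambda_{SC}^{(T)}(c) \geq_c \mu_{SC}^{(T)}(c)$, and $\mu_{SC}^{(T)}(c) >_c s$ follows from Lemma~\ref{lem:coadvisorCannotDecrease} applied to the GS run. So optimality/pessimality of GS is not an eliminable convenience here; it is precisely what substitutes for your invalid blocking argument.
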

\begin{proofsketch}
By Proposition~\ref{prop:twoSidedSameMatched}, the set of matched persons is the same for any stable matchings on two-sided markets, so the sets $S_m^{(\tau)}$ and $S_u^{(\tau)}$ are the same at each iteration $\tau$ for all variants and they all run for the same number of iterations. Stability of the variant can be proven by comparing to the variant where students propose to co-advisors in the GS algorithm (which is known to be stable by Theorem~\ref{thm:stablePhDAlgorithm}).
\end{proofsketch}

This gives rise to a family of matchings produced by the PhD algorithm based on the two-sided matching algorithms. 
The stable matching mechanisms used at the final iteration do not affect stability and the set of matched people, but determine properties of the matching such as optimality.
In particular, the PhD algorithm returns a stable matching independently of which side proposes on the two-sided markets. If $S \implies A$ means that students propose to advisors, we have the variants: $A \implies S \implies C$, $A \impliedby S \implies C$, $A \impliedby S \impliedby C$ and $A \implies S \impliedby C$. These variants are optimal for some of the parties on (submarkets of) the associated two-sided markets.
Another variant is to swap the roles of advisors and co-advisors, i.e., first run the student-co-advisor market and then the student-advisor market on the matched students. 
The stability notion is the same (see Appendix~\ref{app:missingProofs}), but the matching returned by this variant is generally different.

\paragraph{Running time}
Since the GS algorithm runs in time $\mathcal{O}(|M| |W|)$ on the market $(M, W)$, a naive bound of the running time of the PhD algorithm is $\mathcal{O}((|A| + |C|) |S|^2)$.
Indeed, at least one student is removed at each iteration and $\mathcal{O}(\sum_{\tau=1}^{T} |A| |S^{(\tau)}| + |C| |S_m^{(\tau)}|) = \mathcal{O}((|A| + |C|) |S|^2)$.
At all iterations except the last, we only need to find a stable matching to determine the set of matched students.
A careful implementation can find such stable matchings in time $\mathcal{O}((|A|+|C| \log |S|) |S|)$ or $\mathcal{O}((|A|+|C| \log |C|) |S|)$ as we show now.
We modify the GS algorithm so that it takes the stable matching from the previous iteration as input and computes a new stable matching on a market where some students are added or removed. This need not correspond to a GS matching.
\begin{proposition}
\label{prop:gsPhDEfficientImplementation}
Modify the GS algorithm such that it takes the matching $\mu$ as input, each woman keeps track of the men she rejected in a waiting list L. Then we have:
\begin{itemize}
    \item if $\{ M^{(\tau)} \}_{\tau=1}^T$ is a decreasing sequence of men such that $M^{(\tau+1)} \subset M^{(\tau)}$ and $M^{(1)} = M$, then stable matchings $\{ \mu^{(\tau)} \}_{\tau = 1}^T$ such that $\mu^{(\tau)}$ is stable on $(M^{(\tau)}, W)$ can be output in $\mathcal{O}(|M| |W| \log |M| + T)$ total time in terms of their differences;
    \item if $\{ W^{(\tau)} \}_{\tau=1}^T$ is a decreasing sequence of women such that $W^{(\tau+1)} \subset W^{(\tau)}$ and $W^{(1)} = W$, then stable matchings $\{ \mu^{(\tau)} \}_{\tau = 1}^T$ such that $\mu^{(\tau)}$ is stable on $(M, W^{(\tau)})$ can be output in $\mathcal{O}(|M| |W| + T)$ total time in terms of their differences.
\end{itemize}
We output the first matching $\mu^{(1)}$. Matching $\mu^{(\tau+1)}$ is output in terms of its difference to the matching $\mu^{(\tau)}$.
(If the matching $\mu^{(\tau+1)}$ is output without reference to $\mu^{(\tau)}$, it takes $\mathcal{O}(|M| |W| + (|M|+|W|) T)$ total time.)
\end{proposition}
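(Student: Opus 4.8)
The plan is to carry the entire Gale--Shapley state across iterations rather than recomputing it: I keep, for each man, a proposal pointer into his preference list, and for each woman her current partner together with a \emph{waiting list} $L(w)$ holding every man who has proposed to her and been rejected. I compute $\mu^{(1)}$ with a single full run of \GSalgShort\ (men proposing) and then update the state incrementally, so that each $\mu^{(\tau+1)}$ is produced from $\mu^{(\tau)}$ by processing only the men or women deleted at that step and emitting just the changed pairs. The two cost regimes will come precisely from whether the deleted side is the proposing side.

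For the second bullet (women deleted, men proposing) the update is the ordinary continuation of \GSalgShort. When a woman $w^{\star}$ is removed, if she is matched to $m$ I free $m$ and let every freed or subsequently displaced man resume proposing from his pointer, skipping deleted women. Because removing women can only push the men down their lists (second item of Lemma~\ref{lem:gsAddMen}), no man ever needs to revisit a woman he passed, so pointers only advance and the procedure is exactly man-proposing \GSalgShort\ on the reduced market; hence $\mu^{(\tau)}$ is the man-optimal stable matching on $(M,W^{(\tau)})$. Each pointer advances at most $|W|$ times over the whole sequence, so the proposal work totals $O(|M||W|)$, and adding $O(1)$ per iteration to output the list of changes yields $O(|M||W|+T)$. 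No priority structure is needed here, which is exactly why the bound carries no logarithmic factor.

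For the first bullet (men deleted, men proposing) the deleted side is the proposing side, so freeing a woman forces her to reconsider men she previously rejected; this is the role of the waiting lists, which I maintain as priority queues keyed by each woman's preference. When the men $R=M^{(\tau)}\setminus M^{(\tau+1)}$ are deleted I free every woman whose partner lies in $R$ and run a \emph{reclaim cascade}: repeatedly take a free woman $w$, pop from $L(w)$ (discarding any deleted men encountered) her best still-present acceptable man $m'$; if none remains she stays single, otherwise she rematches to $m'$ and the previous partner of $m'$ — whom $m'$ strictly prefers less — is freed in turn. Every reclaim moves a man strictly up his list, so the total number of reclaims over all iterations is $O(|M||W|)$; each waiting-list entry (created by one proposal) is popped at most once, giving $O(|M||W|)$ extractions at $O(\log|M|)$ each, while insertions occur only during the single initial run and are again $O(|M||W|)$. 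With $+T$ for emitting differences this gives $O(|M||W|\log|M|+T)$.

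The main obstacle is proving that the cascade halts at a \emph{stable} matching on $(M^{(\tau+1)},W)$. The approach I would take is to maintain the invariant that, whenever the free queue is empty, $\mu(w)\ge_w m$ for every man $m$ still present in $L(w)$, and separately that each present man is matched to a woman at least as good as his initial partner; termination follows since men only improve and the market is finite. To exclude a blocking pair $(m,w)$ I split on whether $m$ ever proposed to $w$: if he did, then $m$ is either $w$'s partner, or matched to a woman he strictly prefers to $w$ (so $\mu(m)\ge_m w$ and he does not block), or present in $L(w)$, in which case the extract-best rule forces $\mu(w)\ge_w m$; if he never proposed to $w$, then in man-proposing \GSalgShort\ his pointer never reached $w$, so $\mu(m)>_m w$. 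Either way $(m,w)$ is not blocking, and combining this with the monotonicity of Lemma~\ref{lem:gsAddMen} identifies the output with the man-optimal stable matching on the reduced market. I would close by noting the routine output accounting — $\mu^{(1)}$ is printed in full and each later matching as its amortized $O(1)$ difference, so re-emitting full matchings instead would replace $+T$ by $+(|M|+|W|)T$, matching the parenthetical claim.
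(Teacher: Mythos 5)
Your overall architecture is the same as the paper's: one full man-proposing GS run for $\mu^{(1)}$, then incremental updates where the cheap case (women deleted, proposing side intact) is just a continuation of GS with advancing pointers in $\mathcal{O}(|M||W|+T)$, and the expensive case (men deleted) uses per-woman waiting lists of rejected men kept as max-heaps, each entry popped at most once, giving the extra $\log|M|$ — this matches the paper's \textsc{MFP} construction in Appendix~\ref{app:phdEfficientImpl}. However, your reclaim cascade as stated has a genuine bug: you rematch the popped man $m'$ \emph{unconditionally}, asserting that his previous partner is someone "whom $m'$ strictly prefers less." That assertion fails across multiple deletion rounds. A man can sit on several waiting lists at once; if in an earlier round he was reclaimed by a woman $w'$ he prefers to $w$, he may still be an entry of $L(w)$ when $w$'s partner is later deleted, and at that point he is matched strictly above $w$. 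Forcing the rematch then frees $w'$ and creates the blocking pair $(m',w')$. The paper's Algorithm~\ref{alg:mfpAlg} avoids this precisely by the acceptance test (the receiver accepts only if the proposer beats his current match, line "$m >_w \mu(w)$", with roles swapped here): the woman must keep popping until she finds a man who prefers her to his current partner, or stays single. The fix is local — your own blocking-pair case analysis already lists "matched to a woman he strictly prefers to $w$" as a possible state, your termination argument (accepted reclaims strictly improve the man) and the amortized heap accounting both survive, since each entry is still popped at most once.

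A second, smaller error: you claim the cascade output is "the man-optimal stable matching on the reduced market." The paper explicitly disproves this with the three-man/three-woman example at the end of Appendix~\ref{app:phdEfficientImpl}: computing the GS matching and then removing $m_1$ via the incremental routine does \emph{not} coincide with running GS from scratch on $M\setminus\{m_1\}$. Lemma~\ref{lem:gsAddMen} gives monotonicity for genuine GS matchings, not for the cascade's output, so it cannot be used for this identification. Fortunately the proposition only asserts \emph{stability} of each $\mu^{(\tau)}$, which your (repaired) blocking-pair argument establishes, so you should simply drop the optimality claim rather than try to salvage it.
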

\begin{proofsketch}
Compute the initial matching $\mu^{(1)} = \mu$ in time $\mathcal{O}(|M| |W|)$. We will show that listing the consecutive differences between $\mu^{(\tau+1)}$ and $\mu^{(\tau)}$ takes at most $\mathcal{O}(|M| |W| + T)$ summed over all $\tau = 2, \dots, \tau$. Without loss of generality, we assume that a single man (woman) is removed between $M^{(\tau)}$ and $M^{(\tau+1)}$ ($W^{(\tau)}$ and $W^{(\tau+1)}$).
Suppose that man $m$ is removed. Remove $m$ from everyone's preferences.
If he was not matched to any woman, $\mu^{(\tau+1)}$ is the same as $\mu^{(\tau)}$, except that man $m$ is not listed. 
If he was matched to woman $w$, woman $w$ picks the most preferred man $\tilde{m}$ on her waiting list, removes him from this list and accepts $\tilde{m}$. 
If $\tilde{m}$ was matched to a woman, apply the same to $\tilde{m}$.
Since each woman lists at most $|M|$ men on her waiting list, the total running time is at most $\mathcal{O}(|M| |W| \log |W| + T)$ summed over all $\tau = 2, \dots, T$, if we implement the waiting lists with max-heaps.
Appendix~\ref{app:phdEfficientImpl} shows that the matching is stable.
Suppose that woman $w$ is removed. Remove $w$ from everyone's preferences. 
If she was not matched to any man, $\mu^{(\tau+1)}$ is the same as $\mu^{(\tau)}$, except that woman $w$ is not listed. 
If woman $w$ was matched to a man $m$, continue the GS algorithm starting from the matching $\mu^{(\tau)}$.
Since each man lists at most $|W|$ women, the total running time is at most $\mathcal{O}(|M| |W| + T)$ summed over all $\tau = 2, \dots, T$.
\end{proofsketch}
For the PhD algorithm, it is enough to keep track of the matching and the matched people on the $(A, S)$ and $(S, C)$ markets at each iteration. Therefore, if advisors propose to students, the total time spent is either $\mathcal{O}((|A| + |C| \log |S|) |S|)$ when students propose to co-advisors or $\mathcal{O}((|A| + |C| \log |C|) |S|)$ when co-advisors propose to students.
The space overhead for the waiting lists is bounded by the space needed to store the preference lists, i.e., $\mathcal{O}((|A| + |C|) |S|)$.
Since the matching returned by the PhD algorithm only depends on the matching algorithms used at the last iteration, we can use the GS algorithms for iterations $1, \dots, T-1$. Therefore, the PhD algorithm generally runs in time and space complexity $\mathcal{O}((|A| + |C|)|S|)$ plus the running time and space requirements of the matching algorithms at the final iteration.

\paragraph{Incompatible advisor pairs}
In practice, the same professor may act as both advisor and co-advisor and no student should have him as advisor and co-advisor simultaneously. Alternatively, it may be that the advisor and co-advisor in any matched triple should be from different countries to foster exchanges.
We present two modifications of the PhD algorithm to deal with this scenario. In Appendix~\ref{sec:incompatibleAdvisorPairs}, we show that they do not always result in stable matchings.
Consider an iteration of the PhD algorithm and a student who is matched to an advisor. Then, he only needs to consider co-advisors that are compatible with his (temporally) matched co-advisor. However, this can result in the situation when a very good student is matched to an advisor who is incompatible with all (or many) co-advisors. This means that this student is removed although he would find a match if he were matched to a different advisor in the first place. To accomodate for this, we propose to remove that advisor from the student's preferences and not remove the student.
This also fails when a student first matches with advisor $A_1$ and co-advisor $C_1$ such that $A_1$ is incompatible with some more preferred advisor $C_2$. In the next round, this student matches with a more preferred advisor $A_2$ and more preferred co-advisor $C_2$. $A_2$ may be available because another student no longer matched to him. This means that co-advisor $C_1$ is unmatched at the second round, although some other student may be available.
In practice, this heuristic may work well however.
Note that Proposition~\ref{prop:gsPhDEfficientImplementation} can be adapted and the overall complexity of this algorithm is still $\mathcal{O}((|A| + |C|) |S|)$ (up to log factors).
The authors of \citep{zhong2019Cooper} claim that they can find a stable matching in this setting. We provide a counterexample when their algorithm does not work, even when everyone has complete preferences.
Without incompatible advisor pairs, we further show that their algorithm is a special case of our PhD algorithm with advisors proposing to students and students proposing to co-advisors. The advantage of our algorithm is that we can use any stable matching algorithms.
We have reasons to believe that finding a stable matching (if it exists) with incompatible advisor pairs and consistent preferences may be NP-complete.

\paragraph{Arbitrary quotas}
We now discuss the extension when persons can have quota greater than one. This setting is given in more detail in Appendix~\ref{app:quotas}.
For example, each student needs to complete a maximum number of projects and each professor can advise and/or co-advise a maximum number of students/projects.
Let us first consider two-sided markets. If men and women have quotas, one can pass through the extended market construction where each person is replicated according to their quota with identical preferences \citep{roth1984evolution}. Because the relationships between replicas of the same person are lost, this also allows the same man and woman to be matched more than once. If this is undesirable, one can modify the GS algorithm such that a man proposes to his most preferred woman he did not propose to already whenever he has spots free, or to himself if no acceptable women are left. A woman accepts whenever she prefers the man to any of her current match partners. 
If the same man and woman are allowed to match more than once, a man instead proposes to his most preferred woman who did not reject him yet.
For the following, it does not matter whether the same man and woman are allowed to match more than once or not.
The stability and optimality proofs then carry over from the one-to-one to this many-to-many setting. Alternatively, one can view it as a special case of the results \citep{roth1991natural, echenique2004theory} with strongly substitutable preferences.\\
We can treat a three-sided market as two two-sided markets with the constraint that the same student must have the same number of match partners on both two-sided markets.%
\footnote{With this treatment, a student is matched to a set of advisors $\mathcal{A}$ and co-advisors $\mathcal{C}$ such that $|\mathcal{A}| = |\mathcal{C}|$. The matched triples are not defined uniquely. For example, if $\mathcal{A} = \{a_1, a_2 \}, \mathcal{C} = \{ c_1, c_2 \}$, the matchings $\{ (a_1, s, c_1), (a_2, s, c_2) \}$ and $\{ (a_1, s, c_2), (a_2, s, c_1) \}$ may both be stable. Then, incompatibilities between advisor pairs (Section~\ref{sec:incompatibleAdvisorPairs}) could rule out one of these options.}
Depending on the application, the same advisor-student pair can match more than once or not, and the same for the student-co-advisor pair.
In the PhD algorithm without quotas, a student is removed if he does not find a co-advisor. Here, when a student has a quota on the $(S, C)$ market and does not use all of his quota, we reduce his quota on the $(A, S)$ market by the number of unfilled spots.
More precisely, say a student matches with $k_a$ advisors at some iteration. Then, he is assigned quota $k_a$ on the student-co-advisor market. If this student then matches to $k_c \leq k_a$ co-advisors, reduce his capacity on the advisor market by $k_a - k_c$ (instead of removing the student).
Lemmata \ref{lem:gsAddMen}, \ref{lem:studentStaysMatchedAtNextIteration}, \ref{lem:coadvisorCannotDecrease} and Theorems \ref{thm:stablePhDAlgorithm}, \ref{thm:invariancePhDAlg} can be adapted to prove stability. We refer to Appendix~\ref{app:quotas}.

\section{Multi-sided markets}\label{sec:nsidedMarketsShort}
We extend 3-sided markets to $n$-sided markets for $n \geq 2$ and generalize the PhD algorithm to obtain Algorithm~\ref{alg:phdAlgorithmNSided}.
This model is appropriate when $n$ sides must be matched, with a start side and an end side, and when the $i$-th side only cares about the $(i-1)$-th and $(i+1)$-th side separately.
Let $S_1, \dots, S_n$ the $n \geq 2$ sides of the market. The $n$-sided matching market with preferences $P$ is described by $((S_1, \dots, S_n), P)$.
In general, preferences $P$ are such that $P(s_k)$ are the preferences of $s_k \in S_k$ and can be described by a total order on $S_1 \times \dots \times S_{k-1} \times S_{k+1} \times \dots \times S_n \cup \{ (s_k, \dots, s_k) \}$, where $(s_k, \dots, s_k) \in S_k^{n-1}$ means that $s_k$ stays single.
We consider markets such that $s_k \in S_k$ has separate preferences $P_{k-1 \leftarrow k}$ over $S_{k-1} \cup \{ s_k \}$ (for $k > 1$) and $P_{k \rightarrow k+1}$ over $S_{k+1} \cup \{ s_k \}$ (for $k < n$), and is indifferent between participants from the other sides. Call market $k$ the two-sided market $(S_k, S_{k+1}, P_k)$, where $P_k = \{ P_{k \leftarrow k+1}, P_{k \rightarrow k+1} \}$ are the two-sided preferences between $S_k$ and $S_{k+1}$.
The PhD market in our setting is hence specified by $((S_1, \dots, S_n), (P_1, \dots, P_{n-1}))$. When we write $>_{s_k}$, it is implicit whether it refers to $P_{k-1 \leftarrow k}(s_k)$ or $P_{k \rightarrow k+1}(s_k)$.

\begin{definition}
A matching is a set $\mu \subset S_1 \times \dots \times S_n$ such that each person is matched at most once.
For $s_k \in S_k$, define the match partner of $s_k$ as $\mu(s_k) = (s_1, \dots, s_{k-1}, s_{k+1}, \dots, s_n)$ if $(s_1, \dots, s_{k-1}, s_k, s_{k+1}, \dots, s_n) \in \mu$. If $s_k$ has no match partner, let $\mu(p) = (p, \dots, p) = \{ p \}^{n-1}$.
\end{definition}
Let $\mu_k$ the matching restricted to the market $(S_k, S_{k+1}, P_k)$, i.e., $\mu_k = \{ (s_k, s_{k+1}) \mid (s_1, \dots, s_n) \in \mu \} \subset S_k \times S_{k+1}$.
The matching $\mu$ is \emph{individually rational} if it is individually rational on the two-sided markets. We define stability in terms of two-sided stability.
\begin{definition}
The matching is stable if it is individually rational and there exists no blocking $n$-tuple $(s_1, \dots, s_n) \in S_1 \times \dots \times S_n \setminus \mu$ such that for all $k = 1, \dots, n-1$ we have either $(s_k, s_{k+1}) \in \mu_k$ or $(s_k, s_{k+1})$ blocks $ \mu_k$.
\end{definition}
In other words, for all $k = 1, \dots, n-1$, both persons in the couple $(s_k, s_{k+1})$ improve their match or, if they are matched, they can keep their match, i.e., $s_{k} = \mu_{k}(s_{k+1})$. At least one $(s_k, s_{k+1})$ has to improve their match for some $k$. The following theorem is proved in Appendix~\ref{sec:nsidedMarkets}.
\begin{theorem}
\label{thm:stablePhDAlgorithmNSided}
The $n$-sided PhD Algorithm returns a stable matching.
\end{theorem}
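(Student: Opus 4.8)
The plan is to mirror the proof of Theorem~\ref{thm:stablePhDAlgorithm}, replacing the single middle side $S$ by the chain of middle sides $S_2,\dots,S_{n-1}$. First, by the strictification argument (Proposition~\ref{prop:stableUnderStrictification}) it suffices to treat strict preferences, and individual rationality is inherited from the two-sided matchings on each market $(S_k,S_{k+1})$. As in the three-sided case, I call $s_k\in S_k$ (for $2\le k\le n-1$) \emph{removed} if at some iteration it is matched on market $k-1$ but fails to match on market $k$ (a ``dead end''); the terminal sides $S_1$ and $S_n$ are never removed. Since the algorithm stops only when nothing is removed, at the final iteration $T$ every participant matched on market $k-1$ is also matched on market $k$, so the returned $\mu$ consists of complete chains together with singles, and $\mu$ restricts to the stable two-sided matching $\mu_k^{(T)}$ on each market $k$; in particular, for any surviving person $p$ one has $\mu_k(p)=\mu_k^{(T)}(p)$.

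The technical core is to lift Lemmata~\ref{lem:gsAddMen}--\ref{lem:coadvisorCannotDecrease} to the whole chain, which I would do by induction on the market index $k$. The claim is that over iterations the right side $S_{k+1}$ of market $k$ can only improve, and that every matched left participant of market $k$ persists (its match improves) at the next iteration. The base case $k=1$ is exactly Lemmata~\ref{lem:studentStaysMatchedAtNextIteration} and~\ref{lem:coadvisorCannotDecrease} applied to $(S_1,S_2)$. For the inductive step, the set of matched participants on market $k-1$ is, by the hypothesis, nondecreasing over iterations; these are precisely the left participants that enter market $k$, and a removed left participant of market $k$ is by definition unmatched there, so deleting it does not change $\mu_k$. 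Hence market $k$ only ever gains left participants (up to such inconsequential deletions), and Lemma~\ref{lem:gsAddMen} yields that its right side $S_{k+1}$ improves and that its matched left participants persist, closing the induction.

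With these in hand I would prove stability by contradiction. Suppose a mutually acceptable $n$-tuple $(s_1,\dots,s_n)\notin\mu$ blocks $\mu$. Because each person is matched at most once, the pairs $(s_k,s_{k+1})\in\mu_k$ cannot all hold (otherwise the whole tuple would lie in $\mu$), so some pair blocks its market. If \emph{no} $s_k$ is removed, every $s_k$ survives to $T$ and $\mu_k$ agrees with $\mu_k^{(T)}$ on it; taking the smallest index $k$ whose pair blocks, the tightness of the earlier pairs (vacuous for $k=1$, where $s_1$ participates on market $1$ as a terminal side) forces $s_k$ to be matched on market $k-1$ at iteration $T$, so both $s_k$ and $s_{k+1}$ participate on market $k$ and $(s_k,s_{k+1})$ blocks the stable $\mu_k^{(T)}$, a contradiction. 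If instead \emph{some} $s_k$ is removed, let $k^\ast$ be the \emph{largest} removed index; then $s_{k^\ast+1}$ is not removed and hence survives to $T$. Since $s_{k^\ast}$ is single in $\mu$, the pair $(s_{k^\ast},s_{k^\ast+1})$ must block market $k^\ast$, so the two are mutually acceptable; at the removal iteration $\tau^\ast$, $s_{k^\ast}$ is unmatched on market $k^\ast$ while $s_{k^\ast+1}$ is present, so stability of $\mu_{k^\ast}^{(\tau^\ast)}$ forces $\mu_{k^\ast}^{(\tau^\ast)}(s_{k^\ast+1})>_{s_{k^\ast+1}}s_{k^\ast}$. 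The monotonicity above then gives $\mu_{k^\ast}(s_{k^\ast+1})=\mu_{k^\ast}^{(T)}(s_{k^\ast+1})\ge_{s_{k^\ast+1}}\mu_{k^\ast}^{(\tau^\ast)}(s_{k^\ast+1})>_{s_{k^\ast+1}}s_{k^\ast}$, contradicting that the pair blocks.

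I expect the main obstacle to be the monotonicity induction rather than the final contradiction, since removals now cascade across several middle sides: deleting a dead end on market $k$ frees its partner on market $k-1$ and can trigger further deletions upstream. The step that makes this tractable is the observation that a person is removed exactly when it is unmatched on the single market where it fails, so (via Lemma~\ref{lem:gsAddMen}) its deletion leaves that market's matching untouched while the matched people it leaves behind only accumulate downstream. Choosing the \emph{rightmost} removed index in the contradiction is precisely what lets me avoid tracing the cascade, because its downstream neighbor $s_{k^\ast+1}$ is then guaranteed to have survived, so the clean right-side monotonicity applies directly.
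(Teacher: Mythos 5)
Your proof is correct, and its architecture coincides with the paper's: strictify, inherit individual rationality from the two-sided GS runs, lift the monotonicity results to the whole chain by induction on the market index (this is exactly the paper's Lemma~\ref{lem:matchedPersonsMatchCannotDecrease}, proved there by the same induction, noting that removed persons are unmatched on the market where they fail so their deletion is ``inconsequential''), and then derive a contradiction from a blocking tuple via its decomposition into consecutive pairs (the paper packages your ``all pairs tight implies the tuple lies in $\mu$'' observation as Fact~\ref{fact:blockingTupleImpliesBlockingPair_app}). The one genuine divergence is how the removed case is closed. The paper starts from the leftmost blocking pair $(s_k,s_{k+1})$ given by Fact~\ref{fact:blockingTupleImpliesBlockingPair_app}, and when $s_{k+1}$ was removed it walks \emph{rightward} through a cascade: at the removal iteration, $s_{k+2}$ was either matched to someone better (then monotonicity kills the pair $(s_{k+1},s_{k+2})$) or was itself removed earlier, in which case the same reasoning is applied to $s_{k+2}$, terminating because members of $S_n$ are never removed. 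You compress this cascade into a single step by an extremal choice: taking the \emph{largest} removed index $k^\ast$ guarantees that $s_{k^\ast+1}$ survives to the final iteration, so stability at the removal iteration $\tau^\ast$ plus right-side monotonicity immediately yields $\mu_{k^\ast}(s_{k^\ast+1}) >_{s_{k^\ast+1}} s_{k^\ast}$, contradicting that $(s_{k^\ast},s_{k^\ast+1})$ blocks. Your version is shorter and avoids the temporal bookkeeping of comparing removal times along the cascade; the paper's version stays closer to the normal form of Fact~\ref{fact:blockingTupleImpliesBlockingPair_app} and makes the termination mechanism explicit. One small point to tighten in your monotonicity induction: you mention only that market $k$ gains left participants, but its right side $S_{k+1}$ also \emph{shrinks} over iterations (middle sides lose removed members), so Lemma~\ref{lem:gsAddMen} must be invoked twice, once for adding left-side persons and once for deleting right-side persons, exactly as the paper does; since both operations push the surviving right side's matches weakly up, your stated conclusion is unaffected.
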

This setting could model (multi-source) supply chains such as car or vaccine manufacturing \citep{world2021covid}.
To produce a vaccine, one first needs to buy the chemical reactants, then they must be mixed together, packaged and finally shipped (which itself involves a shipping supply chain). All these $n$ steps must be executed in order, and factors such as trade restrictions, distances between production plants inform the preferences that each side has over the previous side and the next side. 
We assume that each side only has preferences over the previous and next side. This should hold as long as each step is sufficiently self-contained/standardized, e.g., the packaging procedure works independently of the chemicals used to produce the vaccine.
Half-produced vaccines are undesirable, so only full matches are of interest.
If a match is unstable, some entity in the supply chain has an incentive to deviate, so we search for stable matchings.

\section{Numerical simulation}
Since real preference data is difficult to obtain, we generate an artificial dataset of a PhD market, where we are interested in full matches only. Each person is randomly assigned a random number of research fields. An advisor computes the overlap in research fields he has with students to order them. We do this for students and co-advisors as well. 
When a professor and a student have high research overlap, they are likely to both rank each other highly, so we add random jitter to make preferences less structured. More details are in Appendix~\ref{app:numericalSimulation}.
As a baseline, we use the algorithm proposed by~\cite{danilov2003existence} which is only guaranteed to produce stable matchings when all pairs are mutually acceptable. This corresponds to stopping the PhD algorithm after one iteration.
We plot the number of (complete) matches and the number of blocking triples over iterations of the PhD algorithm in Figure~\ref{fig:numberMatchesBlockingTriplesSyntheticData}. The fact that the number of matches cannot go down is a consequence of Lemma~\ref{lem:coadvisorCannotDecrease}. 
Note that this can be observed for individual runs, but not necessarily in the above figure because we average over datasets that reach a given iteration, and the number of iterations depends on the dataset.
The number of blocking triples is also mostly monotonically decreasing, and it reaches zero at the last iteration as expected.

\begin{figure}
\centering
\begin{subfigure}{.5\textwidth}
  \centering
  \includegraphics[width=\textwidth,height=.16\textheight,keepaspectratio]{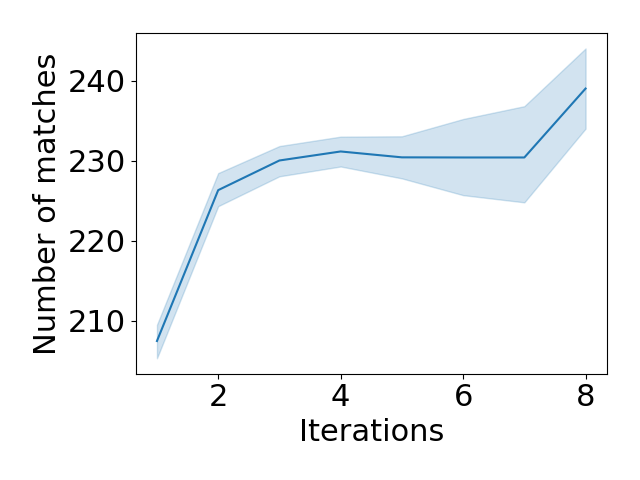}
  \label{fig:sub1}
\end{subfigure}%
\begin{subfigure}{.5\textwidth}
  \centering
  \includegraphics[width=\textwidth,height=.16\textheight,keepaspectratio]{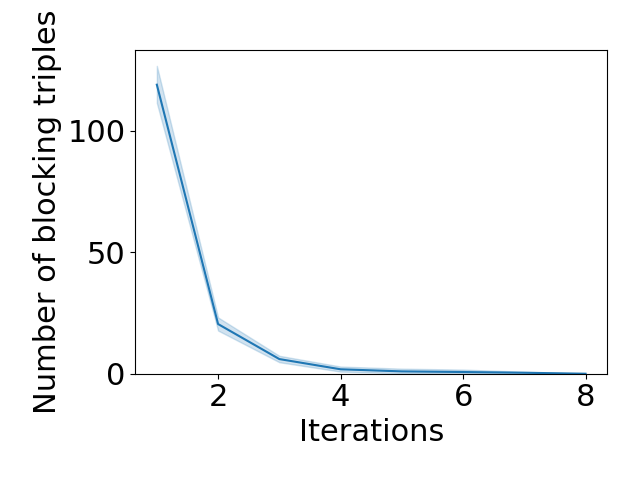}
  \label{fig:sub2}
\end{subfigure}
\caption{Number of complete matches and blocking triples of the matching produced over iterations of the PhD algorithm, averaged over 40 synthetic datasets. }
\label{fig:numberMatchesBlockingTriplesSyntheticData}
\end{figure}


\section{Conclusion}\label{sec:conclusion}
We have shown that stable matchings exist and can be found in (almost) quadratic time in three-sided markets for matching students with advisors and co-advisors under suitably defined preferences.
On a synthetic dataset, we have seen that the iterative nature increases the number of matches.
It would be interesting to obtain real supply chain data and quantify the benefits of using the PhD algorithm as compared to a naive baseline.
Our approach assumes that all persons know their (true) preferences. If the market is large, persons may not know each other.
So the PhD algorithm can be combined with preference modelling techniques, such as preference elicitation \citep{charlin2013toronto} or handcrafted criteria \citep{mordig2021two}. These techniques make use of preliminary matchings (based on tentative preferences) to give people time to form their preferences.

A possible future direction is to prove the NP-completeness when advisor pairs can be incompatible, similarly to the line of work in \citep{ng1991three, lam2019existence, huang2007two}.
With quotas, we wonder if stable matchings exist when the same triple rather than the same pairs can match at most once (as an extension to Section~\ref{sec:phdAlgorithm}), and how to find them.
Note that the PhD algorithm is not strategy-proof, i.e., persons can misrepresent their preferences to obtain more satisfying matches. An interesting question is how much information a person needs to successfully manipulate the outcome (see \citep{roth1989two} for two-sided markets with incomplete information).

\begin{ack}
This work was supported by the German Federal Ministry of Education and Research (BMBF): Tübingen AI Center, FKZ: 01IS18039B, and by the Machine Learning Cluster of Excellence, EXC number 2064/1 – Project number 390727645.
\end{ack}


\bibliographystyle{plainnat}
\bibliography{bibliography}

\clearpage

\appendix 

\section{Missing proofs}\label{app:missingProofs}
Based on man-optimality and woman-pessimality of the matching returned by the GS algorithm, we can show that the set of matched persons is the same in all stable matchings, as already stated in Section~\ref{sec:twosidedMarketsShort}. An alternative proof can be found in \citep{knuth1997stable}.
\begin{customprop}{\ref{prop:twoSidedSameMatched}}
Let ($M$, $W$, $P$) a two-sided market with strict preferences. The set of matched persons is the same in any stable matching.
\end{customprop}
\begin{proof}
Denote the matched men in an arbitrary matching $\mu$ by $R(\mu)$.
Let $\mu$ the GS matching (with men proposing) and $\lambda$ an arbitrary stable matching. It is sufficient to show that the matched persons are the same in $\mu$ and $\lambda$. By man-optimality, $\mu(m) \geq_m \lambda(m) \; \forall m \in M$. Whenever $m$ is matched in $\lambda$, we have $\mu(m) \geq_m \lambda(m) >_m m$, hence $\mu(m) >_m m$, so $m$ is also matched in $\mu$. In other words, $R(\lambda) \subset R(\mu)$. In particular, the number of matched men in $\mu$ is greater or equal to the number of matched men in $\lambda$. From woman-pessimality, we have that the number of matched women in $\lambda$ must be greater or equal to the number of matched women in $\mu$. Since the number of matched men equals the number of matched women for a given matching, we have $| R(\lambda) | \geq | R(\mu) |$, hence $R(\lambda) = R(\mu)$.
We can apply a similar reasoning to show that the matched women are the same in all matchings.
\end{proof}

We prove a theorem on the behavior of the GS algorithm when men and women are removed. See \citep{gale1985ms} for an alternative proof.
\begin{customlemma}{\ref{lem:gsAddMen}}
Consider a market $(M, W, P)$ with strict preferences $P$ and subsets $M_1 \subset M, W_1 \subset W$. Let $\mu = \GSalgShort(M, W, P)$, $\mu_1 = \GSalgShort(M_1, W, P|_{(M_1, W)})$ and $\mu_2 = \GSalgShort(M, W_1, P|_{(M, W_1)})$. 
\begin{itemize}
    \item Independently of whether men or women propose (but the same for $\mu_1$ and $\mu$), it holds that $\mu_1(m) \geq_m \mu(m) \; \forall m \in M_1, \mu_1(w) \leq_w \mu(w) \; \forall w \in W$.
    \item Independently of whether men or women propose (but the same for $\mu_2$ and $\mu$), it holds that $\mu_2(m) \leq_m \mu(m) \; \forall m \in M, \mu_2(w) \geq_w \mu(w) \; \forall w \in W_1$.
\end{itemize}
\end{customlemma}
\begin{proof}
By the man-optimality of the GS algorithm in the many-to-many setting (with strict preferences), the order in which free men propose does not matter. So we can choose the order in a suitable fashion. We first prove $\mu_1(m) \geq_m \mu(m) \; \forall m \in M_1, \mu_1(w) \leq_w \mu(w) \; \forall w \in W$ when men propose.
Consider the execution of $\GSalgShort(M, W, P)$ and first let only men $M_1$ propose in the GS algorithm until it terminates and returns $\mu_1$. Continue the GS algorithm and let all men propose, including men $M_1$ if they become unmatched, until it returns $\mu_2$. The inequalities follow since men $M_1$ cannot get better matches when they continue proposing to less preferred women and women only accept better men.\\
We next prove $\mu_2(m) \leq_m \mu(m) \; \forall m \in M, \mu_2(w) \geq_w \mu(w) \; \forall w \in W_1$ when men propose.
Consider the execution of $\mu_2 = \GSalgShort(M, W_1, P|_{(M, W_1)})$. This is the same as running $\GSalgShort(M, W, P)$ where in addition a man is rejected whenever he proposes to a woman in $W \setminus W_1$. Whenever a man is rejected by a woman $w \in W_1$ in the execution of $\mu$, he is also rejected by a woman in the execution of $\mu_2$, hence $\mu_2(m) \leq_m \mu(m) \; \forall m \in M$. This also means that each man proposes to at least as many women in $\mu_2$ as in $\mu$ and the women's matches can only become better, $\mu_2(w) \geq_w \mu(w) \; \forall w \in W_1$.\\
When women propose, it follows from symmetry.
\end{proof}

We now provide the missing proofs for Section~\ref{sec:phdAlgorithm}.
\begin{customlemma}{\ref{lem:studentStaysMatchedAtNextIteration}}
A student's match on the $(A, S)$ market can only improve over iterations (provided that he still participates).
More precisely, if student $s$ still participates in iteration $\tau$, $s \in S^{(\tau)}$, then $\mu_{AS}^{(\tau)}(s) \geq_s \mu_{AS}^{(\tau-1)}(s) \; \forall \tau \geq 2$.
This implies that $S_m^{(\tau-1)} \setminus S_u^{(\tau-1)} \subset S_m^{(\tau)} \; \forall \tau \geq 2$.
\end{customlemma}
\begin{proof}
Since the number of students decreases, Lemma~\ref{lem:gsAddMen} applies with $M_1 = S^{(\tau)} \subset S^{(\tau-1)} = M$.
$s \in S_m^{(\tau-1)} \setminus S_u^{(\tau-1)}$ means that $\mu_{AS}^{(\tau-1)}(s) >_s s$ and $s \in S^{(\tau)}$. Therefore, $\mu_{AS}^{(\tau)}(s) >_s s$, i.e., $s \in S_m^{(\tau)}$.
\end{proof}

\begin{customlemma}{\ref{lem:coadvisorCannotDecrease}}
A co-advisor's match on the $(S, C)$ market can only improve over iterations: $\mu_{SC}^{(\tau)}(c) \geq_s \mu_{SC}^{(\tau-1)}(c) \; \forall \tau \geq 2, c \in C$.
\end{customlemma}
\begin{proof}
Consider iteration $\tau$ and the $(S, C)$ market where students $S_m^{(\tau)}$ participate.
First let only students $S_m^{(\tau-1)} \cap S_m^{(\tau)}$ participate to obtain the temporary matching $\lambda_{SC}^{(\tau)}$. 
By Lemma~\ref{lem:studentStaysMatchedAtNextIteration}, $S_m^{(\tau-1)} \cap S_m^{(\tau)} = ((S_m^{(\tau-1)} \setminus S_u^{(\tau-1)}) \cap S_m^{(\tau)}) \cup ((S_m^{(\tau-1)} \cap S_u^{(\tau-1)}) \cap S_m^{(\tau)}) = S_m^{(\tau-1)} \setminus S_u^{(\tau-1)}$.
Since the students $S_u^{(\tau-1)}$ are not matched to any co-advisors, $\lambda_{SC}^{(\tau)} = \mu_{SC}^{(\tau-1)}$ on the market restricted to $(S_m^{(\tau-1)} \setminus S_u^{(\tau-1)}, C)$.
Let the remaining students $S_m^{(\tau)} \setminus S_m^{(\tau-1)}$ participate as well.
By Lemma~\ref{lem:gsAddMen}, since more students have joined the market, we must have $\mu_{SC}^{(\tau)}(s) \leq_s \lambda_{SC}^{(\tau)}(s) = \mu_{SC}^{(\tau-1)}(s) \; \forall s \in S_m^{(\tau-1)} \setminus S_u^{(\tau-1)}$. Similarly, we obtain $\mu_{SC}^{(\tau)}(c) \geq_c \mu_{SC}^{(\tau-1)}(c) \; \forall c \in C$.
\end{proof}

\begin{customthm}{\ref{thm:invariancePhDAlg}}
Assume that we replace the GS algorithms with any (individually rational) stable matching algorithms, possibly differing across iterations and different on both submarkets.
Then, this variant of the PhD algorithm returns a stable matching.
Moreover, all such variants match the same set of advisors, students and co-advisors.
\end{customthm}
\begin{proof}
Call \emph{variant} a variant of the algorithm with a specific choice of the stable matching algorithms.
We compare such a variant (Variant 1) to the variant where the GS algorithm is run with students proposing to co-advisors (Variant 2).
Fix an iteration $\tau$ of the PhD algorithm with students $S^{(\tau)}$ remaining.
By Proposition~\ref{prop:twoSidedSameMatched}, the set of matched persons is the same for any stable matchings on two-sided markets, so the set $S_m^{(\tau)}$ is independent of which stable matching algorithm is used on the market $(A, S^{(\tau)})$ at that iteration. Given $S_m^{(\tau)}$, the set of unmatched students $S_u^{(\tau)}$ is independent of the stable matching algorithm that is used on the $(S_m^{(\tau)}, C)$ market.
Therefore, both variants run for the same number of iterations and match the same students.\\
So the matching produced by the PhD algorithm only depends on the matching algorithms that are used at the last iteration.
Suppose Variant 1 returns matching $\lambda$ and Variant 2 returns matching $\mu$. Let $(a, s, c)$ be a blocking triple. We repeat the proof of Theorem~\ref{thm:stablePhDAlgorithm}.
If $s$ was not removed, we have the same conclusion because the matching mechanisms are stable. If $s$ was removed, we know from co-advisor pessimality of $\mu$     that
$\lambda(c)_S = \lambda_{SC}^{(T)}(c) \geq_c \mu_{SC}^{(T)}(c) >_c s$. Hence $\lambda(c)_S >_c s$ and $(a, s, c)$ cannot block $\lambda$.
\end{proof}

\paragraph{Nonstrict preferences}
In reality, people often have non-strict preferences. Here, we show that we can strictify the preferences and run the algorithms on the strict preferences. The resulting matching is then also stable under the original non-strict preferences.
Since the three-sided markets we consider are made up of two-sided markets, we only discuss non-strict preferences for two-sided markets.
A pre-order $\leq$ is a partial order, except that the implication $x \leq y \land y \leq x \implies x = y$ is not required. For person $p$ with total pre-order $\leq_p$, we define indifference $=_p$ by $p_1 =_p p_2 \iff p_1 \leq_p p_2 \land p_2 \leq_p p_1$. In words, $p$ is indifferent between $p_1$ and $p_2$. We also define strict preference $<_p$ by $p_1 <_p p_2 \iff p_1 \leq_p p_2 \land p_2 \nleq_p p_1$.
For the two-sided market, we assume that man $m$ has preferences $P(m)$, which is a total pre-order over $W \cup \{ m \}$ denoted by $\leq_m$, and woman $w$ has preferences $P(w)$, which is a total pre-order over $M \cup \{ w \}$ denoted by $\leq_w$.
We assume that a person is never indifferent between himself and anyone else, i.e., $w \neq_m m, m \neq_w w \; \forall m \in M, w \in W$.
Stability is defined as before in terms of strict inequalities $<_p$ rather than $\leq_p$.
Preferences are strictified by breaking some (or all) ties:
\begin{definition}
The total pre-order $\geq_{\tilde{P}}$ is a strictification of the total pre-order $\geq_P$ if $p_1 >_p^{P} p_2 \implies p_1 >_p^{\tilde{P}} p_2$ for person $p$ and $p_1, p_2$ are persons for which $>_p^P$ is defined. 
The superscript refers to the preferences which are used. 
Equivalently, $p_1 \geq_p^{\tilde{P}} p_2 \implies p_1 \geq_p^{P} p_2$.
\end{definition}

\begin{proposition}
\label{prop:stableUnderStrictification}
Consider a two-sided and let $\tilde{P}$ be a strictification of $P$ (for all persons). If the matching $\mu$ is stable under $\tilde{P}$, it is stable under $P$.
\end{proposition}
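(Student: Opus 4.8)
The plan is to argue the contrapositive: I would show that if $\mu$ is \emph{not} stable under $P$, then it is not stable under $\tilde{P}$ either. The key observation is that stability is defined entirely through \emph{strict} comparisons — individual rationality via $\mu(p) \geq_p p$ (with equality to self ruled out by the assumed no-self-indifference property), and the blocking condition via $w >_m \mu(m)$ and $m >_w \mu(w)$ — while the defining property of a strictification is exactly that strict preferences carry over in the direction $p_1 >_p^{P} p_2 \implies p_1 >_p^{\tilde{P}} p_2$. Hence every strict comparison that witnesses instability under $P$ remains valid under $\tilde{P}$.

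Concretely, suppose $\mu$ is unstable under $P$. If $\mu$ fails to be individually rational under $P$, there is a person $p$ with $p >_p^{P} \mu(p)$; applying the strictification implication gives $p >_p^{\tilde{P}} \mu(p)$, so $\mu$ is not individually rational under $\tilde{P}$. Otherwise $\mu$ is blocked under $P$ by some pair $(m,w) \in M \times W$, meaning $w >_m^{P} \mu(m)$ and $m >_w^{P} \mu(w)$. Applying the strictification implication to each of these two strict inequalities yields $w >_m^{\tilde{P}} \mu(m)$ and $m >_w^{\tilde{P}} \mu(w)$, so the \emph{same} pair $(m,w)$ blocks $\mu$ under $\tilde{P}$. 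In either case $\mu$ is unstable under $\tilde{P}$, which is precisely the contrapositive of the claim.

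I expect essentially no serious obstacle here; the entire content is keeping track of the direction of the implication. The one subtlety worth flagging is that the implication points the \emph{right} way for this argument exactly because blocking is a strict condition: a strictification may promote a tie to a strict preference, but it can never demote a strict preference to a tie or reverse it, so no new blocking pair (and no new individual-rationality violation) is created by passing from $\tilde{P}$ back to $P$. Any pair blocking under $P$ therefore already blocked under $\tilde{P}$. I would note for contrast that the reverse statement — stability under $P$ implies stability under $\tilde{P}$ — is false, since breaking a tie can manufacture a new blocking pair; this asymmetry is what makes strictify-then-solve a legitimate reduction. Finally, I would record that the no-self-indifference assumption is what lets me treat an individual-rationality violation as a genuine strict inequality $p >_p^{P} \mu(p)$, so that the same strictification implication applies in that case as well.
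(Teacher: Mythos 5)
Your proof is correct and follows essentially the same route as the paper's: both arguments push the strict inequalities witnessing instability under $P$ through the strictification implication $p_1 >_p^{P} p_2 \implies p_1 >_p^{\tilde{P}} p_2$ to exhibit the same blocking pair under $\tilde{P}$ (the paper phrases it as a contradiction and folds your individual-rationality case into the blocking condition via its parenthetical ``$m = w$ also possible''). Your extra remarks on the direction of the implication and the failure of the converse are accurate but not needed for the result.
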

\begin{proof}
By contradiction, assume that $\mu$ is not stable under $P$. Then, there must exist a blocking pair $(m, w)$ of $\mu$ such that $m >_w^{P} \mu(w)$ and $w >_m^{P} \mu(m)$ ($m = w$ also possible). This implies $m >_w^{\tilde{P}} \mu(w)$ and $w >_m^{\tilde{P}} \mu(m)$. Therefore, $(m, w)$ blocks $\mu$ under $\tilde{P}$ and $\mu$ is not stable under $\tilde{P}$.
\end{proof}

The GS algorithm deals with non-strict preferences by breaking ties before it is called. Since the tie breaking is arbitrary, it need not return a unique matching.
A similar result to Proposition~\ref{prop:stableUnderStrictification} holds for three-sided markets if the preferences on each two-sided market are strictified (in our setting of consistent preferences of students).
For the PhD algorithm, it is in fact necessary to strictify the preferences before it is called rather than at each iteration. Otherwise, the returned matching may not be stable.

\paragraph{Advisor-co-advisor symmetry}
We motivate our definition of three-sided stability given in Section~\ref{sec:threeSidedMarketsShort}, and we show that our definition of stability coincides with the one given in \cite{zhang2019Hybrid, huang2007two, manlove2017almost}, which treat advisors and co-advisors symmetrically. In fact, these definitions show that advisors and co-advisors could also be treated on an equal footing, unlike what one expects from our definition $(a, c) >_s (\tilde{a}, \tilde{c}) \iff a >_s^A \tilde{a} \lor (a = \tilde{a} \land c >_s^C \tilde{c})$ where students favor advisors over co-advisors.
By analogy with the two-sided stability definition, where we additionally allow only the advisor-student or the student-co-advisor side to change, we say that the mutually acceptable $(a, s, c)$ is a blocking triple if:
\begin{align*}
\begin{cases}
(a, c) >_s \mu(s)
\iff
a >_s \mu(s)_A \lor (a = \mu(s)_A \land c >_s \mu(s)_C) \\
s >_a \mu(a)_S \lor s =  \mu(a)_S \\
s >_c \mu(c)_S \lor s = \mu(c)_S.
\end{cases}.
\end{align*}
The previous can be reformulated as:
\begin{align*}
\begin{cases}
a >_s \mu(s)_A \\
s >_a \mu(a)_S \\
s >_c \mu(c)_S \\
c >_s \mu(s)_C \text{ (added)}
\end{cases}
\lor
\begin{cases}
c >_s \mu(s)_C \\
s = \mu(a)_S \\
s >_c \mu(c)_S
\end{cases}
\lor
\begin{cases}
a >_s \mu(s)_A \\
s >_a \mu(a)_S \\
s = \mu(c)_S
\end{cases}.
\end{align*}
We have added the line in the first clause as highlighted since it does not change the definition of stability.
If not, since $(s, c)$ are required to be mutually acceptable, $s <_s c \leq_s \mu(s)_C$ implies that $s$ is matched and is hence covered by the third clause already for $(a, s, \mu(s)_C)$. 
This corresponds to the description of stability in Section~\ref{sec:threeSidedMarketsShort}.
We can also arrive at this stability definition if we define the student's preferences symmetrically:
\begin{equation*} 
\begin{aligned}
(a, c) >_s (\tilde{a}, \tilde{c})
&\iff 
(a >_s \tilde{a} \land (c >_s \tilde{c} \lor c = \tilde{c})) \lor ((a >_s \tilde{a} \lor a = \tilde{a}) \land c >_s \tilde{c}).
\end{aligned}
\end{equation*}
The same matchings are stable, whether we define $>_s$ to favor advisors over co-advisors, or to treat both symmetrically.

\subsection{Efficient implementation of the PhD algorithm}\label{app:phdEfficientImpl}
todo: remove this sentence:
(See Proposition~\ref{prop:gsPhDEfficientImplementation})
Proposition~\ref{prop:gsPhDEfficientImplementation} misses some log factors which are corrected below.

\newcommand{\MFPAlg}{\textsc{MFP}}
\newcommand{\extractFromWlist}{\textsc{extractFromWlist}}
\newcommand{\addToWlist}{\textsc{addToWlist}}

To implement the PhD algorithm, we need to efficiently compute the matching after men and women join or leave the market.
For this, we modify the GS algorithm so that it starts from a given matching and keeps track of waiting lists.
Fix the full market $(M_f, W_f)$ that contains all persons. We consider a submarket $(M, W) \subset (M_f, W_f), M \subset M_f, W \subset W_f$ that consists of the active persons $M \cup W$ that we want to match.
Whenever a man proposes to a woman and is rejected, the woman puts the man on her waiting list. In the event that her matched man leaves the market later on, she can propose to the most preferred man on her waiting list to try to get a new match.
Note that a man only proposes if he is part of the active men $M$ ($m \in M$). A woman only accepts a man if he is better than her current match and she is active ($w \in W$). Otherwise, she puts the man on her waiting list.
We call the modified algorithm the \MFPAlg~algorithm (abbreviation for MatchFreePersons) which is shown in Algorithm~\ref{alg:mfpAlg}.
The function $\extractFromWlist(L(p))$ extracts and removes the most preferred person from the waiting list $L(p)$. If $L(p)$ is empty, it returns $p$ himself. The function $\addToWlist(L(p), \tilde{p})$ adds person $\tilde{p}$ to $p$'s waiting list $L(p)$.
\begin{algorithm}[ht]
\DontPrintSemicolon
\SetAlgoLined
\KwIn{marriage market $(M, W)$, matching $\mu$, waiting lists $L = \{ L(p) \}_{p \in M_f \cup W_f}$; see the text for the requirements on the input
}
\KwOut{matching $\mu$, waiting lists $L$}
\While{there is an unmatched man $m \in M$, i.e., $\mu(m) = \emptyset$}{
 \tcp{\footnotesize only active men propose}
 $w \gets \extractFromWlist(L(m))$ \;
 \uIf{$w = m$}{
   $\mu \gets \mu \cup \{ (m, m) \}$
 }
 \uElseIf{$m >_w \mu(w)$ and $w \in W$}{ \tcp{\footnotesize woman always rejects if she is not active}
    $\mu \gets \mu \setminus \{ (\mu(w), w) \} \cup \{ (m, w) \}$ \;
    \If{$\mu(w) \neq w$}{
      $\addToWlist(L(w), \mu(w))$
    }
 }
 \Else{
    \If{$m$ is acceptable to $w$}{
      $\addToWlist(L(w), m)$
    }
 }
}
\Return{$\mu$}
\caption{Marriage Market - MFP algorithm inspired by the GS algorithm \citep{gale1962college}
}
\label{alg:mfpAlg}
\end{algorithm}
Let $R(\mu) = \{ p \in M \cup W \mid \mu(p) \neq \emptyset \}$ the set of matched persons, including self-matches.
We require that the inputted matching $\mu$ (defined on the market $(M, W)$ and waiting lists $L$ satisfy the following:
\begin{itemize}
    \item The matching $\mu|_{R(\mu)}$ restricted to the market of matched persons in $\mu$ is stable and matches all women, i.e., $W \subset R(\mu) \subset M \cup W$.
    \item The waiting lists are compatible with matching $\mu$:
    \begin{itemize}
        \item The waiting list $L(p)$ of person $p \in M_f \cup W_f$ (in the full market) contains all matched active persons $\tilde{p} \in M \cup W$ such that $\tilde{p}$ prefers $p$ to his current match $\mu(\tilde{p})$. In other words, we require $(p >_{\tilde{p}} \mu(\tilde{p})) \land (\mu(\tilde{p}) \neq \emptyset) \implies \tilde{p} \in L(p) \; \forall p \in M_f \cup W_f, \forall \tilde{p} \in M \cup W$.
        \item Whenever $(m, w) \in M_f \times W_f$ are mutually acceptable, they are matched or one appears in the waiting list of the other. So we require $\mu(m) = w \lor m \in L(w) \lor w \in L(m) \; \forall m \in M_f, w \in W_f$.
    \end{itemize}
\end{itemize}
\begin{proposition}
If the matching $\mu_0$ and waiting lists $L_0$ satisfy these requirements, the outputted matching and waiting lists $\mu, L = \MFPAlg(M, W, \mu_0, L_0)$ also satisfy these requirements.
\end{proposition}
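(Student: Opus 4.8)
The plan is to prove that the three requirements form a loop invariant of the while loop in \MFPAlg, so that since the input $(\mu_0, L_0)$ satisfies them and each iteration preserves them, the returned $(\mu, L)$ satisfies them. Fix an iteration, let $m \in M$ be the processed unmatched active man and $w = \extractFromWlist(L(m))$. I would dispatch on the three branches ($w = m$; $w$ accepts; $w$ rejects) and verify the stability requirement together with the two waiting-list conditions in each, using strict individual rationality — which makes ``$p >_{\tilde p} \mu(\tilde p)$'' already imply that $p$ is acceptable to $\tilde p$ — so that acceptability need not be carried as a separate hypothesis on the preferring side.

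The structural parts are routine. A match is only ever broken for the displaced partner $\mu(w)$, which is a man (or $w$ herself, excluded by the ``$\mu(w)\neq w$'' guard), so no woman is ever freed and $W \subset R(\mu)$ persists; every newly matched person is active, so $R(\mu) \subset M \cup W$ persists. In the reject branch the matching is unchanged, hence $\mu|_{R(\mu)}$ stays stable, and the only bookkeeping is that deleting $w$ from $L(m)$ is safe because here $w$ either does not prefer $m$ or is inactive (so the first waiting-list condition never demanded $w \in L(m)$), while adding $m$ to $L(w)$ restores the mutual-acceptability condition that the now-violated $w \in L(m)$ had previously supplied. In the self-match branch, $L(m)=\emptyset$ together with the first waiting-list condition forces that no active matched woman prefers $m$ to her partner; this simultaneously rules out any blocking pair through the now self-matched $m$ and, via the second requirement applied to each mutually acceptable $w'$ (for which $w' \notin L(m)$ forces $m \in L(w')$), gives exactly what the first requirement now asks of the newly matched $m$.

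The heart of the argument is the accept branch, and it rests on one observation: because $w$ is extracted as the \emph{most preferred} entry of $L(m)$ and the first requirement places every active matched woman preferring $m$ inside $L(m)$, no woman $w''$ with $w'' >_m w$ prefers $m$ to her own partner. Hence no pair $(m,w'')$ with $w'' >_m w$ blocks the updated matching, and a pair $(m^*, w)$ is excluded because $w$'s partner strictly improved from $\mu(w)$ to $m$ and $\mu|_{R(\mu)}$ was stable, so $m^* >_w m$ would have produced a blocking pair already. The displaced $m'=\mu(w)$ leaves $R(\mu)$ and is therefore irrelevant to the stability of the restricted matching; placing $m'$ on $L(w)$ re-establishes the first waiting-list condition for $w$'s improved match, and the ``best in $L(m)$'' fact combined with the second requirement re-establishes it for $m$'s new match exactly as in the self-match branch.

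The main obstacle I expect is the accept branch's interaction between stability and list-completeness: I must show that after $m'$ is ejected, every candidate that could later restore a block for $m'$ is still recorded, i.e. that $L(m')$ stays complete so that when $m'$ is reprocessed in a later iteration he cannot miss a woman who would take him — this is precisely why the first requirement is an all-quantified implication rather than an equivalence. A secondary, purely bookkeeping subtlety is acceptability: the literal first requirement would demand $m \in L(w')$ even when $m$ is unacceptable to $w'$, which the algorithm's acceptability guard never produces, so I would read and maintain the requirement for candidates acceptable to the list owner, which suffices because a pair that is not mutually acceptable can never block. Finally, at loop exit there is no unmatched active man, so $R(\mu) = M \cup W$ and $\mu|_{R(\mu)} = \mu$, and the invariant then reads off verbatim as the three required properties of the output.
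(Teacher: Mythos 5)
Your proof is correct, and it is organized genuinely differently from the paper's. The paper argues at termination: it rules out a blocking pair $(m,w)$ of the final $\mu$ by splitting on whether $w$ was on $m$'s \emph{initial} list $L_0(m)$, invoking whole-run facts (``$m$ must have proposed to $w$'', and the monotonicity $\mu(w) \geq_w \mu_0(w)$ of women's matches), then separately verifies output list-compatibility by similar history-based contradictions, and handles the coverage property $\mu(m)=w \lor m\in L(w) \lor w\in L(m)$ with a one-line preservation remark. You instead promote all three requirements to a single loop invariant checked branch-by-branch; your key accept-branch step --- max-extraction from $L(m)$ together with completeness of $L(m)$ excludes any $w'' >_m w$ preferring $m$, and $w'\notin L(m)$ plus the coverage requirement forces $m\in L(w')$ --- replaces the paper's appeals to proposal history with purely local reasoning. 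What each buys: the paper's version is shorter and reads like the classical Gale--Shapley argument, but its monotonicity and ``must have proposed'' claims are themselves implicit inductions; your version makes the induction explicit, yields the strictly stronger fact that $\mu|_{R(\mu)}$ is stable after \emph{every} iteration (not only at termination, where $R(\mu)=M\cup W$ lets the invariant read off as the conclusion), and is closer to a mechanically checkable argument. Your acceptability patch is also well taken and not merely cosmetic: the literal first requirement is violated by the algorithm whenever a man proposes to a woman who finds him unacceptable (the \addToWlist\ guard skips him, yet he may later match below her), and the paper's own compatibility argument tacitly assumes acceptability in the step ``since he is not on $w$'s waiting list, he did not propose to $w$''; restricting the requirement to candidates acceptable to the list owner, as you do, is the reading under which both proofs are sound, and it costs nothing since non--mutually-acceptable pairs can never block.
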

\begin{proof}
Stability of $\mu$ can be proved along the lines of the proof of the traditional GS algorithm.
Since men only propose to acceptable women and women only accept more preferred men, $\mu(m) \geq_m m$ and $\mu(w) \geq_w \mu_0(w) \geq_w w$, the produced matching must be individually rational.
Suppose $(m, w)$ blocks $\mu$. Then, $w >_m \mu(m)$. If $w$ was on $m$'s waiting list, $m$ must have proposed to $w$, so $\mu(w) \geq_w m$. Hence $(m, w)$ cannot be a blocking pair.
If $w$ was not on $m$'s waiting list, this means $\mu_0(w) >_w m$ because the waiting list $L_0(m)$ is compatible with $\mu_0$. Hence $\mu(w) \geq_w \mu_0(w) >_w m$, so $(m, w)$ cannot be a blocking pair.
We now prove that the returned waiting lists are compatible. Note that $\mu(p) \neq \emptyset \; \forall p \in M \cup W$ once the algorithm terminates.
Consider the waiting list of man $m \in M_f$ and suppose that there exists a woman $w$ who is not on $m$'s waiting list such that $m >_w \mu(w)$. Hence $m >_w \mu_0(w)$, so $w$ was on $m$'s waiting list $L_0(m)$ by compatibility. This means that $m$ must have proposed to $w$, so $\mu(w) \geq_w m$, a contradiction.
Consider the waiting list of woman $w \in W_f$ and suppose that there exists a man $m \in M$ such that $w >_m \mu(m)$. If he is not on $w$'s waiting list $L(w)$, he was also not on $w$'s initial waiting list $L_0(w)$, so $w$ must have been on $m$'s waiting list $L_0(m)$ (by compatibility). Since he is not on $w$'s waiting list $L(w)$, this means that he did not propose to $w$ and is matched to $\mu(m) >_m w$, a contradiction.
Finally, the algorithm preserves the property $\mu(m) = w \lor m \in L(w) \lor w \in L(m) \; \forall m \in M_f, w \in W_f$. Indeed, when $w \neq m$ is removed from $m$'s waiting list, either $(m, w)$ are matched or $w$ rejects $m$ and puts $m$ on her waiting list.
\end{proof}

Given a stable matching $\mu$ on the market $(M, W)$ with waiting lists $L$, we can do the following:
\begin{itemize}
\item To add a man $m \in M_f \setminus M$, add the man to the market and run the algorithm: $\MFPAlg(M \cup \{ m \}, W, \tilde{\mu}, L)$, where $\tilde{\mu}(m) = \emptyset, \tilde{\mu}(p) = \mu(p) \; \forall p \in M \cup W$.
\item To add a woman $w \in W_f \setminus W$, add the woman to the market and run the algorithm with women proposing: $\MFPAlg(W \cup \{ w \}, M, \tilde{\mu}, L)$, where $\tilde{\mu}(w) = \emptyset, \tilde{\mu}(p) = \mu(p) \; \forall p \in M \cup W$.
\item To remove a man $m \in M$, put his matched woman $w$ on his waiting list (if any), unmatch $w$, remove the man from the market and run the algorithm with women proposing: $\MFPAlg(W, M \setminus \{ m \}, \tilde{\mu}, \tilde{L})$, where $\tilde{L}(m) = L(m) \cup \{ \mu(m) \}, \tilde{\mu}(\mu(m)) = \emptyset$ (if $\mu(m) \neq m$) and $\tilde{L}(p) = L(p), \tilde{\mu}(p) = \mu(p)$ otherwise, where $\tilde{\mu}$ is defined over $(W, M \setminus \{ m \})$.
\item To remove a woman $w \in W$, put her matched man $m$ on her waiting list (if any), unmatch $m$, remove the woman from the market and run the algorithm with men proposing: $\MFPAlg(M, W \setminus \{ w \}, \tilde{\mu}, \tilde{L})$, where $\tilde{L}(w) = L(w) \cup \{ \mu(w) \}, \tilde{\mu}(\mu(w)) = \emptyset$ (if $\mu(w) \neq w$) and $\tilde{L}(p) = L(p), \tilde{\mu}(p) = \mu(p)$ otherwise, where $\tilde{\mu}$ is defined over $(M, W \setminus \{ w \})$.
\end{itemize}
Provided that the input matching $\mu$ is stable and the waiting lists are compatible with $\mu$ on the market $(M, W)$:
\begin{itemize}
\item We add a man $m$ by calling $\MFPAlg(M \cup \{ m \}, W, \tilde{\mu}, L)$.
The only free person is $m$, so $R(\tilde{\mu}) = R(\mu)$ and $\tilde{\mu}|_{R(\tilde{\mu})} = \mu|_{R(\mu)}$ is stable. The waiting lists are compatible with $\tilde{\mu}$ since $\tilde{\mu}(p) = \mu(p) \; \forall p \in M \cup W$ and $\tilde{\mu}(m) = \emptyset$, and $L(m)$ adds $\mu(m)$ to his waiting list. 
\item We remove a man $m$ by calling $\MFPAlg(W, M \setminus \{ m \}, \tilde{\mu}, \tilde{L})$.
We have $R(\tilde{\mu}) = R(\mu) \setminus \{ m, \mu(m) \}$, so $\tilde{\mu}|_{R(\tilde{\mu})}$ is also stable.
The waiting lists are compatible.
\item The other two cases are very similar.
\end{itemize}
Hence, the returned matching and waiting lists satisfy the same property in all cases.
Therefore, iteratively calling these routines preserves the stability of $\mu$.

Define the waiting lists $L$ such that they contain all acceptable women for men, i.e., $L(m) = P(m) \; \forall m \in M$, and are empty for women, $L(w) = \emptyset$. The matching $\mu = \{ (w, w) \mid w \in W\}$ is stable on the market restricted to $R(\mu) = W$. The waiting lists are compatible with $\mu$ on the large market $(M_f, W_f) = (M, W)$.\footnote{
Note that we can also define the waiting lists and preferences on a bigger market $(M_f, W_f) \supset (M, W)$. The choice of $(M_f, W_f)$ determines the men and women we can add with the above procedure.
}
Then, the matchings $\GSalgShort(M, W, P)$ and $\MFPAlg(M, W, \{ (w, w) \mid w \in W\}, L)$ agree. This can also be obtained by starting from the empty market $(\emptyset, \emptyset)$ and adding men and women one-by-one.
In general, however, mixing these routines may not result in the GS matching as illustrated with the following example. Consider the market over men $M = \{ m_1, m_2, m_3 \}$ and women $W = \{ w_1, w_2, w_3 \})$ with preferences $P(m_1) = [ w_2, w_3 ], P(m_2) = [ w_1, w_2 ], P(m_3) = [ w_2, w_1 ], P(w_1) = [m_3, m_2], P(w_2) = [m_2, m_1, m_3], P(w_3) = [m_1]$ (listing acceptable partners from best to worse). We first compute $\MFPAlg(M, W, \{ (w, w) \mid w \in W\}, L)$ and then remove man $m_1$. This does not coincide with the matching $\MFPAlg(M \setminus \{ m_1 \}, W, \{ (w, w) \mid w \in W\}, L) = \GSalgShort(M \setminus \{ m_1 \}, W, P)$.

As discussed in Section~\ref{sec:phdAlgorithm}, at intermediate iterations of the PhD algorithm, we are only interested in finding any stable matching to determine the set of matched persons. So we can use the above procedure for all but the last iteration. At the last iteration, we can rerun the GS algorithm from scratch in complexity $\mathcal{O}( (|A| + |C|) |S|)$.
The waiting lists can be implemented as max-heaps, where the cost to insert or extract the maximum is $O(\log n)$ for a heap of size $n$.
On the $(A, S)$ market, all students are added at the beginning and removed over iterations whenever they find an advisor, but no co-advisor.
If students initially propose to advisors (students are the ``men''), the total cost is $\mathcal{O}(|A| |S| \log |S|)$ since the waiting list of each advisor has size at most $|S|$ and each advisor proposes to each student at most once (after the first iteration when students are removed).
If advisors initially propose to students (advisors are the ``men''), the total cost is $\mathcal{O}(|A| |S|)$ (since we can find the most preferred student using a list rather than a heap in $\mathcal{O}(1)$). This is because advisors just keep proposing when students are removed.
If students initially propose to co-advisors (students are the ``men''), the total cost is $\mathcal{O}(|C| |S| \log |S|)$ since the waiting list of each co-advisor has size at most $|S|$ and each co-advisor proposes to each student at most once (after the first iteration when students are removed).
If co-advisors initially propose to students (co-advisors are the ``men''), the total cost is $\mathcal{O}(|C| |S| \log |C|)$ since the waiting list of each student has size at most $|C|$ and each student proposes to each co-advisor at most once (after the first iteration when students are removed).
Hence, the minimum achievable cost for all but the last iteration is either $\mathcal{O}((|A| + |C| \log |C|) |S|)$ or $\mathcal{O}((|A| + |C| \log |S|) |S|)$.

\section{Quotas}\label{app:quotas}
In this section, we show how to adapt the setting to quotas. 
We first show how to extend the two-sided setting to quotas without passing through the extended market construction \citep{roth1984evolution}. This has the advantage that we can enforce the same man and woman not to be matched more than once. In the extended market construction, each person is replicated according to his quota to obtain a new market without quotas. This discards the connection between replicas of the same person, so the same man and woman may be matched more than once.
Then, we do the same for three-sided markets based on two-sided markets. The main difference is the stability definition.
Finally, we proceed as in Section~\ref{sec:phdAlgorithm} to give an explicit proof for the setting with quotas.

\subsection{Two-sided markets}
For a review of many-to-many two-sided markets, see \citep{roth1991natural, echenique2004theory}. 
Roughly, person $p$ has $q_p \in \mathbb{N}$ spots, so can be matched up to $q_p$ times. A matching is unstable if there exist a man $m$ and woman $w$ such that they prefer to be matched together rather than to one of their current partners. This assumes that person $p$ has $q_p$ partners, where we fill empty spots with $p$ himself up to quota $q_p$.
In our setting, we assume that persons treat all their spots equally such that their preferences over groups of people are determined by their preferences over people.
In this setting, the algorithms presented in \citep{roth1991natural, echenique2004theory} simplify considerably.
In fact, we can find stable matchings by modifying the one-to-one GS algorithm. This is depicted in Algorithm~\ref{alg:galeShapleyQuotas}.
\begin{algorithm}[ht]
\DontPrintSemicolon
\SetAlgoLined
\KwIn{market $(M, W, P, Q)$, quotas $Q$}
\KwOut{matching $\mu$}
$\mu \gets \cup_{w\in W}\cup_{i=1}^{q_w} \{(w,w)\}$\; 
\While{there is a man $m$ with some unmatched spots, i.e., $|\mu(m)| < q_m$}{
 $w \gets \offerNext(P, m)$ \;
 \uIf{$w = m$}{
  $\mu \gets \mu \cup \{ (m, m) \}$ 
 }
 \ElseIf{$m >_w m' = \weakestMatch(\mu(w), P(w))$}{
    $\mu \gets \mu \setminus \{ (m', w) \} \cup \{ (m, w) \}$ 
 }
}
\Return{$\mu$}
\caption{Extended Gale--Shapley algorithm on many-to-many markets}
\label{alg:galeShapleyQuotas}
\end{algorithm}
Each woman $w$ is initially matched to herself $q_w$ times.
Whenever a man is not matched $q_m$ times (to women or himself), he proposes to his most preferred woman among the women to whom he did not propose yet (function $\offerNext$).
If that woman prefers the man to any partner (a man or herself) she is currently matched with, she unmatches from this partner and matches with the new man.
If a man has proposed to all his acceptable women, he fills the remaining spots with himself (since he is always acceptable to himself).
The function $\weakestMatch(\mu(w), P(w))$ returns the (currently) weakest partner of woman $w$ in $\mu(w)$, where, during the execution of the algorithm, we define $\mu(m) = \{ m \mid (m, w) \in \mu \}$ in terms of the temporary $\mu$ defined in the algorithm.

Stability of the matching and optimality are a direct consequence of the results from \cite{echenique2004theory}. Alternatively, the close resemblance to the one-to-one GS algorithm makes it possible to extend the existing one-to-one proofs for the GS algorithm \citep{gale1962college, knuth1997stable, sotomayor1990two}, which is more explicit. Similarly to Proposition~\ref{prop:twoSidedSameMatched}, it follows that the set of matched persons is also the same across all stable matchings.
We will use the optimality properties to prove stability of the PhD algorithm in the three-sided setting.
Another setting is when man $m$ and woman $w$ are allowed to match at most $q_{m, w}$ times. In this case, Algorithm~\ref{alg:galeShapleyQuotas} can be adapted such that a man proposes to his most preferred woman $w$ that did not reject him $q_{m, w}$ times already. Again, this results in stable and optimal matchings.

\subsection{Three-sided markets}
The setup with quotas in three-sided markets is very similar to the one in two-sided markets. Each person $p$ has $q_p \in \mathbb{N}$ spots. In particular, students have the same number of spots in both two-sided markets. A matching matches each person at most $q_p$ times.

For person $p$, define the preference between an individual $p_1$ and a set $K$ as $p_1 >_p K \iff \exists p_2 \in K: p_1 >_p p_2$. 
With this, we define stability for arbitrary quotas.
\begin{definition}
\label{def:blockingTripleQuotas}
The triple $(a, s, c) \in A \times S \times C$ blocks $\mu$ if there exists $k \in \{ 1, \dots, q_s \}$ such that:
\begin{align*}
\text{if $\mu(s)_k = (s, s)$ (unmatched):} &
&\begin{cases}
a >_s \mu(s)_{k, A} \\
s >_a \mu(a)_S
\end{cases}
\text{ and }
&\begin{cases}
c >_s \mu(s)_{k, C} \\
s >_c \mu(c)_S
\end{cases}, \\
\text{or if $\mu(s)_k \neq (s, s)$ (matched):} &
&\begin{cases}
a >_s \mu(s)_{k, A} \\
s >_a \mu(a)_S
\end{cases}
\text{ or }
&\begin{cases}
c >_s \mu(s)_{k, C} \\
s >_c \mu(c)_S
\end{cases}.
\end{align*}
Here, we assume that $\mu(s)$ is ordered (arbitrarily) so that $\mu(s)$ can be indexed.
Then $\mu(s)_{k, A}$ is the advisor that is matched to the $k$-th spot of $s$ in $\mu$.
\end{definition}

We have the following Fact. We omit the proof.
\begin{fact}
If $(a, s, c)$ blocks $\mu$ on the market $(A, S, C, P, Q)$, one of the following must hold:
\label{fact:blockingTripleImpliesPairQuotas}
\begin{align*}
\begin{cases}
a >_s \mu(s)_A \\
s >_a \mu(a)_S \\
\end{cases}
\textrm{or} \quad
\begin{cases}
a \in \mu(s)_A \\
c >_s \mu(s)_C \\
s >_c \mu(c)_S
\end{cases}.
\end{align*}
In other words, either $(a, s)$ blocks $\mu|_{(A, S, Q_{AS})}$ or $(s, c)$ blocks $\mu|_{(S, C, Q_{SC})}$, where the quotas $Q_{AS}, Q_{SC}$ on the markets $(A, S)$ and $(S, C)$ satisfy
\begin{itemize}
\item $Q_{AS}(p) = Q(p) \; \forall p \in A \cup \{ s \}$ and,
\item $Q_{SC}(p) = Q(p)\; \forall p \in C$, $Q_{SC}(s) \geq \countMatches(\mu, s)$, where $\countMatches(\mu, s)$ counts the number of times that $s$ is matched in $\mu$ excluding himself.
\end{itemize}
\end{fact}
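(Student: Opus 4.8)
The plan is to prove Fact~\ref{fact:blockingTripleImpliesPairQuotas} by unfolding Definition~\ref{def:blockingTripleQuotas} and performing a case analysis on the spot $k \in \{ 1, \dots, q_s \}$ that witnesses the block, translating each case into a two-sided blocking pair on the appropriate restricted market. Throughout I would adopt the convention that each empty (self-matched) spot of $s$ contributes $s$ itself to $\mu(s)_A$ and to $\mu(s)_C$, so that the set-preference relation $>_s$ introduced just before the definition behaves as intended (recall $p_1 >_p K \iff \exists p_2 \in K : p_1 >_p p_2$).

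First I would treat the case $\mu(s)_k = (s,s)$, where the definition gives the \emph{conjunction} of both the advisor conditions $a >_s \mu(s)_{k,A}$, $s >_a \mu(a)_S$ and the co-advisor conditions. Since $\mu(s)_{k,A} = s$ lies in $\mu(s)_A$, the inequality $a >_s s$ immediately yields $a >_s \mu(s)_A$, and together with $s >_a \mu(a)_S$ this is exactly the statement that $(a,s)$ blocks $\mu|_{(A,S,Q_{AS})}$ with $Q_{AS}(p) = Q(p)$, landing in the first alternative of the Fact.

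Next I would treat $\mu(s)_k \neq (s,s)$, which is a \emph{disjunction}. In the advisor branch ($a >_s \mu(s)_{k,A}$ and $s >_a \mu(a)_S$) I would use $\mu(s)_{k,A} \in \mu(s)_A$ to obtain $a >_s \mu(s)_A$, again giving the first alternative. In the co-advisor branch ($c >_s \mu(s)_{k,C}$ and $s >_c \mu(c)_S$) the advisor is not improved, so by the consistency of $s$'s preferences a spot-$k$ improvement that does not come from the advisor must keep the advisor fixed, i.e.\ $a = \mu(s)_{k,A} \in \mu(s)_A$ (this mirrors the $s = \mu(a)_S$ clause appearing in the non-quota reformulation of stability in Appendix~\ref{app:missingProofs}). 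From $c >_s \mu(s)_{k,C}$, with $\mu(s)_{k,C} \in \mu(s)_C$, I get $c >_s \mu(s)_C$, and with $s >_c \mu(c)_S$ this is the second alternative, $(s,c)$ blocks $\mu|_{(S,C,Q_{SC})}$, once the quota of $s$ on the co-advisor market is set large enough to host his current co-advisors, namely $Q_{SC}(s) \geq \countMatches(\mu,s)$, so that $\mu|_{(S,C,Q_{SC})}$ is a legal matching on that market.

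I expect the main obstacle to be the co-advisor branch, and specifically the justification of $a \in \mu(s)_A$: the literal disjunction in Definition~\ref{def:blockingTripleQuotas} places no visible constraint on $a$ in this branch, yet the conclusion of the Fact fails for a freely chosen $a$ (for instance an $a$ ranked below all of $s$'s current advisors and absent from $\mu(s)_A$). Making the step airtight therefore requires reading the co-advisor branch through consistent preferences, where an improvement obtained by changing only the co-advisor forces the advisor on that spot to be exactly $a$, exactly as in the three-term reformulation of stability given in the advisor--co-advisor symmetry discussion. The remaining work is bookkeeping: checking that each translated condition matches the many-to-many notion of ``blocks'' (with self-matches filling unused spots) and that the choice of restricted quotas $Q_{AS}, Q_{SC}$ keeps both restricted two-sided matchings well-defined.
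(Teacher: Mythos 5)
The paper itself gives no proof of this Fact (``We have the following Fact. We omit the proof.''), so there is no official argument to compare against. Judged on its own, your proof is correct and is surely the intended one: unfold Definition~\ref{def:blockingTripleQuotas}, split on whether the witnessing spot $k$ is self-matched, and convert each branch into a two-sided blocking pair via $\mu(s)_{k,A} \in \mu(s)_A$ and $\mu(s)_{k,C} \in \mu(s)_C$, with the self-fill convention for empty spots (which matches the paper's convention ``we fill empty spots with $p$ himself up to quota $q_p$''). The unmatched case and the matched advisor branch go through exactly as you write them, and your quota bookkeeping ($Q_{AS}=Q$ on $A \cup \{s\}$; $Q_{SC}(s) \geq \countMatches(\mu,s)$ so that $\mu|_{(S,C,Q_{SC})}$ is a legal matching on the co-advisor market) is precisely what the statement asks for.

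Your diagnosis of the co-advisor branch deserves emphasis, because it is a genuine defect of the paper's definition rather than of your proof. As literally written, the matched case of Definition~\ref{def:blockingTripleQuotas} places no condition on $a$ in the branch $c >_s \mu(s)_{k,C}$, $s >_c \mu(c)_S$, and then the Fact is simply false: take $s$ and $a$ each filled to quota with partners they strictly prefer to the other (so $s >_a \mu(a)_S$ fails and the first alternative dies) and $a \notin \mu(s)_A$ (so the second alternative dies), while some matched spot $k$ of $s$ and some co-advisor $c$ still satisfy the two co-advisor inequalities. Note that even importing mutual acceptability of the triple from Section~\ref{sec:threeSidedMarketsShort} does not repair this—acceptability of $a$ to $s$ and of $s$ to $a$ creates no free spot on either side—so the only reading under which the Fact is provable is the one you adopt: in the co-advisor branch the advisor must be the one occupying spot $k$, i.e.\ $a = \mu(s)_{k,A}$, mirroring the $s = \mu(a)_S$ clause in the three-clause reformulation of stability in Appendix~\ref{app:missingProofs}. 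With that reading fixed, your case analysis is airtight, and it is also the reading consistent with how the Fact is invoked in the proof of Theorem~\ref{thm:stablePhDAlgorithmQuotas}.
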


In the PhD algorithm without quotas, a student is removed if he does not find a co-advisor. Here, when a student has a quota on the $(S, C)$ market and does not use all of his quota, we reduce his quota on the $(A, S)$ market by the number of unfilled spots.
More precisely, say a student matches with $k_a$ advisors. Then, he is assigned quota $k_a$ on the student-co-advisor market. If this student then matches to $k_c \leq k_a$ co-advisors, reduce his capacity on the advisor market by $k_a - k_c$ (instead of removing the student).
This is depicted in Algorithm~\ref{alg:phdAlgorithmAllWithQuotas}.
The function $\countMatches(\mu, s)$ counts the number of matches of $s$ excluding self-matches. We use dictionary notation to describe the quota function.
The function $Q.update$ updates quotas $Q$ by replacing the quotas of the specified persons only.
$Q|_{(A, S)}$ are the quotas restricted to the market $(A, S)$, $\tilde{Q}_{SC}$ are the quotas defined in the algorithm on the market $(S, C)$. 

\begin{algorithm}[ht]
\DontPrintSemicolon
\SetAlgoLined
\KwIn{PhD market $(A, S, C, P, Q)$}
\KwOut{stable matching $\mu$}
strictify the preferences $P_{AS}$ on $(A, S)$ and $P_{SC}$ on $(S, C)$ \;
\Do{$Q|_{(S, C)} \neq \tilde{Q}_{SC}$}{
$\mu_{AS} \gets \GSalgShort(M = S, W = A, P_{AS}, Q|_{(A, S)})$.\;
$\tilde{Q}_{SC} \gets \{ s: \countMatches(\mu_{AS}, s) \mid s \in S \} \cup \{ c: q_c \mid c \in C \}$ \;
$\mu_{SC} \gets \GSalgShort(M = S, W = C, P_{SC}, \tilde{Q}_{SC})$.\;
$Q.update(\{ s: Q(s) + \countMatches(\mu_{SC}, s) - \tilde{Q}_{SC}(s) \mid s \in S\})$
}
let matching $\mu$ such that student $s$ is matched to advisors $\mu_{AS}(s)$ and co-advisors $\mu_{SC}(s)$, fill the remaining spots of all persons with themselves\;
\Return{$\mu$}
\caption{PhD Market - PhD algorithm with quotas}
\label{alg:phdAlgorithmAllWithQuotas}
\end{algorithm}



Before proving stability, we establish a few properties. 
The proof is omitted for some of them.
At iteration $\tau$, define $q_s^{(\tau)}$ and $\tilde{q}_s^{(\tau)}$ as the capacities of student $s$ in the $(A, S)$ and $(S, C)$ market:
\begin{align*}
q_s^{(\tau)} &= q_s^{(\tau-1)} + \countMatches(\mu_{SC}^{(\tau-1)}, s) - \tilde{q}_s^{(\tau-1)} \quad \text{ for } \tau > 1 \;,\\
\tilde{q}_s^{(\tau)} &= \countMatches(\mu_{AS}^{(\tau)}, s) \;.
\end{align*}
Since $\countMatches(\mu, s)$ is upper-bounded by the capacity of $s$, we have $\tilde{q}_s^{(\tau)} \leq q_s^{(\tau)}$ and $q_s^{(\tau-1)} \geq q_s^{(\tau)} \geq \countMatches(\mu_{SC}^{(\tau-1)}, s)$ (for $\tau > 1$).
Note that $\tilde{q}_s^{(\tau)}$ can increase or decrease over iterations.

Based on this, Lemmas~\ref{lem:gsAddMen},~\ref{lem:studentStaysMatchedAtNextIteration} and~\ref{lem:coadvisorCannotDecrease} can be adapted as follows. 
Assume strict preferences.
As before, the match partners of $p$, i.e. $\mu(p), \mu_{AS}(p), \mu_{SC}(p)$, are ordered from best to worst.
For example, the matching $\mu_{AS}^{(\tau)}(s)$ of student $s$ at iteration $\tau$ is ordered in terms of $\geq_s$ from the most preferred to the least preferred advisor.
\begin{lemma}
\label{lem:gsAddMenQuotas}
Let a market $(M, W, P, Q)$ and quotas $Q_1, Q_2$ such that $Q_1(m) \leq Q(m) \; \forall m \in M, Q_1(w) = Q(w) \; \forall w \in W$ and $Q_2(m) = Q(m) \; \forall m \in M, Q_2(w) \leq Q(w) \; \forall w \in W$.
Let $\mu = \GSalgShort(M, W, Q)$ and $\mu_i = \GSalgShort(M, W, Q_i), i = 1, 2$.
Independently of whether men or women propose (but the same for $(\mu_1, \mu)$ and $(\mu_2, \mu)$ respectively), it holds that 
$\mu_1(m)_k \geq_m \mu(m)_k \; \forall m \in M, k = 1, \dots, Q_1(m), \mu_1(w)_k \leq_w \mu(w)_k \; \forall w \in W, k = 1, \dots, Q(w)$
and
$\mu_2(m)_k \leq_m \mu(m)_k \; \forall m \in M, k = 1, \dots, Q(m), \mu_2(w)_k \geq_w \mu(w)_k \; \forall w \in W, k = 1, \dots, Q_2(w)$.
\end{lemma}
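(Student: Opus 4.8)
The plan is to reduce Lemma~\ref{lem:gsAddMenQuotas} to its one-to-one counterpart Lemma~\ref{lem:gsAddMen} via the extended market construction, rather than redoing the proposal-rejection bookkeeping from scratch. Recall that a many-to-many market with quotas $Q$ can be turned into a one-to-one market by replicating each person $p$ into $q_p$ copies $p^1, \dots, p^{q_p}$, each inheriting $p$'s preferences, and having every replica of one side rank the replicas of the other side consistently (e.g.\ lexicographically by the original person, then by replica index). The extended-market GS matching, read back in the original market, coincides with the output of Algorithm~\ref{alg:galeShapleyQuotas} after sorting each person's partner multiset by $\geq_p$; this is exactly the correspondence invoked in the two-sided quota subsection. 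The indexed comparison $\mu_1(m)_k \geq_m \mu(m)_k$ then becomes the replica-level comparison $\mu_1(m^k) \geq_{m^k} \mu(m^k)$ in the extended market.

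First I would fix the replica tie-breaking so that lowering $Q_1(m)$ below $Q(m)$ corresponds precisely to \emph{deleting the top replicas} $m^{Q_1(m)+1}, \dots, m^{Q(m)}$ from the extended man-side $M_1 \subset M$, while the extended woman-side is unchanged. Symmetrically, $Q_2(w) \leq Q(w)$ corresponds to deleting woman-replicas, giving $W_1 \subset W$ in the extended market. With this setup, Lemma~\ref{lem:gsAddMen} applied to the extended market yields directly $\mu_1(m^k) \geq_{m^k} \mu(m^k)$ for every surviving man-replica and $\mu_1(w^k) \leq_{w^k} \mu(w^k)$ for every woman-replica, and the analogous two inequalities for $\mu_2$. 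Translating replica indices back to the $k$-th best partner of each original person gives the four claimed inequalities.

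The main obstacle is ensuring that the \emph{sorted} partner lists $\mu(p)_k$ in the original market line up with the \emph{per-replica} partners $\mu(p^k)$ in the extended market, so that the replica-level inequalities transfer to the sorted-index inequalities the lemma asserts. This requires that within the extended GS matching, replica $p^k$ receives $p$'s $k$-th best partner (counting from the top) under the chosen consistent replica ordering; one verifies this by an exchange/monotonicity argument on replicas of a single person, using that all replicas share identical preferences and the optimality of GS. A clean way to avoid index gymnastics is to phrase everything as a comparison of the sorted multisets $\mu_1(p)$ and $\mu(p)$ componentwise, which is precisely what the replica correspondence delivers once the tie-break is fixed; I would state this correspondence as the single technical lemma doing the real work and then read off all four inequalities from Lemma~\ref{lem:gsAddMen}. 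The remaining steps — symmetry between the $\mu_1$ and $\mu_2$ cases, and independence of which side proposes — are inherited verbatim from Lemma~\ref{lem:gsAddMen}, since the extended-market reduction is agnostic to the proposing side.
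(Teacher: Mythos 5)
The paper never writes out a proof of Lemma~\ref{lem:gsAddMenQuotas} (it is among the statements whose ``proof is omitted''); the surrounding text makes clear the intended argument is a direct adaptation of the execution-based proof of Lemma~\ref{lem:gsAddMen} to the modified GS of Algorithm~\ref{alg:galeShapleyQuotas} (let the reduced market run first, then continue proposing; monotonicity of rejections). Your reduction through the replicated market of \citep{roth1984evolution} is therefore a genuinely different route, and within its scope it is workable: with your consistent lexicographic tie-breaking, any stable matching of the replicated market serves the replicas of a fixed person in sorted order (adjacent replicas in everyone's lists force this by a one-line blocking-pair argument), lowering $Q_1(m)$ corresponds to deleting the replicas $m^{Q_1(m)+1},\dots,m^{Q(m)}$, and Lemma~\ref{lem:gsAddMen} then yields all four indexed inequalities, for either proposing side.

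The genuine gap is your opening claim that the extended-market GS matching, read back in the original market, ``coincides with the output of Algorithm~\ref{alg:galeShapleyQuotas}''---and that this is ``exactly the correspondence invoked'' by the paper. The paper says the opposite: replication loses the identity between copies of the same person and therefore ``allows the same man and woman to be matched more than once,'' which is precisely why Algorithm~\ref{alg:galeShapleyQuotas} (where a man proposes to each woman at most once) is introduced as an alternative to the extended market. Concretely, take one man $m$ with $q_m = 2$ and women $w_1, w_2$ with $q_{w_1} = 2$, $q_{w_2} = 1$, $w_1 >_m w_2$, all mutually acceptable: the extended market matches both replicas of $m$ to replicas of $w_1$, i.e., $m$ is matched to $w_1$ twice, whereas Algorithm~\ref{alg:galeShapleyQuotas} outputs $\{(m, w_1), (m, w_2)\}$. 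So your reduction proves the lemma only for the variant in which the same pair may match repeatedly, while the appendix explicitly asserts the statement for both conventions (``it does not matter whether the same man and woman are allowed to match more than once or not''), and the PhD algorithm with quotas is meant to support either. The at-most-once constraint---no two replicas of $m$ may hold replicas of the same $w$---is matching-dependent and cannot be encoded as fixed ordinal preferences of the replicas, so the reduction cannot be patched for that variant; there you must fall back on the paper's intended direct adaptation of the proof of Lemma~\ref{lem:gsAddMen}, which handles both conventions uniformly. Secondarily, the sorted-replica correspondence you correctly flag as ``the main obstacle'' is only sketched, but that part is fillable along the lines you indicate.
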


\begin{lemma}
\label{lem:studentStaysMatchedAtNextIterationProjectAlg}
A student's matches (excluding self-matches) on the $(A, S, Q)$ market cannot decrease over iterations.
More precisely, for student $s$ with capacity $q_s^{(\tau)}$ at iteration $\tau \geq 2$, we have $\mu_{AS}^{(\tau)}(s)_{k} \geq_s \mu_{AS}^{(\tau-1)}(s)_{k} \; \forall k = 1, \dots, q_s^{(\tau)}$.
This implies that 
\[
\countMatches(\mu_{AS}^{(\tau)}, s) \geq \min(q_s^{(\tau)}, \countMatches(\mu_{AS}^{(\tau-1)}, s)) \;.
\]
\end{lemma}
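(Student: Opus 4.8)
The plan is to mirror the non-quota argument of Lemma~\ref{lem:studentStaysMatchedAtNextIteration}, replacing the \emph{removal} of students by the \emph{reduction} of their capacities and invoking the quota monotonicity result, Lemma~\ref{lem:gsAddMenQuotas}, in place of Lemma~\ref{lem:gsAddMen}. The first observation I would record is that, on the $(A,S)$ market, students play the role of the proposing-or-not ``men'' and advisors the role of the ``women'', and that across consecutive iterations the only thing that changes on this market is the students' capacities: the sets $A$ and $S$, the preferences $P_{AS}$, and the advisors' quotas all stay fixed (only $Q(s)$ is updated via $Q.update$ in Algorithm~\ref{alg:phdAlgorithmAllWithQuotas}), while the student capacities satisfy $q_s^{(\tau)} \leq q_s^{(\tau-1)}$ by the capacity inequality established just above, $q_s^{(\tau-1)} \geq q_s^{(\tau)} \geq \countMatches(\mu_{SC}^{(\tau-1)}, s)$ for $\tau > 1$.

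With this setup, I would apply Lemma~\ref{lem:gsAddMenQuotas} with the men taken to be the students, the market $(A,S,P_{AS})$ held fixed, $Q$ the quota profile at iteration $\tau-1$ and $Q_1$ the pointwise smaller profile at iteration $\tau$, so that $Q_1(s) = q_s^{(\tau)} \leq q_s^{(\tau-1)} = Q(s)$ and the advisor quotas coincide. The first bullet of Lemma~\ref{lem:gsAddMenQuotas} then yields exactly the claimed position-wise domination $\mu_{AS}^{(\tau)}(s)_k \geq_s \mu_{AS}^{(\tau-1)}(s)_k$ for $k = 1, \dots, q_s^{(\tau)}$: intuitively, shrinking a man's quota can only discard his least preferred matches, so the surviving spots are filled at least as well as before.

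For the counting consequence, I would use the convention that $\mu_{AS}^{(\tau)}(s)$ is listed from most to least preferred and that, under strict preferences, individual rationality forces every genuine advisor match to satisfy $a >_s s$; hence the self-matches occupy the trailing spots and $\countMatches(\mu_{AS}^{(\tau)}, s)$ equals the number of leading entries $k$ with $\mu_{AS}^{(\tau)}(s)_k >_s s$. For every $k \le \min\brb{q_s^{(\tau)}, \countMatches(\mu_{AS}^{(\tau-1)}, s)}$ the entry $\mu_{AS}^{(\tau-1)}(s)_k$ is a genuine match, so $\mu_{AS}^{(\tau-1)}(s)_k >_s s$, and the domination gives $\mu_{AS}^{(\tau)}(s)_k \geq_s \mu_{AS}^{(\tau-1)}(s)_k >_s s$; thus spot $k$ is also genuine at iteration $\tau$. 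Counting these spots yields $\countMatches(\mu_{AS}^{(\tau)}, s) \geq \min\brb{q_s^{(\tau)}, \countMatches(\mu_{AS}^{(\tau-1)}, s)}$.

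The main obstacle I anticipate is bookkeeping rather than conceptual: one must identify the correct proposing side (students are the ``men'' on $(A,S)$) and the correct direction of the quota change (student capacities \emph{decrease} across iterations), since swapping either would reverse the inequalities. One should also confirm that nothing else on the $(A,S)$ market drifts between iterations, so that Lemma~\ref{lem:gsAddMenQuotas}, which compares two runs on a single fixed market differing only in quotas, applies verbatim.
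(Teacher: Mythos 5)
Your proof is correct and takes essentially the same route as the paper, whose entire argument is the one-line observation that $q_s^{(\tau)} \leq q_s^{(\tau-1)}$ and an invocation of Lemma~\ref{lem:gsAddMenQuotas} with students as the quota-reduced side. Your additional bookkeeping — checking that only student capacities change on the $(A,S)$ market, and deriving the $\countMatches$ inequality from position-wise domination plus individual rationality — just makes explicit what the paper leaves implicit.
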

\begin{proof}
Since $q_s^{(\tau)} \leq q_s^{(\tau-1)} \; \forall s$, it follows from Lemma~\ref{lem:gsAddMenQuotas}. Intuitively, the $q_s^{(\tau)}$ spots of student $s$ face less competition from other students than in the previous iteration.
\end{proof}

\begin{lemma}
\label{lem:coadvisorCannotDecreaseProjectAlg}
A co-advisor's matches on the $(S, C, Q)$ market cannot decrease over iterations.
More precisely, for co-advisor $c$ at iteration $\tau > 1$, $\mu_{SC}^{(\tau)}(c)_{k} \geq_c \mu_{SC}^{(\tau-1)}(c)_{k} \; \forall k = 1, \dots, q_c$. 
\end{lemma}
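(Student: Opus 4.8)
The plan is to mirror the proof of Lemma~\ref{lem:coadvisorCannotDecrease}, replacing its ``remove the unmatched students $S_u^{(\tau-1)}$'' step by a ``shrink each student's capacity down to the number of co-advisor spots he actually filled'' step, and then invoking the quota monotonicity result Lemma~\ref{lem:gsAddMenQuotas}. Throughout I fix a co-advisor $c$ and an iteration $\tau > 1$, working on the $(S, C, Q)$ market where each student $s$ participates with capacity $\tilde{q}_s^{(\tau-1)}$ at iteration $\tau-1$ and $\tilde{q}_s^{(\tau)}$ at iteration $\tau$, while every co-advisor keeps the fixed capacity $q_c$.

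First I would introduce the intermediate capacity $\hat{q}_s = \countMatches(\mu_{SC}^{(\tau-1)}, s)$, the number of co-advisors that $s$ actually matched at iteration $\tau-1$, and consider the auxiliary matching $\lambda = \GSalgShort(S, C, \hat{q})$. The key claim, the quota analog of ``the removed students were never matched to co-advisors,'' is that $\lambda$ agrees with $\mu_{SC}^{(\tau-1)}$ on their matched spots, i.e. $\lambda(c)_k = \mu_{SC}^{(\tau-1)}(c)_k$ for all $k$. Since $\hat{q}_s \le \tilde{q}_s^{(\tau-1)}$ for every $s$, passing to $\hat{q}$ only discards the spots of $s$ that $\mu_{SC}^{(\tau-1)}$ left self-matched. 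By the man-optimality and order-independence of the many-to-many GS algorithm I may assume each student exhausts his proposals; a self-matched spot then corresponds to a student who was rejected for that spot by every acceptable co-advisor, so deleting it changes no co-advisor's set of received proposals and hence no co-advisor's match.

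Next I would verify that capacities only grow when passing from $\hat{q}$ to $\tilde{q}^{(\tau)}$, i.e. $\hat{q}_s \le \tilde{q}_s^{(\tau)}$ for every $s$. By Lemma~\ref{lem:studentStaysMatchedAtNextIterationProjectAlg}, $\tilde{q}_s^{(\tau)} = \countMatches(\mu_{AS}^{(\tau)}, s) \ge \min\!\bigl(q_s^{(\tau)}, \countMatches(\mu_{AS}^{(\tau-1)}, s)\bigr) = \min\!\bigl(q_s^{(\tau)}, \tilde{q}_s^{(\tau-1)}\bigr)$. Both arguments of the minimum dominate $\hat{q}_s$: we have $q_s^{(\tau)} \ge \countMatches(\mu_{SC}^{(\tau-1)}, s) = \hat{q}_s$ by the stated capacity inequality, and $\tilde{q}_s^{(\tau-1)} \ge \countMatches(\mu_{SC}^{(\tau-1)}, s) = \hat{q}_s$ since a student cannot fill more $(S,C)$ spots than his capacity $\tilde{q}_s^{(\tau-1)}$. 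Hence $\tilde{q}_s^{(\tau)} \ge \hat{q}_s$.

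Finally I would apply Lemma~\ref{lem:gsAddMenQuotas} with the students in the role of the men: taking $Q = \tilde{q}^{(\tau)}$ and the smaller men-capacity $Q_1 = \hat{q}$ (with the co-advisor capacities $q_c$ held equal), the lemma gives that the co-advisors are weakly worse off under the smaller capacity, $\lambda(c)_k \le_c \mu_{SC}^{(\tau)}(c)_k$ for every $k = 1, \dots, q_c$. Combining with the first step yields $\mu_{SC}^{(\tau-1)}(c)_k = \lambda(c)_k \le_c \mu_{SC}^{(\tau)}(c)_k$, which is the claim. I expect the main obstacle to be making the first step fully rigorous: one must argue carefully, via order-independence of the many-to-many GS algorithm, that trimming a student's self-matched spots leaves every co-advisor's match untouched, since this is exactly the quota replacement for the ``$S_u^{(\tau-1)}$ were unmatched'' observation that drives the non-quota proof.
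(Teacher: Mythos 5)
Your proposal retraces the paper's own proof step for step: the same intermediate market with student capacities trimmed to $\hat{q}_s = \countMatches(\mu_{SC}^{(\tau-1)}, s)$, the same verification $\hat{q}_s \leq \tilde{q}_s^{(\tau)}$ via Lemma~\ref{lem:studentStaysMatchedAtNextIterationProjectAlg} together with the capacity bookkeeping, and the same final application of Lemma~\ref{lem:gsAddMenQuotas} to grow capacities from $\hat{q}$ back up to $\tilde{q}^{(\tau)}$. Your second and third steps are correct and identical to the paper's. The difference is that the paper asserts the key intermediate claim --- that $\GSalgShort$ with the trimmed capacities $\hat{q}$ reproduces $\mu_{SC}^{(\tau-1)}$ up to self-matches --- in a single sentence, whereas you rightly flag it as the crux and attempt a justification. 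That justification is where your proof breaks: it is \emph{not} true that deleting a self-matched spot ``changes no co-advisor's set of received proposals.'' With students proposing, the doomed proposals issued for a spot that ends self-matched can trigger rejection cascades that shape the whole matching. Concretely, take students $s_1, s_2, s_3$ with capacities $2, 1, 1$ on the $(S,C)$ market and co-advisors $c_1, c_2, c_3$ with capacity $1$; preferences $c_1 >_{s_1} c_2$, $c_2 >_{s_2} c_3$, $c_3 >_{s_3} c_2$ (only listed partners acceptable), $c_1$ accepts only $s_1$, $s_3 >_{c_2} s_1 >_{c_2} s_2$, and $s_2 >_{c_3} s_3$. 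Student-proposing GS gives $\{ (s_1, c_1), (s_2, c_3), (s_3, c_2) \}$: $s_1$'s second-spot hold of $c_2$ causes $c_2$ to reject $s_2$, routing $s_2$ to $c_3$, which makes $c_3$ reject $s_3$, who then evicts $s_1$ from $c_2$; so $\hat{q} = (1,1,1)$. But GS with capacities $(1,1,1)$ gives $\{ (s_1, c_1), (s_2, c_2), (s_3, c_3) \}$, in which $c_2$ and $c_3$ are \emph{strictly worse}. Hence $\lambda \neq \mu_{SC}^{(\tau-1)}$, and $c_2$'s and $c_3$'s received proposals genuinely differ between the two runs.

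The gap is fatal to the argument as written because the exact equality carries all the weight: the inequality you could get instead from Lemma~\ref{lem:gsAddMenQuotas} (shrinking the students' capacities from $\tilde{q}^{(\tau-1)}$ to $\hat{q}$) runs in the unhelpful direction, $\lambda(c)_k \leq_c \mu_{SC}^{(\tau-1)}(c)_k$, which cannot be chained with $\lambda(c)_k \leq_c \mu_{SC}^{(\tau)}(c)_k$ to give the claim. The claim \emph{is} salvageable when students are the receiving side: if co-advisors propose, the number of offers a student holds never decreases during the execution, so a student who ends with $\hat{q}_s$ matches never occupied more than $\hat{q}_s$ spots at any point, and the run with capacities $\hat{q}$ is literally the same run. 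That is a correct replacement for your ``each student exhausts his proposals'' argument, but it commits the trimmed-run comparison to a fixed proposing side on the $(S, C)$ market, contrary to your (and the paper's) proposer-independent phrasing, and it is precisely in the student-proposing direction you argued that the equality fails. In fairness, you inherited this weakness faithfully: the paper's own proof asserts the same equality without justification, and your vulnerable step sits exactly where its one-sentence assertion lives.
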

\begin{proof}
We first prove that $\countMatches(\mu_{SC}^{(\tau-1)}, s) \leq \tilde{q}_s^{(\tau)}$. By Lemma~\ref{lem:studentStaysMatchedAtNextIterationProjectAlg}, 
\begin{align*}
    \tilde{q}_s^{(\tau)} = \countMatches(\mu_{AS}^{(\tau)}, s) 
    &\geq \min(q_s^{(\tau)}, \countMatches(\mu_{AS}^{(\tau-1)}, s)) \\
    &\geq \countMatches(\mu_{SC}^{(\tau-1)}, s)\;.    
\end{align*}
Now apply Lemma~\ref{lem:gsAddMenQuotas} by first matching students with capacity $\countMatches(\mu_{SC}^{(\tau-1)}, s) \leq \tilde{q}_s^{(\tau)}$ on the $(S, C)$ market.
This returns a matching that is the same as at iteration $\tau-1$ (except for some self-matches).
Then increase their capacities to $\tilde{q}_s^{(\tau)}$. 
Co-advisors cannot worsen, students cannot get better matches.
\end{proof}
Note that for student $s$, it also holds that $\mu_{SC}^{(\tau)}(s)_{k} \leq_s \mu_{SC}^{(\tau-1)}(s)_{k} \; \forall k = 1, \dots, \countMatches(\mu_{SC}^{(\tau-1)}, s)$. 
In other words, a student's matches (excluding self-matches) on the $(S, C, Q)$ market cannot increase over iterations.
We do not need this fact to prove stability though.
%
Proposition~\ref{prop:stableUnderStrictification} continues to hold.
Finally, we can prove stability.

\begin{theorem}\label{thm:stablePhDAlgorithmQuotas}
The PhD algorithm with quotas returns a stable matching.
\end{theorem}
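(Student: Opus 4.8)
The plan is to mirror the proof of Theorem~\ref{thm:stablePhDAlgorithm}, replacing each ingredient by its quota-aware counterpart. First, by Proposition~\ref{prop:stableUnderStrictification} (which continues to hold) it suffices to establish stability under the strictified preferences. Individual rationality is immediate: the extended GS algorithm (Algorithm~\ref{alg:galeShapleyQuotas}) never fills a spot with an unacceptable partner, and the final step only pads the remaining spots with self-matches. The termination condition $Q|_{(S,C)} = \tilde{Q}_{SC}$ guarantees that at the last iteration $T$ every student fills exactly the co-advisor capacity allocated to him, so $\countMatches(\mu_{SC}^{(T)}, s) = \tilde{q}_s^{(T)} = \countMatches(\mu_{AS}^{(T)}, s)$; hence $\mu$ is a valid (full) matching in which $s$ is matched to the real advisors of $\mu_{AS}^{(T)}(s)$, the real co-advisors of $\mu_{SC}^{(T)}(s)$, and self-matches on the remaining $q_s - \countMatches(\mu_{AS}^{(T)}, s)$ spots. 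I would then argue by contradiction, assuming a triple $(a, s, c)$ blocks $\mu$ and fixing a witnessing spot $k$ as in Definition~\ref{def:blockingTripleQuotas}.

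Next I would dispatch the easy cases, in which the blocking improvement can be read off directly at the final iteration. If spot $k$ is matched and the advisor side improves ($a >_s \mu(s)_{k,A}$ with $s >_a \mu(a)_S$), then since $\mu(a)_S = \mu_{AS}^{(T)}(a)$ and $\mu(s)_{k,A}$ is a real advisor living within the reduced capacity $q_s^{(T)}$, the pair $(a,s)$ blocks the final advisor matching $\mu_{AS}^{(T)}$, contradicting its stability. Symmetrically, if the co-advisor side improves, $(s,c)$ blocks $\mu_{SC}^{(T)}$, again a contradiction. The same reasoning covers an unmatched spot $k$ whenever $s$ still has a free advisor spot at the reduced capacity, i.e. $\countMatches(\mu_{AS}^{(T)}, s) < q_s^{(T)}$: then $a >_s s$ (acceptable) and $s >_a \mu(a)_S$, so $(a,s)$ blocks $\mu_{AS}^{(T)}$. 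Throughout, Lemma~\ref{lem:gsAddMenQuotas} is what licenses comparing matchings across the reduced capacities.

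The main obstacle is the remaining case: $k$ is an \emph{unmatched} spot but $s$ is already full at the reduced capacity, $\countMatches(\mu_{AS}^{(T)}, s) = q_s^{(T)} < q_s$. Here the advisor-side improvement does \emph{not} contradict GS stability, because $s$ has no free spot on the advisor market; this is the quota analogue of the ``removed student'' case in Theorem~\ref{thm:stablePhDAlgorithm}, and it is where the co-advisor side must do the work. Since $q_s^{(T)} < q_s$, the capacity of $s$ was reduced at some iteration $\tau < T$, which by the update rule happens exactly when $\countMatches(\mu_{SC}^{(\tau)}, s) < \tilde{q}_s^{(\tau)}$, i.e. $s$ left a co-advisor spot unfilled at iteration $\tau$. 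Because $c$ is acceptable to $s$ and $s$ is acceptable to $c$ (the triple is mutually acceptable) while $s$ has a free spot in $\mu_{SC}^{(\tau)}$, stability of that GS matching forces $c$ to be full with students all weakly preferred to $s$, i.e. $\mu_{SC}^{(\tau)}(c)_j \geq_c s$ for every spot $j$. I would then invoke Lemma~\ref{lem:coadvisorCannotDecreaseProjectAlg} (co-advisor matches never decrease over iterations) to propagate this to the final iteration: $\mu(c)_S = \mu_{SC}^{(T)}(c)_j \geq_c \mu_{SC}^{(\tau)}(c)_j \geq_c s$ for all $j$, whence $\neg(s >_c \mu(c)_S)$, contradicting the blocking condition $s >_c \mu(c)_S$.

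The delicate point, and the place where the argument really differs from the unweighted one, is the bookkeeping between the original quota $q_s$ used in Definition~\ref{def:blockingTripleQuotas} and Fact~\ref{fact:blockingTripleImpliesPairQuotas} and the reduced capacity $q_s^{(T)}$ driving the GS runs: one must verify that every ``extra'' self-match spot of $s$ in $\mu$ is either a genuinely free spot at capacity $q_s^{(T)}$ (handled by GS stability) or a spot surrendered through a capacity reduction (handled by Lemma~\ref{lem:coadvisorCannotDecreaseProjectAlg}), with no spot falling through the cracks. Monotonicity of the capacities ($q_s^{(\tau)}$ non-increasing, from Lemma~\ref{lem:studentStaysMatchedAtNextIterationProjectAlg} together with the update rule) is what makes this dichotomy exhaustive. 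Once this is in place, all cases close and $\mu$ is stable.
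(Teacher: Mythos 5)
Your proof is correct and follows essentially the same route as the paper's: strictification via Proposition~\ref{prop:stableUnderStrictification}, individual rationality from the GS runs, disposal of the easy blocking cases by two-sided GS stability at the final iteration (the paper routes this through Fact~\ref{fact:blockingTripleImpliesPairQuotas}, you re-derive it spot-wise from Definition~\ref{def:blockingTripleQuotas}), and, for the hard case of a surrendered spot, mutual acceptability of $(s,c)$ at the reduction iteration $\tau$ combined with Lemma~\ref{lem:coadvisorCannotDecreaseProjectAlg} to propagate $\mu_{SC}^{(\tau)}(c)_j \geq_c s$ to the final matching. Your case split on the witnessing spot $k$ differs only cosmetically from the paper's split on whether $q_s^{(T)} < q_s^{(1)}$, and if anything it makes explicit a sub-case (advisor-side blocking at a matched spot when $s$'s capacity was reduced, still refuted by GS stability of $\mu_{AS}^{(T)}$) that the paper's write-up handles only implicitly.
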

\begin{proof}
By Proposition~\ref{prop:stableUnderStrictification}, it is sufficient to show that the matching $\mu$ is stable under the strictified preferences (first line of the algorithm).
We proceed by contradiction.
The algorithm cannot return an individually irrational matching since the two-sided GS algorithms do not.

Assume that $(a, s, c) \in A \times S \times C$ blocks $\mu$. Let $T$ the last iteration of the PhD algorithm and $Q^{(\tau)}$ the quotas at iteration $\tau$ on the market $(A, S, C)$.
If $s$'s capacity has not changed, $q_s^{(1)} = q_s^{(T)}$, Fact~\ref{fact:blockingTripleImpliesPairQuotas} implies that either $(a, s)$ blocks $\mu|_{(A, S, Q^{(T)}|_{(A, S)})}$ or $(s, c)$ blocks $\mu|_{(S, C, Q^{(T)}|_{(S, C)})}$. 
Since $Q^{(T)}|_{(S, C)} = \tilde{Q}_{SC}^{(T)}$ at the final iteration, this contradicts the stability of the GS algorithm on one of the two markets.

If $s$'s capacity has changed, $q_s^{(T)} < q_s^{(1)}$, $s$ matched to $\tilde{q}_s^{(\tau)} = \countMatches(\mu_{AS}^{(\tau)}, s)$ advisors and less co-advisors at some iteration $\tau$ (since $q_s^{(\tau)}$ can only decrease).
Consider $c$'s weakest student $\tilde{s} = \mu_{SC}^{(\tau)}(c)_{q_c}$ at that time. Since $c$ and $s$ are mutually acceptable, $c$ matches with $\tilde{s} >_c s$ at that iteration, independently of whether students or co-advisors propose.
By Lemma~\ref{lem:coadvisorCannotDecreaseProjectAlg}, $c$'s weakest student satisfies $\mu(c)_{S, q_c} = \mu_{SC}^{(T)}(c)_{q_c} \geq_c \mu_{SC}^{(\tau)}(c)_{q_c} = \tilde{s} >_c s$. Hence $\tau >_c s \; \forall \tau \in \mu(c)_S$ and $(a, s, c)$ cannot be a blocking triple.
\end{proof}

\section{Challenges with incompatible advisor pairs}\label{sec:incompatibleAdvisorPairs}
We show by counterexamples that incompatible pairs cannot be incorporated through simple modifications of the PhD algorithm.
Consider the case when only some advisor pairs $K \subset A \times C$ are compatible. 
A matching $\mu$ has to be a subset of $\left\{ (a, s, c) \in A \times S \times C \mid (a, c) \in K \right\}$ rather than $A \times S \times C$. Let $K(a) = \{ c \mid (a, c) \in K \}$ the co-advisors that are acceptable to $a$ and $K(c) = \{ a \mid (a, c) \in K \}$ the advisors that are acceptable to $c$.
Suppose that we adapt the PhD algorithm such that a student who is matched to an advisor, at a given iteration, only considers co-advisors who are compatible with his matched advisor. This is clearly necessary since the algorithm may otherwise match advisors to incompatible co-advisors.
The left part of Figure~\ref{fig:counterExamplePhDAlgIncompatibilities} shows that this modification generally does not work.
\begin{figure}
    \centering
    \includestandalone[width=\textwidth,height=.12\textheight,keepaspectratio]{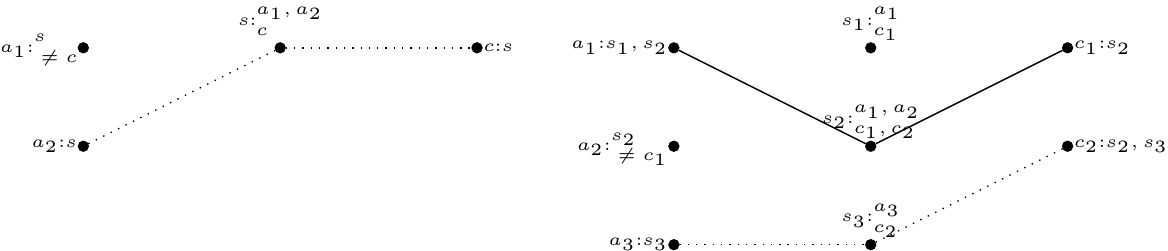}
    \caption{
    The PhD Algorithm fails when some advisor pairs are incompatible.
    Note that this reasoning works independently of which side proposes on the two-sided markets.
    \textbf{Left:}
    When the PhD Algorithm (Algorithm~\ref{alg:phdAlgorithm}) does not take incompatible advisor pairs into account, it returns the matching $\{ (a_1, s, c) \}$, which is not valid since $a_1$ and $c$ are incompatible.
    If a student only considers co-advisors who are compatible with his matched advisor, $s$ is removed at the first iteration and the algorithm returns the unstable matching $\{ \}$ which is blocked by $(a_2, s, c)$.
    If the student removes advisor $a_1$ from his list of acceptable advisors after the first iteration, the algorithm produces the stable matching $\{ (a_2, s, c) \}$ (in dotted style).
    \textbf{Right:} However, this modification does not work for this example where the PhD algorithm returns the unstable matching $\{ (a_1, s_2, c_1) \}$ which is blocked by $(a_3, s_3, c_2)$.
    At the first iteration, the temporary matching is $\{ (a_2, s_2, c_2) \}$, $s_1$ removes $a_1$ (from his list) and $s_3$ removes advisor $a_3$.
    The algorithm terminates after the second iteration.
    The reason this happens is that $a_2$ is not compatible with $c_1$, so $s_2$ matches with $c_2$, so that $s_3$ removes $a_3$ from his list of acceptable advisors.
    At the second iteration, $s_2$ matches with advisor $a_1$ who is compatible with $s_2$'s more preferred co-advisor $c_1$.
    If $s_3$ were still part of the market, it could match to $(a_3, c_2)$.
    }
    \label{fig:counterExamplePhDAlgIncompatibilities}
\end{figure}
Whenever a student cannot find a co-advisor among those that are compatible with his matched advisor $a$, the student is removed. However, it could be that $a$ is not compatible with many co-advisors. If $s$ changed his advisor, he could find a match.
So the idea is to (greedily) remove $a$ from $s$'s preferences in all future iterations (i.e. student $s$ marks $a$ as unacceptable), so $s$ cannot match again with $a$.
When the student is matched to advisor $a$ and cannot find a co-advisor at an iteration, he marks advisor $a$ as unacceptable starting from that iteration.
Then, this modified algorithm applied to the left part of Figure~\ref{fig:counterExamplePhDAlgIncompatibilities} works.
However, the right part of this figure shows that it generally does not work.
To understand why this happens, note that the PhD algorithm first addresses the $(A, S)$ market and then the $(S, C)$ market. It treats them as separate entities with very limited interaction, only through the set $S_m$ of matched students and the set of removed students $S_u$ at each iteration.
A different approach could be that when a student finds an advisor, he immediately tries to find a co-advisor as well, and this is repeated until all students are matched.
\cite{zhong2019Cooper} provide an algorithm based on this idea (additionally assuming complete preferences).
However, their algorithm appears to be incorrect as we discuss now by providing a counterexample.

\subsection{Remark on [Zhong and Bai, 2019]}
\label{app:incompatibleAdvisorPairsOtherAlg}
The authors consider the setting where advisors can be incompatible, and preferences are complete and strict. Their algorithm supposedly computes a stable matching when advisor pairs can be incompatible.
We first write down the algorithm they describe by letting persons propose one after the other rather than all at once (the two ways are equivalent).
Then, we give a counterexample to show that it does not work with incompatible advisor pairs. In the case when all advisor pairs are compatible, it reduces to a special case of the PhD algorithm.

As discussed above, their algorithm can be motivated from the PhD algorithm. Rather than running two separate two-sided markets, the idea is to interleave them.
Pick any currently unmatched advisor and let him propose to the most preferred student that did not reject him. 
If the student is already matched to an advisor that is worse than the proposing advisor and the proposing advisor is compatible with the student's current co-advisor, the student rejects the old advisor and accepts the new advisor, keeping the co-advisor. Otherwise, the student rejects the proposing advisor.
If the student is not already matched, he proposes to and matches with the most preferred compatible co-advisor who prefers the student to his current student. If no such co-advisor exists, the student rejects the advisor. This algorithm is depicted in Algorithm~\ref{alg:phdAlgorithmIncompatiblePairs}.
\begin{algorithm}[ht]
\DontPrintSemicolon
\SetAlgoLined
\KwIn{PhD market $(A, S, C, P)$, quotas $Q = 1$}
\KwOut{matching $\mu$}
$\mu \gets \{ (s, s, s) \mid s \in S \} \cup \{ (c, c, c) \mid c \in C \}$  \;
\While{there is an unmatched advisor $a$}{
 $s \gets \offerNext(P, a)$ \;
 \uIf{$s = a$}{
   $\mu \gets \mu \cup \{ (a, a, a) \}$
 }
 \Else{
 \uIf{$s$ is matched to $(\tilde{a}, \tilde{c}) \in A \times C$}{
    \label{lst:phdAlgIncompStudMatched}
    \uIf{$a >_s \tilde{a}$ and $a \in K(\tilde{c})$}{
        $\mu \gets \mu \setminus \{ (\tilde{a}, s, \tilde{c}) \} \cup \{ (a, s, \tilde{c}) \}$ \;
        Reject $\tilde{a}$
    }
    \Else{
    Reject $a$ \label{lst:phdAlgIncompFailureMode}
    }
 }
 \Else{
    \label{lst:phdAlgIncompStudUnmatched}
    $\tilde{C} \gets \{ c : s >_c \mu(c)_S \land c \in K(a) \}$ \; \label{lst:phdAlgIncompCTilde}
    \uIf{$\tilde{C} = \emptyset$}{
        Reject $a$
    }
    \Else{
        $c \gets \bestAmong(\tilde{C}, P(s))$ \;
        $\mu \gets \mu \setminus \{ (\mu(c)_A, \mu(c)_S, c) \} \cup \{ (a, s, c) \}$
    }
 }
 }
}
\Return{$\mu$}
\caption{Algorithm by \cite{zhong2019Cooper}}
\label{alg:phdAlgorithmIncompatiblePairs}
\end{algorithm}
The function $\offerNext(P, p)$ returns the most preferred person of $p$ who did not reject $p$ yet. 
When $p_1$ rejects $p$, this can be implemented by removing $p$ from $P(p_1)$, the preferences of $p_1$.
When $s$ is unmatched, $s$ only considers co-advisors in $\tilde{C}$ which ensures that $(a, s, c)$ can be matched because $c$ prefers $s$ to his current match. Student $s$ could instead propose to the most preferred co-advisor who did not reject him yet and the co-advisor only accepts if he prefers $s$ to his current match.
The function $\bestAmong(\tilde{C}, P(s))$ picks $s$'s most preferred co-advisor among $\tilde{C}$.
Note that when $s$ matches with co-advisor $c$ and this unmatches $(\tilde{a}, \tilde{s}, c)$, the advisor $\tilde{a}$ is not (immediately) rejected by $s$. Advisor $\tilde{a}$ can propose to $\tilde{s}$ again and $\tilde{s}$ can find a new co-advisor. The algorithm returns a valid matching in the sense that each person is matched to himself or exactly one acceptable couple of persons of the other two sides.
Its running time is $\mathcal{O}((|A| + |C|) |S|)$.

The algorithm fails when some advisor pairs are incompatible as shown by the counterexample in Figure~\ref{fig:counterExampleOtherAlgIncompatibilities}, even under complete preferences and when $|A| = |S| = |C|$.
\begin{figure}
    \centering
    \includestandalone[width=\textwidth,height=.15\textheight,keepaspectratio]{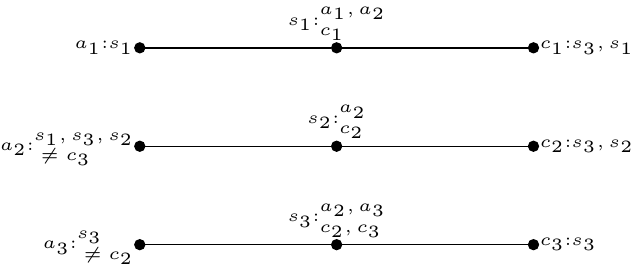}
    \caption{
    Algorithm~\ref{alg:phdAlgorithmIncompatiblePairs} from \citep{zhong2019Cooper} fails when some advisor pairs are incompatible. It returns the shown matching $\{ (a_1, s_1, c_1), (a_2, s_2, c_2), (a_3, s_3, c_3) \}$ which is blocked by $(a_2, s_3, c_2)$. The compatible advisor pairs are $K = A \times C \setminus \{ (a_3, c_2), (a_2, c_2) \}$.
    Even if people propose in parallel (as originally presented in \citep{zhong2019Cooper}), it does not work.
    Call a round when all advisors propose to students and the students propose to co-advisors (if they are unmatched).
    After the first round, the matching is $\{ (a_1, s_1, c_1), (a_3, s_3, c_3) \}$ because $(a_3, c_2) \notin K$. In the second round, $s_3$ rejects $a_2$ since $(a_2, c_3) \notin K$. In the third round, the matching becomes $\{ (a_1, s_1, c_1), (a_2, s_2, c_2), (a_3, s_3, c_3) \}$.
    Since everyone is matched, the matching stays the same if the unacceptable persons are appended at the end of each person's preferences as acceptable persons. Therefore, this is also a counterexample when preferences must be complete.
    The matching $\{ (a_1, s_1, c_1), (a_2, s_3, c_2) \}$ is stable.
    As an aside, the PhD algorithm (Algorithm~\ref{alg:phdAlgorithm}) happens to produce the stable matching $\{ (a_1, s_1, c_1), (a_2, s_3, c_2) \}$ on this example.
    }
    \label{fig:counterExampleOtherAlgIncompatibilities}
\end{figure}
On the other hand, when all advisor pairs are compatible, one can see that this algorithm reduces to the special case of the PhD algorithm with advisors proposing to students and students proposing to co-advisors.
Though, our PhD algorithm allows to find a richer family of stable matchings (see Theorem~\ref{thm:invariancePhDAlg}).
Given these challenges, we wonder if the setting with incompatible advisors is NP-complete.

%
\section{Multi-sided markets}\label{sec:nsidedMarkets}
We have a necessary condition for blocking tuples in $n$-sided markets.
\begin{fact}\label{fact:blockingTupleImpliesBlockingPair_app}
If $(s_1, \dots, s_n) \in S_1 \times \dots \times S_n$ blocks $\mu$, there exists some $k = 1, \dots, n-1$ such that
\begin{align*}
\begin{cases}
\mu_{j-1}(s_{j-1}) = s_j \; \forall 2 \leq j \leq k \\
s_{k+1} >_{s_k} \mu_{k}(s_k) \\
s_k >_{s_{k+1}} \mu_k(s_{k+1}) \\
\end{cases}.
\end{align*}
Define $M_1 = S_1$ and, for $k > 1$, $M_k = \{ s_k \in S_k \mid \mu_k(s_k) \neq s_k \}$ is the set of matched persons among $S_k$.
Then, at least one of the matchings $\mu_k|_{(M_k, S_{k+1})}$ must be blocked by $(s_k, s_{k+1})$.
\end{fact}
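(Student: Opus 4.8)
The plan is to isolate the \emph{first} market along the chain $1, 2, \dots, n-1$ at which the tuple genuinely improves, and to show that this index witnesses the claim. Concretely, let $k^*$ be the smallest index $k \in \{1, \dots, n-1\}$ for which $(s_k, s_{k+1})$ strictly blocks $\mu_k$ (rather than merely being a matched pair). Such an index exists: by the definition of a blocking $n$-tuple, every pair $(s_k, s_{k+1})$ either lies in $\mu_k$ or blocks $\mu_k$, and if all of them were matched pairs then, since $\mu$ consists of full $n$-tuples and each person is matched at most once, the chain $\mu_1(s_1) = s_2, \mu_2(s_2) = s_3, \dots$ would force $(s_1, \dots, s_n) \in \mu$, contradicting $(s_1, \dots, s_n) \notin \mu$.

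First I would verify the three displayed conditions for $k = k^*$. For every $j$ with $2 \le j \le k^*$, the pair $(s_{j-1}, s_j)$ sits at an index $j - 1 < k^*$, so by minimality of $k^*$ it does not strictly block $\mu_{j-1}$; the ``matched-or-blocks'' dichotomy in the stability definition then forces $(s_{j-1}, s_j) \in \mu_{j-1}$, i.e. $\mu_{j-1}(s_{j-1}) = s_j$, which is exactly the first condition. The remaining inequalities $s_{k^*+1} >_{s_{k^*}} \mu_{k^*}(s_{k^*})$ and $s_{k^*} >_{s_{k^*+1}} \mu_{k^*}(s_{k^*+1})$ are precisely the statement that $(s_{k^*}, s_{k^*+1})$ blocks $\mu_{k^*}$, which holds by the choice of $k^*$.

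It then remains to transfer this strict blocking from $\mu_{k^*}$ to the restricted matching $\mu_{k^*}|_{(M_{k^*}, S_{k^*+1})}$. The key point is that $s_{k^*}$ is a matched person and hence lies in $M_{k^*}$: for $k^* = 1$ this is immediate since $M_1 = S_1$, while for $k^* > 1$ the first condition with $j = k^*$ gives $\mu_{k^*-1}(s_{k^*-1}) = s_{k^*}$, so $s_{k^*}$ appears in a full tuple of $\mu$ and is therefore matched on market $k^*$ as well, i.e. $\mu_{k^*}(s_{k^*}) \neq s_{k^*}$. Because matched persons keep their partners under the restriction to $M_{k^*}$ (and any $S_{k^*}$-partner of $s_{k^*+1}$ is itself matched, hence in $M_{k^*}$, or else $s_{k^*+1}$ is self-matched, which also survives the restriction), both $\mu_{k^*}|_{(M_{k^*}, S_{k^*+1})}(s_{k^*}) = \mu_{k^*}(s_{k^*})$ and $\mu_{k^*}|_{(M_{k^*}, S_{k^*+1})}(s_{k^*+1}) = \mu_{k^*}(s_{k^*+1})$; the two blocking inequalities are thus preserved verbatim, and so $(s_{k^*}, s_{k^*+1})$ blocks $\mu_{k^*}|_{(M_{k^*}, S_{k^*+1})}$.

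I expect the only delicate step to be this last one: arguing that restricting the $S_{k^*}$-side to the matched persons $M_{k^*}$ leaves the partners of both $s_{k^*}$ and $s_{k^*+1}$ untouched, so that the blocking pair genuinely survives the restriction. Everything else is a clean ``first strictly-blocking index'' argument driven by the dichotomy built into the stability definition, together with the observation that in a valid matching a person matched on one adjacent market is matched on both.
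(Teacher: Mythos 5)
Your proof is correct and takes essentially the same approach as the paper: the paper establishes the existence of the witnessing index $k$ by contradiction (if the display fails for every $k$, the matched-or-blocks dichotomy forces $(s_j,s_{j+1}) \in \mu_j$ for all $j$, hence $(s_1,\dots,s_n) \in \mu$), which is your minimal strictly-blocking index $k^*$ argument in contrapositive form, followed by the same observation that $\mu_{k^*-1}(s_{k^*-1}) = s_{k^*}$ puts $s_{k^*}$ in $M_{k^*}$. Your final paragraph merely spells out the paper's closing ``Clearly, $(s_k, s_{k+1})$ also blocks $\mu_k|_{(M_k, S_{k+1})}$'' by checking that restricting the $S_{k^*}$-side to the matched persons leaves both partners unchanged, which is a welcome but inessential elaboration.
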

\begin{proof}
Say $H_k$ is the statement that the above does not hold for a given $k$.
If there exists no $k$ for which the above holds, $H_1, \dots, H_{n-1}$ all hold. Statement $H_1$ means that $(s_1, s_2) \in \mu_1$. $H_2$ (together with $H_1$) then implies that $(s_2, s_3) \in \mu_2$. By iterating, we find that $(s_k, s_{k+1}) \in \mu_k \; \forall k = 1, \dots, n-1$, hence $(s_1, \dots, s_n) \in \mu$, so $(s_1, \dots, s_n)$ cannot be a blocking tuple.
Hence $\mu_k$ must be blocked by $(s_k, s_{k+1})$ for some $k$. 
If $k > 1$, $\mu_{k-1}(s_{k-1}) = s_k$ implies that $s_k$ is matched in $\mu$, hence $s_k \in M_k$. (The case $k=1$ is immediate.)
Clearly, $(s_k, s_{k+1})$ also blocks $\mu_k|_{(M_k, S_{k+1})}$.
\end{proof}

Again, stable matchings exist and the extension of the PhD algorithm to $n$ sides is depicted in Algorithm~\ref{alg:phdAlgorithmNSided}.
\begin{algorithm}[ht]
\DontPrintSemicolon
\SetAlgoLined
\KwIn{PhD market $\brb{(S_1, \dots, S_n), (P_1, \dots, P_{n-1})}$}
\KwOut{stable matching $\mu$}
Strictify the preferences $P_1, \dots, P_{n-1}$\; 
Set $\tau \gets 0$ and $S_k^{(1)} \gets S_k$ for all $k=1,\ldots,n$ \;
\Do{$\bigcup_{k=2}^{n-1} U_k^{(\tau)} \neq \emptyset$}{
$\tau \gets \tau + 1$ \;
$M_1^{(\tau)} \gets S_1^{(\tau)}$ \;
\For{$k = 1, \dots, n-1$}{
    $\mu_k^{(\tau)} \gets \GSalgShort( M_k^{(\tau)}, S_{k+1}^{(\tau)}, P_k )$ \tcp{matching on a submarket of $(S_k, S_{k+1})$} 
    let $M_{k+1}^{(\tau)}$ the matched persons among $S_{k+1}^{(\tau)}$ in $\mu_k^{(\tau)}$ \;
    let $U_{k}^{(\tau)}$ the unmatched persons among $S_{k}^{(\tau)}$ in $\mu_k^{(\tau)}$ \;
}
\For{$k = 2, \dots, n-1$}{
    $S_k^{(\tau+1)} \gets S_k^{(\tau)} \setminus U_k^{(\tau)}$
}
}
match the persons according to matchings $\mu_1^{(\tau)}, \dots, \mu_{n-1}^{(\tau)}$, the removed and unmatched persons stay single\;
\Return{$\mu$}
\caption{$n$-sided PhD market}
\label{alg:phdAlgorithmNSided}
\end{algorithm}
The superscript $(\tau)$ denotes the iteration number and the subscript $k$ refers to a submarket of the $k$-th market $(S_k, S_{k+1})$.
The algorithm progresses by consecutively finding a matching on a submarket of $(S_k, S_{k+1})$ for $k = 1, \dots, n-1$. In the $k$-th market, only the persons in $S_{k}$ participate who found a match in the $(k-1)$-th market. If such a matched person cannot find a match on the $(k+1)$-th two-sided market, this person will also not find a match in later iterations and he is therefore removed unless he is part of the first or last side. The algorithm iterates until no person is removed. All removed and unmatched persons stay single.
The algorithm produces a valid matching, i.e., no person is partially matched or matched more than once. 
Indeed, a person can be matched at most once since the GS algorithm matches the same person at most once on each two-sided market, in particular at the final iteration.
Suppose that a person is partially matched, then he must be returned as part of some tuple $(s_1, \dots, s_k) \subset S_1 \times \dots \times S_k$ with $1 < k < n$. This can only happen if $s_k$ could not find a match on the market $\brb{S_k^{(T)}, S_{k+1}^{(T)}}$, hence $s_k \in U_k^{(T)}$ at the final iteration $T$ of the PhD algorithm. Then, $\bigcup_{k=2}^{n-1} U_k^{(T)}$ could not have been empty. 
For the stability proof, we assume that the proposing side is fixed over iterations in each two-sided market. As for three-sided markets, this does not matter. In fact, we can replace the GS algorithms by any other (individually rational) stable matching algorithms since the set of matched persons is the same at each iteration. The matching algorithms used at the final iteration affect the properties of the matching, e.g., optimality.
When the GS algorithms are used on the two-sided markets, the PhD algorithm can again be efficiently implemented in $\mathcal{O}(\sum_{k=1}^{n-1} |S_k| |S_{k+1}|)$.

Before proving stability, we establish a property that extends Lemmas~\ref{lem:studentStaysMatchedAtNextIteration} and~\ref{lem:coadvisorCannotDecrease}.
First of all, note that $M_1^{(\tau)}$ and $S_n^{(\tau)}$ never change over iterations and $M_k^{(\tau)} \subset S_k^{(\tau)} \; \forall \tau$ and $ k = 1, \dots, n$. 
Furthermore, the sets of available persons cannot increase over iterations: $S_k^{(\tau+1)} \subset S_k^{(\tau)} \forall \tau $ and $ k = 2, \dots, n-1$.
For $s_k \in S_k$, we call $(S_{k-1}, S_k)$ $s_k$'s left market since it is the first market $s_k$ participates in, and $(S_k, S_{k+1})$ its right market.
Similarly to the three-sided case, a person's match cannot decrease over iterations on its left market, unless the person cannot find a match on its right market and is removed.
\begin{lemma}
\label{lem:matchedPersonsMatchCannotDecrease}
The match of person $s_{k+1} \in S_{k+1}$ on the market $(S_k, S_{k+1})$ cannot decrease.
More precisely, let $\mu_k^{(\tau)}$ the matching at iteration $\tau$ as in the algorithm. Then, $\mu_k^{(\tau)}(s_{k+1}) \geq_{s_{k+1}} \mu_k^{(\tau-1)}(s_{k+1}) \; \forall \tau \geq 2,$ with $s_{k+1} \in S_{k+1}^{(\tau)},$ and $ k = 1, \dots, n-1$.
Moreover, $M_k^{(\tau-1)} \setminus U_k^{(\tau-1)} \subset M_k^{(\tau)} \; \forall k = 2, \dots, n-1$.
\end{lemma}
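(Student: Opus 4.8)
The plan is to prove both conclusions by induction on the market index $k$, running from $k=1$ up to $k=n-1$. The inequality for market $k$ (call it claim $A(k)$: on its left market every surviving $s_{k+1}\in S_{k+1}^{(\tau)}$ has $\mu_k^{(\tau)}(s_{k+1})\geq_{s_{k+1}}\mu_k^{(\tau-1)}(s_{k+1})$) is the quantity I carry through the induction, while the ``moreover'' inclusion $M_k^{(\tau-1)}\setminus U_k^{(\tau-1)}\subset M_k^{(\tau)}$ will fall out of $A(k-1)$ and then be fed into the proof of $A(k)$. Throughout I would use that each conclusion is independent of which side proposes, so I may invoke Lemma~\ref{lem:gsAddMen} in whichever orientation is convenient. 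This setup unifies Lemma~\ref{lem:studentStaysMatchedAtNextIteration} ($k=1$) and Lemma~\ref{lem:coadvisorCannotDecrease} ($k=2$) and reduces to their proofs in those boundary indices.

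For the base case $k=1$ the market $(S_1,S_2)$ is special: its proposing side $M_1^{(\tau)}=S_1^{(\tau)}=S_1$ never changes, so only the accepting side shrinks, $S_2^{(\tau)}\subset S_2^{(\tau-1)}$. Deleting acceptors and applying the ``remove women'' part of Lemma~\ref{lem:gsAddMen} immediately gives $\mu_1^{(\tau)}(s_2)\geq_{s_2}\mu_1^{(\tau-1)}(s_2)$ for every $s_2\in S_2^{(\tau)}$, which is $A(1)$.

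Next I would derive the inclusion at index $k$ (for $2\leq k\leq n-1$) from $A(k-1)$. A person $s_k\in M_k^{(\tau-1)}\setminus U_k^{(\tau-1)}$ is matched on both its left market $(S_{k-1},S_k)$ and its right market $(S_k,S_{k+1})$ at iteration $\tau-1$. Being matched on the right, it is not removed, so $s_k\in S_k^{(\tau)}$; being matched on the left, individual rationality gives $\mu_{k-1}^{(\tau-1)}(s_k)>_{s_k}s_k$. Now $A(k-1)$ applied to $s_k\in S_k^{(\tau)}$ yields $\mu_{k-1}^{(\tau)}(s_k)\geq_{s_k}\mu_{k-1}^{(\tau-1)}(s_k)>_{s_k}s_k$, so $s_k$ is still matched on its left market at $\tau$, i.e. $s_k\in M_k^{(\tau)}$. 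This proves the inclusion and, crucially, tells us that after discarding the unmatched proposers the proposing side on market $k$ can only \emph{grow}.

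For the inductive step $A(k)$ with $k\geq 2$, both sides of $(S_k,S_{k+1})$ move between iterations, and reconciling these two simultaneous changes is the main obstacle. I would mimic the decomposition in the proof of Lemma~\ref{lem:coadvisorCannotDecrease}: starting from $\mu_k^{(\tau-1)}$, (i) delete the unmatched proposers $U_k^{(\tau-1)}$ — since they are self-matched this leaves everyone else's match intact, so the restricted matching still equals $\mu_k^{(\tau-1)}$; (ii) enlarge the proposing side to $M_k^{(\tau)}$, which by the inclusion just proved is a genuine addition of men, so Lemma~\ref{lem:gsAddMen} makes every acceptor weakly improve; and (iii) delete the departed acceptors $S_{k+1}^{(\tau-1)}\setminus S_{k+1}^{(\tau)}$, so the ``remove women'' part of Lemma~\ref{lem:gsAddMen} improves every surviving acceptor once more. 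Chaining the three comparisons gives $\mu_k^{(\tau)}(s_{k+1})\geq_{s_{k+1}}\mu_k^{(\tau-1)}(s_{k+1})$ for all $s_{k+1}\in S_{k+1}^{(\tau)}$, completing the induction; the boundary index $k=n-1$ is automatic since then $S_n$ is fixed and step (iii) is vacuous. The only point needing care beyond the three-sided argument is exactly this coupling — both the proposer set grows and the acceptor set shrinks in the same transition — so the pruning in step (i) is essential to turn the proposer change into a pure addition before Lemma~\ref{lem:gsAddMen} can be applied monotonically.
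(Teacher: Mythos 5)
Your proof is correct and follows essentially the same route as the paper's: induction over $k$ with the base case $k=1$ handled by the fixed proposer side and Lemma~\ref{lem:gsAddMen}, the inclusion $M_k^{(\tau-1)} \setminus U_k^{(\tau-1)} \subset M_k^{(\tau)}$ derived from the claim at index $k-1$ and fed back into the inductive step, and the key transition decomposed exactly as in the paper (prune the unmatched proposers so the restricted matching is unchanged, then add proposers and remove acceptors, applying Lemma~\ref{lem:gsAddMen} twice). Your three-step decomposition matches the paper's $\lambda_k^{(\tau)}$ construction on $(M_k^{(\tau-1)} \cap M_k^{(\tau)}, S_{k+1}^{(\tau-1)})$, so there is nothing to add.
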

\begin{proof}
The second statement follows from the first. Indeed, suppose $s_k \in M_k^{(\tau-1)} \setminus U_k^{(\tau-1)}$ for some $k = 2, \dots, n-1$ and the first statement holds for $k-1$. Since $M_k^{(\tau-1)} \subset S_k^{(\tau-1)}$, $s_k \in S_k^{(\tau-1)} \setminus U_k^{(\tau-1)} = S_k^{(\tau)}$.
$s_k \in M_k^{(\tau-1)}$ means that $s_k$ is matched in $\mu_{k-1}^{(\tau-1)}$, i.e., $\mu_{k-1}^{(\tau-1)}(s_k) >_{s_k} s_k$. Since $s_k \in S_k^{(\tau)}$, we have by the first statement, $\mu_{k-1}^{(\tau)}(s_k) \geq_{s_k} \mu_{k-1}^{(\tau-1)}(s_k) >_{s_k} s_k$, hence $s_k$ is also matched in $\mu_{k-1}^{(\tau)}$, i.e., $s_k \in M_k^{(\tau)}$.

For fixed $n$, we prove the first argument by induction over $k$. 
For the base case $k=1$, $\mu_1^{(\tau)} = \GSalgShort(M_1^{(\tau)}, S_2^{(\tau)})$. Since $M_1^{(\tau)}$ does not change over iterations and $S_2^{(\tau)}$ does not increase, Lemma~\ref{lem:gsAddMen} implies that $\mu_1^{(\tau)}(s_2) \geq_{s_2} \mu_1^{(\tau-1)}(s_2) \; \forall s_{2} \in S_{2}^{(\tau)}, \tau \geq 2$.
Assume that the statement holds for $k-1$, then we prove that it holds for $k < n$.
Consider the $\tau$-th iteration. First run the $k$-th GS algorithm on the market $(M_k^{(\tau-1)} \cap M_k^{(\tau)}, S_{k+1}^{(\tau-1)})$ to obtain the matching $\lambda_k^{(\tau)}$. By induction, $M_k^{(\tau-1)} \cap M_k^{(\tau)} = M_k^{(\tau-1)} \setminus U_k^{(\tau-1)}$ (which depends on the hypothesis for $k-1$). Since the persons $U_k^{(\tau-1)}$ are unmatched in $\mu_k^{(\tau-1)}$, $\lambda_k^{(\tau)}$ is the same as the matching $\mu_k^{(\tau-1)}$ restricted to the market $(M_k^{(\tau-1)} \setminus U_k^{(\tau-1)}, S_{k+1}^{(\tau-1)})$.
Now add the remaining $M_k^{(\tau)} \setminus M_k^{(\tau-1)}$ and remove $S_{k+1}^{(\tau-1)} \setminus S_{k+1}^{(\tau)}$ to obtain the market $(M_k^{(\tau)}, S_{k+1}^{(\tau)})$. By applying Lemma~\ref{lem:gsAddMen} twice, we see that $\mu_k^{(\tau)}(s_{k+1}) \geq_{s_{k+1}} \lambda_k^{(\tau)}(s_{k+1}) = \mu_k^{(\tau-1)}(s_{k+1}) \; \forall \tau \geq 2, s_{k+1} \in S_{k+1}^{(\tau)}$.
\end{proof}

Finally, we prove that the PhD algorithm in Algorithm~\ref{alg:phdAlgorithmNSided} produces a stable matching.
\begin{customthm}{\ref{thm:stablePhDAlgorithmNSided}}
The $n$-sided PhD Algorithm returns a stable matching.
\end{customthm}
\begin{proof}
By an analogue of Proposition~\ref{prop:stableUnderStrictification}, we can assume strict preferences.
The matching must be individually rational since the GS algorithms never match unacceptable partners on the two-sided markets.
Suppose that $(s_1, \dots, s_n) \in S_1 \times \dots \times S_n$ is a blocking tuple.
Fact~\ref{fact:blockingTupleImpliesBlockingPair_app} means that $\mu_k|_{(M_k^{(T)}, S_{k+1})}$ is blocked by $(s_k, s_{k+1})$ for some $k$.
Since $\mu_{j-1}(s_{j-1}) = s_j \; \forall 2 \leq j \leq k$, $s_2, \dots, s_k$ must all be matched and cannot have been removed.
Assume that $s_{k+1}$ was not removed during the PhD algorithm. 
Since $s_{k+1} \in S_{k+1}^{(T)}$, 
we have that $(s_k, s_{k+1})$ also blocks $\mu_k|_{(M_k^{(T)}, S_{k+1}^{(T)})}$.
This contradicts the stability of the GS algorithm in the $k$-th market $(M_k^{(T)}, S_{k+1}^{(T)})$ at the final iteration.
If $s_{k+1}$ was removed during the PhD algorithm, $k+1 < n$ and it must be that $s_{k+1}$ participated in the market $(M_{k+1}^{(\tau)}, S_{k+2}^{(\tau)})$ at some iteration $\tau$ when $s_{k+1}$ was removed.
Independently of which side proposes, this means that $s_{k+2}$ was already removed before or matched to a better $\tilde{s}_{k+1} >_{s_{k+2}} s_{k+1}$.
Suppose that $s_{k+2}$ is not removed in the PhD algorithm.
By Lemma~\ref{lem:matchedPersonsMatchCannotDecrease}, $s_{k+2}$ is matched in $\mu_{k+1}$ to $\mu_{k+1}(s_{k+2}) \geq_{s_{k+2}} \tilde{s}_{k+1} >_{s_{k+2}} s_{k+1}$.
Hence $(s_{k+1}, s_{k+2}) \notin \mu$ and it also does not block $\mu_{k+1}$ (since this would require $\mu_{k+1}(s_{k+2}) \leq_{s_{k+2}} s_{k+1}$). So $(s_1, \dots, s_n)$ cannot block $\mu$.
If $s_{k+2}$ was removed, apply the reasoning to $s_{k+2}$ instead of $s_{k+1}$.
It must eventually stop because no student among $S_n$ is removed in the algorithm.
\end{proof}

\section{Numerical simulation details}\label{app:numericalSimulation}
We generate the dataset with the following generative model. There is a total of number of research fields. Each person uniformly chooses a number of research fields they are interested in and samples them uniformly without replacement from the available research fields.
Let $R(p)$ the multi-one-hot encoding of the research fields of person $p$. For example, if the available research fields are \verb|T1, T2, T3, T4| and person $p$ is interested in \verb|T2, T4|, then $R(p) = (0, 1, 0, 1)$.
Person $p$ computes the research overlap with person $\tilde{p}$ as $o_{p \tilde{p}} = R(p)^T R(\tilde{p})$ and assigns person $\tilde{p}$ a ranking $r_{p \tilde{p}} = o_{p \tilde{p}} + \sigma \cdot U$ for $U \sim \text{Unif}[0, 1]$. We add random jitter parametrized by $\sigma$ to make the preferences less structured. Indeed, without jitter, if person $p$ ranks person $\tilde{p}$ highly, person $\tilde{p}$ is also likely to rank person $p$ highly (under a suitable probabilistic model on the preferences as we have it here).
We describe the parameters and their values in Table~\ref{tab:simulationParams}. With these parameters, $r_{p \tilde{p}} \in [-10, 10]$, so the jitter $\sigma = 3.4$ has an influence.

\begin{table}[h!]
\begin{center}
\begin{tabular}{ l l l }
Parameter name & Description & Value \\ 
\hline
\hline
\verb|nb_advisors| & Number of advisors & 350 \\
\verb|nb_students| & Number of students & 620 \\
\verb|nb_coadvisors| & Number of co-advisors & 500 \\
\hline
\hline
\verb|total_nb_fields| & Total number of research fields & 30 \\  
\verb|min_choosable_fields| & Minimum number of research fields per person & 5 \\
\verb|max_choosable_fields| & Maximum number of research fields per person & 10 \\
\verb|random_jitter| ($\sigma$) & Random jitter added to preferences & 3.4 \\
\hline
\hline
\verb|advisor_min_nb_prefs| & Minimum number of students an advisor can rank & 10 \\
\verb|student_min_nb_prefs_adv| & Minimum number of advisors a student can rank & 5 \\
\verb|student_min_nb_prefs_coadv| & Minimum number of co-advisors a student can rank & 5 \\
\verb|coadvisor_min_nb_prefs| & Minimum number of students a co-advisor can rank & 5 \\
\hline
\verb|advisor_max_nb_prefs| & Maximum number of students an advisor can rank & 30 \\
\verb|student_max_nb_prefs_adv| & Maximum number of advisors a student can rank & 10 \\
\verb|student_max_nb_prefs_coadv| & Maximum number of co-advisors a student can rank & 10 \\
\verb|coadvisor_max_nb_prefs| & Maximum number of students a co-advisor can rank & 30 \\
\end{tabular}
\end{center}
\caption{Parameters used to generate the preferences.}
\label{tab:simulationParams}
\end{table}

We run the PhD algorithm with students proposing to advisors and co-advisors in both two-sided markets.
If the proposing side changes, the final matching changes, but the number of matches and blocking triples is the same, see the invariance property discussed in Theorem~\ref{thm:invariancePhDAlg}. Therefore, the shown figures do not change.
Since the PhD algorithm is deterministic given strict preferences, we average over 40 datasets (generated using 40 seeds). The number of iterations may differ between different datasets, so at each iteration, we average over those runs that reached that iteration.
We can have at most $\min(350, 620, 500) = 350$ number of matches and we see that we find a good number of matches (roughly 230).
If we change these parameters (to reasonable values), we observe the same qualitative behavior. In general, the number of iterations is lower with less persons.
The code is available in the supplementary material. The experiments can run on a normal computer on the CPU. 

\end{document}